\documentclass[draftclsnofoot,onecolumn]{IEEEtran}

\usepackage{amsthm,amsmath,amssymb,cite,citesort}
\usepackage{eucal}
\usepackage{xifthen}
\usepackage{mathtools}
\usepackage{enumerate}
\usepackage{microtype}
\usepackage{xspace}
\usepackage{bm}
\usepackage{fancyhdr}
\usepackage{lastpage}
\usepackage[center]{caption}
\usepackage{xcolor}
\allowdisplaybreaks

\newcommand{\mbs}[1]{\bm{#1}}
\newcommand{\vect}[1]{{\lowercase{\mbs{#1}}}}
\newcommand{\mat}[1]{{\uppercase{\mbs{#1}}}}

\renewcommand{\Bmatrix}[1]{\begin{bmatrix}#1\end{bmatrix}}
\newcommand{\Pmatrix}[1]{\begin{array}{ll}#1\end{array}}

\newcommand{\T}{{\scriptscriptstyle\mathsf{T}}}
\renewcommand{\H}{{\scriptscriptstyle\mathsf{H}}}

\newcommand{\cond}{\,\vert\,}
\renewcommand{\Re}[1][]{\ifthenelse{\isempty{#1}}{\operatorname{Re}}{\operatorname{Re}\left(#1\right)}}
\renewcommand{\Im}[1][]{\ifthenelse{\isempty{#1}}{\operatorname{Im}}{\operatorname{Im}\left(#1\right)}}

\newcommand{\hv}{\vect{h}}

\newcommand{\uv}{\vect{u}}
\newcommand{\vv}{\vect{v}}

\newcommand{\xv}{\vect{x}}
\newcommand{\yv}{\vect{y}}
\newcommand{\zv}{\vect{z}}

\newcommand{\etav}{\vect{\eta}}

\newcommand{\Lambdam}{\pmb{\Lambda}}
\newcommand{\Phim}{\pmb{\Phi}}

\newcommand{\Am}{\mat{a}}
\newcommand{\Bm}{\mat{b}}

\newcommand{\Gm}{\mat{g}}
\newcommand{\Hm}{\mat{h}}

\newcommand{\Km}{\mat{k}}

\newcommand{\Qm}{\mat{q}}

\newcommand{\Sm}{\mat{s}}
\newcommand{\Tm}{\mat{t}}
\newcommand{\Um}{\mat{u}}
\newcommand{\Vm}{\mat{V}}

\newcommand{\Xm}{\mat{x}}
\newcommand{\Ym}{\mat{y}}

\newcommand{\Cc}{{\mathcal C}}
\newcommand{\Dc}{{\mathcal D}}

\newcommand{\Hc}{{\mathcal H}}
\newcommand{\Ic}{{\mathcal I}}

\newcommand{\Uc}{{\mathcal U}}
\newcommand{\Wc}{{\mathcal W}}

\newcommand{\CC}{\mathbb{C}}

\newcommand{\Id}{\mat{\mathrm{I}}}

\newcommand{\CN}[1][]{\ifthenelse{\isempty{#1}}{\mathcal{N}_{\mathbb{C}}}{\mathcal{N}_{\mathbb{C}}\left(#1\right)}}

\renewcommand{\P}[1][]{\ifthenelse{\isempty{#1}}{\mathbb{P}}{\mathbb{P}\left(#1\right)}}
\newcommand{\E}[1][]{\ifthenelse{\isempty{#1}}{\mathbb{E}}{\mathbb{E}\left(#1\right)}}
\renewcommand{\det}[1][]{\ifthenelse{\isempty{#1}}{\mathrm{det}}{\mathrm{det}\left(#1\right)}}
\newcommand{\trace}[1][]{\ifthenelse{\isempty{#1}}{\mathrm{tr}}{\mathrm{tr}\left(#1\right)}}
\newcommand{\rank}[1][]{\ifthenelse{\isempty{#1}}{\mathrm{rank}}{\mathrm{rank}\left(#1\right)}}
\newcommand{\diag}[1][]{\ifthenelse{\isempty{#1}}{\mathrm{diag}}{\mathrm{diag}\left(#1\right)}}

\DeclarePairedDelimiter\Abs{\lvert}{\rvert^2}
\DeclarePairedDelimiter\norm{\lVert}{\rVert}
\DeclarePairedDelimiter\Norm{\lVert}{\rVert^2}
\DeclarePairedDelimiter\normf{\lVert}{\rVert_{\text{F}}}
\DeclarePairedDelimiter\Normf{\lVert}{\rVert^2_{\text{F}}}

\newcommand{\defeq}{\triangleq}
\newcommand{\eqdef}{\triangleq}

\newtheorem{remark}{Remark}[section]
\newtheorem{definition}{Definition}
\newtheorem{theorem}{Theorem}
\newtheorem{lemma}{Lemma}
\newtheorem{assumption}{Assumption}

\newcommand{\Es}{O_{\hat{S}}(1)}
\newcommand{\Eh}{O_{\hat{H}}(1)}
\newcommand{\Ex}{O_{X}}
\newcommand{\fdA}{f_{d\text{-}A}}
\newcommand{\fAd}{f_{A\text{-}d}}
\newcommand{\OdBC}{\mathcal{O}_d^{\mathrm{BC}}}
\newcommand{\IABC}{\mathcal{I}_A^{\mathrm{BC}}}
\newcommand{\IdBC}{\mathcal{I}_d^{\mathrm{BC}}}
\newcommand{\OdIC}{\mathcal{O}_d^{\mathrm{IC}}}
\newcommand{\IAIC}{\mathcal{I}_A^{\mathrm{IC}}}
\newcommand{\IdIC}{\mathcal{I}_d^{\mathrm{IC}}}
\newcommand{\AsetBC}{\mathcal{A_{\mathrm{BC}}}}
\newcommand{\AsetIC}{\mathcal{A_{\mathrm{IC}}}}
\newcommand{\DsetBC}{\mathcal{D_{\mathrm{BC}}}}
\newcommand{\DsetIC}{\mathcal{D_{\mathrm{IC}}}}
\newcommand{\Avec}{\Am}
\newcommand{\nk}{\min\{M_k,N_k\}}
\newcommand{\nj}{\min\{M_j,N_j\}}

\DeclareMathAlphabet{\mathcal}{OMS}{cmsy}{m}{n}

\usepackage{graphicx}
\usepackage{caption}
\usepackage{subcaption}
\usepackage{multirow}

\begin{document}
\title{The Degrees of Freedom Region of Temporally-Correlated MIMO Networks\\ with Delayed CSIT}
\author{
\authorblockN{Xinping Yi, \emph{Student Member, IEEE},  Sheng Yang, \emph{Member, IEEE}, \\ David Gesbert, \emph{Fellow, IEEE},   Mari Kobayashi, \emph{Member, IEEE}}
\thanks{X.~Yi and D. Gesbert are with the Mobile Communications Dept., EURECOM, 06560 Sophia Antipolis, France (email: \{xinping.yi, david.gesbert\}@eurecom.fr).}
\thanks{S.~Yang and M. Kobayashi are with the Telecommunications Dept., SUPELEC, 91190 Gif-sur-Yvette, France (e-mail: \{sheng.yang, mari.kobayashi\}@supelec.fr).}
\thanks{This work has been performed in the framework of the European research projects HARP and SHARING(FP7 ICT Objective 1.1) and HIATUS (FET-Open grant number: 265578), as well as the French ANR project FIREFLIES (ANR-10-INTB-0302).}
\thanks{A part of preliminary results of this work will be presented at ISIT 2013 \cite{Yi:2013ISIT}.}
}

\maketitle

\begin{abstract}
We consider the temporally-correlated Multiple-Input Multiple-Output (MIMO) broadcast channels (BC) and interference channels (IC) where the transmitter(s) has/have (i) {\em delayed} channel state information (CSI) obtained from a latency-prone feedback channel as well as (ii) imperfect {\em current} CSIT, obtained, e.g., from prediction on the basis of these past channel samples based on the temporal correlation. The degrees of freedom (DoF) regions for the two-user broadcast and interference MIMO networks with general antenna configuration under such conditions are fully characterized, as a function of the prediction quality indicator. Specifically, a simple unified framework is proposed, allowing to attain optimal DoF region for the general antenna configurations and current CSIT qualities. Such a framework builds upon {\em block-Markov encoding} with {\em interference quantization}, optimally combining the use of both outdated and instantaneous CSIT. A striking feature of our work is that, by varying the power allocation, \emph{every} point in the DoF region can be achieved with one single scheme. As a result, instead of checking the achievability of every corner point of the outer bound region, as typically done in the literature, we propose a new systematic way to prove the achievability. 
\end{abstract}

\section{Introduction}
While the capacity region of the Multiple-Input Multiple-Output (MIMO) broadcast channel (BC) was established in~\cite{MIMOBC2006}, the characterization of the capacity of Gaussian interference channel (IC) has been a long-standing open problem, even for the two-user single-antenna case. Recent progress sheds light on this problem from various perspectives, among which the authors in \cite{Jafar:2007IC} characterized the degrees of freedom (DoF) region, specializing to the large SNR regime, for the two-user MIMO IC. Such advance nevertheless suggests achievable schemes which require the full knowledge of channel state information (CSI) at both the transmitter and receiver sides. In practice, however, the acquisition of perfect CSI at the transmitters is difficult, if not impossible, especially for fast fading channels. The CSIT obtained via feedback suffers from delays, which renders the available CSIT feedback possibly fully obsolete (i.e., uncorrelated with the current true channel) under the fast fading channel and, seemingly non-exploitable in view of designing the spatial precoding.

Recently, this common accepted viewpoint in such scenario (referred to
as ``delayed CSIT'') was challenged by an interesting information
theoretic work \cite{MAT2012TIT}, in which a novel scheme (termed here as
``MAT alignment'') was proposed for the MISO BC to demonstrate that even
the completely outdated channel feedback is still useful. 
The precoders are designed achieving strictly better DoF than what is obtained without
any CSIT. The essential ingredient for the proposed scheme in
\cite{MAT2012TIT} lies in the use of a multi-slot protocol initiating
with the transmission of unprecoded information symbols to the user
terminals, followed by the \emph{analog} forwarding of the interference
created in the previous time slots. Most recently, generalizations under
the similar principle to the MIMO BC~\cite{Vaze:2011BC}, MIMO IC
\cite{Vaze:2012IC,Ghasemi:2011} settings, among others, were also
addressed, where the DoF regions are fully characterized with arbitrary
antenna configurations, again establishing DoF strictly beyond the ones obtained without CSIT~\cite{Huang:2012,Guo:2012noCSIT,Vaze:2012noCSIT} but below the ones with perfect CSIT \cite{MIMOBC2006,Jafar:2007IC}. Note that other recent interesting lines of work combining instantaneous and delayed forms of feedback were reported in \cite{Tandon:2012alter,Lee:2012}.

Albeit inspiring and fascinating from a conceptual point of view, these
works made an assumption that the channel is independent and identically
distributed (i.i.d.) across time, where the delayed CSIT bears no
correlation with the current channel realization. Hence, these results
pessimistically consider that no estimate for the  {\em current} channel
realization is producible to the transmitter. Owing to the finite
Doppler spread behavior of fading channels, it is however the case in
many real life situations that the past channel realizations can provide
\emph{some} information about the current one. Therefore a scenario
where the transmitter is endowed with delayed CSI in addition to some
(albeit imperfect) estimate of the current channel is of practical
relevance. Together with the delayed CSIT, the benefit of such imperfect
current CSIT was first exploited in \cite{Kobayashi:2012} for the MISO
BC whereby a novel transmission scheme was proposed which improves over
pure MAT alignment in  constructing precoders based on delayed
\emph{and} current CSIT estimate. The full characterization of the
optimal DoF for this hybrid CSIT was later reported in
\cite{Yang:2013,Gou:2012} for the MISO BC under this setting. The key idea
behind the schemes (termed hereafter as ``$\alpha$-MAT alignment'') in \cite{Kobayashi:2012,Yang:2013,Gou:2012} lies in
the modification of the MAT alignment such that i) the initial time slot
involves transmission of {\em precoded} symbols, which enables to reduce the power of mutual interferences and efficiently compress them; ii) the subsequent slots perform a digital transmission of quantized residual interferences together with new private symbols. Most recently, this philosophy was extended to the
MIMO networks (BC/IC) but only with symmetric antenna
configuration~\cite{Yi:2012}, as well as the $K$-user MISO case
\cite{Kerret:2013}. The generalization to the MISO BC with different
qualities of imperfect current CSIT was also studied in \cite{Chen:2012}. 
Remarkably, the authors of \cite{Chen:2012} showed that, in order to
balance the asymmetry of the CSIT quality, an infinite number of time slots
are required. As such, they extended the number of phases of the
$\alpha$-MAT alignment~\cite{Yang:2013} to infinity and varied the length of each phase, such that the
optimal DoF can be achieved. 

Unfortunately, extending the previous results to the MIMO case with arbitrary
antenna configurations is not a trivial step, even with the symmetric
current CSIT quality assumption. 
The main challenges are two-fold: (a) the extra spatial dimension at
the receiver side introduces a non-trivial tradeoff between useful
signal and mutual interference,  
and (b) the asymmetry of receive antenna
configurations results in the discrepancy of
common-message-decoding capability at different receivers. In
particular, the total number of streams that can be delivered as common
messages to both receivers is inevitably limited by the weaker one~(i.e., 
with fewer antennas). Such a
constraint prevents the system from achieving the optimal DoF of the
symmetric case by simply extending the previous schemes developed
in~\cite{Yi:2012}.

To counter these new challenges posed by the asymmetry of antenna
configurations, we develop a
new strategy that balances the discrepancy of common-message-decoding
capability at two receivers.  This allows us to fully characterize the DoF
region of both MIMO BC and MIMO IC, achieved by a unified and simple
scheme built upon {\em block-Markov encoding}. This
encoding concept was first introduced in \cite{block-Markov} for relay
channels and then became a standard tool for communication problems
involving interaction between nodes, such as feedback~(e.g.,
\cite{Ozarow,Suh:2011}) or user cooperation~(e.g., \cite{Willems:1985}).
It turns out that our problem with both delayed and instantaneous CSIT,
closely related to \cite{Ozarow}, can also be solved with this scheme.
As it will become clear later, in each block, the transmitter
superimposes the common information about the interference created in the past
block~(due to the imperfect instantaneous CSIT) on the new private
information~(thus creating new interference). At the receiver side,
backward decoding is employed, i.e., the decoding of each block
relies on the common side information from decoding of the future block.  
Due to the repetitive nature in each block, 
the proposed scheme can be uniquely characterized
with the parameters such as the power allocation and rate splitting of 
the superposition. 
Surprisingly enough, our block-Markov scheme can
also include the asymmetry of current CSIT with a simple parameter
change, and thus somehow balance the global asymmetry, i.e., antenna
asymmetry and CSIT asymmetry, in the system. 

Overall, our results allow to bridge between previously reported CSIT scenarios such as the pure delayed CSIT of \cite{MAT2012TIT,Vaze:2011BC} and the pure instantaneous CSIT scenarios \cite{MIMOBC2006,Jafar:2007IC} for the MIMO setting.  We tackle both the BC and IC configurations as we point out the tight connection between the DoF achieving transmission strategies in both settings. More specifically, we obtain the following key results:
\begin{itemize}
  \item We establish outer bounds on the DoF region for the two-user
    temporally-correlated MIMO BC and IC with perfect delayed and
    imperfect current CSIT, as a function of the current CSIT quality
    \emph{exponent}. {By introducing {a virtual received signal} for the IC,
    we nicely link the outer bound to that of the BC, arriving at a
    similar outer bound result for both cases. In addition to the
    genie-aided bounding techniques and the application of the extremal
    inequality in \cite{Yang:2013}, we develop a set of upper and lower
    bounds of ergodic capacity for MIMO channels, which is essential for
    the MIMO case but not extendible from MISO.}
  \item We propose a unified framework relying on block-Markov encoding
    uniquely parameterized by the rate splitting and power allocation, 
    by which the optimal
    DoF region confined by the outer bound are achievable with perfect
    delayed plus imperfect current CSIT. 
    For any antenna and current CSIT settings, every point in the outer
    bound region can be achieved with one single scheme.    
    For instance, the MIMO BC with $M=3$, $N_1=2$ and $N_2=1$ achieves
    optimal sum DoF $\frac{15+4\alpha_1+2\alpha_2}{7}$ when
    $3\alpha_1-2\alpha_2 \le 1$ and $\frac{7+2\alpha_2}{3}$ otherwise,
    where $\alpha_1$ and $\alpha_2$ are imperfect current CSIT qualities
    for both users' channels. This smoothly connects three special
    cases: the case with pure delayed CSIT \cite{Vaze:2011BC}
    ($\alpha_1=\alpha_2=0$), that with perfect current CSIT
    \cite{MIMOBC2006} ($\alpha_1=\alpha_2=1$), and that with perfect
    CSIT at Receiver~1 and delayed CSIT at Receiver~2 \cite{Tandon:2012ISWCS}
    $(\alpha_1=1, \alpha_2=0)$. 
  \item We propose a new systematic way to prove the achievability. In
    the proposed framework, the achievability region is defined by the
    decodability conditions in terms of the rate splitting and power
    allocation. The achievability is proved by mapping the
    outer bound region into a set of proper rate and power allocation and
    showing that this set lies within the decodability region. This contrasts
    with most existing proofs in the literature where the achievability
    of each corner point is checked. 
\end{itemize}
It is worth noting that our results embrace the previously reported
particular cases: the perfect CSIT setting
\cite{MIMOBC2006,Jafar:2007IC} (i.e., current CSIT of perfect quality),
the pure delayed CSIT setting \cite{Vaze:2012IC,Ghasemi:2011}~(i.e.,
current CSIT of zero quality), the partial/hybrid CSIT MIMO BC/IC case
\cite{RIA,Tandon:2012ISWCS,Vaze:2012Hybrid}~(with perfect CSIT at one receiver
and delayed CSIT at the other one), and the special MISO case
\cite{Kobayashi:2012,Yang:2013,Gou:2012} with $N_1=N_2=1$, symmetric
MIMO case \cite{Yi:2012}, as well as the MISO case with asymmetric
current CSIT qualities~\cite{Chen:2012}. 

The rest of the paper is organized as follows. We present the system
model and assumptions in the coming section, followed by the main
results on DoF region characterization for both MIMO BC and MIMO IC
cases in Section III. Some illustrative examples of the achievability schemes are provided in Section IV, followed  by the general formulation in Section V. In Section VI, we present the proofs of outer bounds. Finally, we conclude the paper in Section VII.

\textbf{Notation}: Matrices and vectors are represented as uppercase and
lowercase letters, respectively. Matrix transport, Hermitian transport, inverse,
rank, determinant and the Frobenius norm of a matrix are denoted by
$\Am^\T$, $\Am^\H$, $\Am^{-1}$, $\rank(\Am)$, $\det(\Am)$ and
$\normf{\Am}$, respectively. $\Am_{[k_1:k_2]}$ represents the submatrix
of $\Am$ from $k_1$-th row to $k_2$-th row when $k_1 \le k_2$.
$\hv^{\bot}$ is the normalized orthogonal component of any nonzero
vector $\hv$. We use $\Id_M$ to denote an $M\times M$ identity matrix
where the dimension is omitted whenever confusion is not probable.  The
approximation $f(P) \sim g(P)$ is in the sense of $\lim_{P \to \infty}
\frac{f(P)}{g(P)}=C$, where $C>0$ is a constant that does not scale as
$P$. Partial ordering of Hermitian matrices is denoted by $\succeq$ and
$\preceq$, i.e., $\Am \preceq \Bm$ means $\Bm-\Am$ is positive
semidefinite. Logarithms are in base 2. $(x)^+$ means $\max\{x,0\}$, and
$\mathbb{R}_{+}^n$ represents the set of $n$-tuples of non-negative real
numbers. $f = O(g)$ follows the standard Landau notation, i.e.,
$\lim \frac{f}{g} \le C$ where the limit depends on the context. With
some abuse of notation, we use $\Ex(g)$ to denote any $f$ such that
$\E_X (f) = O(\E_X(g))$. 
 Finally, the range or null spaces mentioned in this paper refer to the column spaces.

\section{System Model}
\subsection{Two-user MIMO Broadcast Channel}
For a two-user $(M,N_1,N_2)$ MIMO broadcast channel (BC) with $M$ antennas at the Transmitter and $N_i$ antennas at the Receiver~$i$, the discrete time signal model is given by
\begin{align}
  \yv_i(t) &=\Hm_i(t) \xv(t) + \zv_i(t)
\end{align}
for any time instant $t$, where $\Hm_i(t) \in \CC^{N_i\times M}$ is the
channel matrix for Receiver~$i$ ($i=1,2$); $\zv_i(t) \sim \CN[0,\Id_{N_i}]$ is
the normalized additive white Gaussian noise (AWGN) vector at the Receiver~$i$
and is independent of channel matrices and transmitted signals; the coded input signal $\xv(t) \in \CC^{M\times 1}$ is subject to the power
constraint $\E\bigl( \norm{\xv(t)}^2 \bigr)  \le P$, $\forall\,t$. 

\subsection{Two-user MIMO Interference Channel}

For a two-user $(M_1,M_2,N_1,N_2)$ MIMO interference channel (IC) with $M_i$ antennas at Transmitter~$i$ and $N_j$ antennas at Receiver~$j$, for $i,j=1,2$, the discrete time signal model is given by
\begin{align}
\yv_i(t) &= \Hm_{i1}(t) \xv_1(t) + \Hm_{i2}(t) \xv_2(t) + \zv_i(t)
\end{align}
for any time instant $t$, where $\Hm_{ji}(t) \in \CC^{N_j\times M_i}$
($i,j=1,2$) is the channel matrix between Transmitter~$i$ and Receiver~$j$; the coded
input signal $\xv_i(t) \in \CC^{M_i\times 1}$ is subject to the power
constraint $\E  \bigl( \norm{\xv_i(t)}^2 \bigr)  \le P$ for $i=1,2$, $\forall\,t$.  

In the rest of this paper, we refer to MIMO BC/IC as MIMO networks. For
notational brevity, we define  the ensemble of channel matrices, i.e.,
$\Hc(t) \defeq \{\Hm_1(t),\Hm_2(t)\}$ (resp.~$\Hc(t) \defeq
\{\Hm_{11}(t),\Hm_{21}(t),\Hm_{12}(t),\Hm_{22}(t)\}$), as the channel state for BC (resp.~IC). We further define $\Hc^k \defeq \{\Hc(t)\}_{t=1}^k$, and $\hat{\Hc}^k \defeq \{\hat{\Hc}(t)\}_{t=1}^k$, where $k=1,\cdots,n$. 

\subsection{Assumptions and Definitions}

\begin{assumption} [perfect delayed and imperfect current CSIT]
  \label{ass:CSI}
  At each time instant $t$, the transmitters know perfectly the delayed CSI $\Hc^{t-1}$, and obtains an imperfect estimate of the current CSI $\hat{\Hc}(t)$, which could, for instance, be produced by standard prediction based on past samples. The current CSIT estimate is modeled by
  \begin{align}
    \Hm_i(t) &= \hat{\Hm}_i(t) + \tilde{\Hm}_i(t) \\
    \Hm_{ij}(t) &= \hat{\Hm}_{ij}(t) + \tilde{\Hm}_{ij}(t)
  \end{align}
  for BC and IC, respectively, where estimation error $\tilde{\Hm}_i(t)$
  (resp.~$\tilde{\Hm}_{ij}(t)$) and the estimate
  $\hat{\Hm}_i(t)$~(resp.~$\hat{\Hm}_{ij}(t)$) are mutually independent,
  and each entry is assumed\footnote{
  We make the above assumption on the fading distribution to simplify the 
  the presentation, although the results can be applied to a broader class
  of distributions. } to be $\CN[0, \sigma_i^2]$ and
  $\CN[0, 1-\sigma_i^2]$. Further, we assume the following Markov chain
  \begin{align} \label{eq:markov-chain}
    ({\Hc}^{t-1},\hat{\Hc}^{t-1}) \to \hat{\Hc}(t) \to {\Hc}(t), \quad \forall t,
  \end{align}
  which means ${\Hc}(t)$ is independent of
  $({\Hc}^{t-1},\hat{\Hc}^{t-1})$ conditional on $\hat{\Hc}(t)$.
  Furthermore, at the end of the transmission, i.e., at time instant $n$, the receivers know perfectly ${\Hc}^{n}$ and $\hat{\Hc}^{n}$.
\end{assumption}
It readily follows that, for any fat submatrix $\Hm$ of $\Hm_i$ or
$\Hm_{ij}$, $\E(\log\det(\Hm \Hm^\H))>-\infty$ and
$\E(\log\det(\hat{\Hm} \hat{\Hm}^\H)) = O(1)$ when $\sigma_i^2$ goes
$0$.

The assumption on the CSI at the receiver (CSIR) is in accordance with previous works with delayed CSIT, and 
does not add any limitation over the assumption made in \cite{MAT2012TIT,Vaze:2011BC,Vaze:2012IC}. We point out that only local CSIT/CSIR (the channel links with which the node is connected) is really helpful {and leads to the same result}. Nevertheless, we assume the CSIT/CSIR to be available in a global fashion for presentation simplicity.

We are interested in characterizing the degrees of freedom (DoF) of the above system as functions of the quality of current CSIT, thus bridging between the two previously investigated extremes which are the perfect instantaneous CSIT and the fully outdated (no instantaneous) CSIT cases. As it was established in previous works~\cite{Kobayashi:2012,Yang:2013}, the imperfect current CSIT has beneficial value (in terms of improving the DoF) only if the CSIT estimation error decays at least exponentially with the SNR or faster. Thus it is reasonable to study the regime by which the CSIT quality can be parameterized by an indicator $\alpha_i \ge 0$ such that:
\begin{align}
  \alpha_i \defeq  -\lim_{P \to \infty} \frac {\log \sigma_i^2}{\log P} \label{eq:alpha-def}
\end{align}
if the limit exists.
This $\alpha_i$ indicates the quality of current CSIT at high SNR. While
$\alpha_i=0$ reflects the case with no current CSIT, $\alpha_i \to
\infty$ corresponds to that with perfect instantaneous CSIT. As a matter
of fact, when $\alpha_i \ge 1$, the quality of the imperfect current
CSIT is sufficient to avoid the DoF loss, and ZF precoding with this
imperfect CSIT is able to achieve the maximum DoF~\cite{Caire:2010}.
Therefore, we focus on the case $\alpha_i \in [0,1]$ henceforth. The
connections between the above model and the linear prediction over
existing time-correlated channel models with prescribed user mobility
are highlighted in \cite{Kobayashi:2012,Yang:2013}. According to the
definition of the estimated current CSIT, we have $\E
\bigl(\Abs{{\hv}_k^{\H}(t) \hat{\hv}_k^{\perp}(t)} \bigr) = \sigma_i^2 \sim P^{-\alpha_i}$, with ${\hv}_k^\H$ representing any row of channel matrices $\Hm_i(t)$ (resp.~$\Hm_{ij}(t)$), and $\hat{\hv}_k^\H$ being its corresponding estimate.

A rate pair $(R_1,R_2)$ is said to be {\em achievable} for the two-user MIMO networks with perfect delayed and imperfect current CSIT if there exists a $\left(2^{nR_1},2^{nR_2},n\right)$ code scheme consists of:
\begin{itemize}
  \item two message sets $\Wc_1 \defeq [1:2^{nR_1}]$ and $\Wc_2 \defeq [1:2^{nR_2}]$, from which two independent messages $W_1$ and $W_2$ intended respectively to the Receiver~1 and Receiver~2 are uniformly chosen;
  \item one encoding function for (each) transmitter: 
  \begin{align} \label{eq:enc-fun}
    \Pmatrix{
    \text{BC:} \quad \xv(t) = f_t
    \bigl(W_1,W_2,{\Hc}^{t-1},\hat{\Hc}^{t}\bigr)
    \\
    \text{IC:} \quad \xv_i(t) = f_{i,t}
    \bigl(W_i,{\Hc}^{t-1},\hat{\Hc}^{t}\bigr), ~ i=1,2;}
    \end{align}
  \item one decoding function at the corresponding receiver,
    \begin{align}
    \hat{W}_j &= g_{j} \bigl( \Ym_j^n,\Hc^{n},\hat{\Hc}^{n} \bigr),~j=1,2 \label{eq:dec-fun}
    \end{align}
    for the Receiver~$j$, where $\Ym_j^n \defeq \{\yv_j(t)\}_{t=1}^{n}$,
\end{itemize}
such that the average decoding error probability $P_{e}^{(n)}$, defined as
$
  P_{e}^{(n)} \defeq \P\bigl((W_1, W_2) \neq (\hat{W}_1,\hat{W}_2)
  \bigr) ,
$
vanishes as the code length $n$ tends to infinity. The capacity region $\Cc$ is defined as the set of all achievable rate pairs. Accordingly, the DoF region can be defined as follows:
\begin{definition} [the degrees of freedom region]
  The degrees of freedom (DoF) region for the two-user MIMO network is defined as
  \begin{align}
    \Dc &= \left\{ (d_1,d_2)\in \mathbb{R}_{+}^2 |~ \forall (w_1,w_2) \in \mathbb{R}_{+}^2, w_1d_1+w_2d_2 \le \limsup_{P \to \infty} \left( \sup_{(R_1,R_2) \in \Cc} \frac{w_1R_1+w_2R_2}{\log P}\right) \right\}.
  \end{align}
\end{definition}

\section{Main Results}
According to the assumptions and definitions in the previous section, the main results of this paper are stated as the following two theorems:
\begin{theorem}\label{theorem:BC}
  For the two-user $(M,N_1,N_2)$ MIMO BC with delayed and imperfect current CSIT, the optimal DoF region $\{(d_1,d_2)| (d_1,d_2) \in \mathbb{R}_{+}^2\}$ is characterized by
  \begin{subequations} \label{eq:outer-bound}
    \begin{align}
      d_1 &\le \min\{M,N_1\},\label{eq:bc-outer-bound-1}\\
      d_2 &\le \min\{M,N_2\},\label{eq:bc-outer-bound-2}\\
      d_1 + d_2 &\le \min\{M,N_1+N_2\},\label{eq:bc-outer-bound-3}\\
      \frac{d_1}{\min\{M,N_1\}} + \frac{d_2}{\min\{M,N_1+N_2\}} &\le 1+\frac{\min\{M,N_1+N_2\}-\min\{M,N_1\}}{\min\{M,N_1+N_2\}}\alpha_1, \label{eq:bc-outer-bound-4}\\
      \frac{d_1}{\min\{M,N_1+N_2\}} + \frac{d_2}{\min\{M,N_2\}} &\le 1+\frac{\min\{M,N_1+N_2\}-\min\{M,N_2\}}{\min\{M,N_1+N_2\}}\alpha_2, \label{eq:bc-outer-bound-5}
    \end{align}
  \end{subequations}
  where $\alpha_i \in [0,1]$ $(i=1,2)$ indicates the current CSIT quality exponent of Receiver~$i$'s channel.
\end{theorem}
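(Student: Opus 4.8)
The plan is to prove \eqref{eq:outer-bound} as a converse together with a matching achievability; the converse is the technical core and I describe it first. The CSIT-independent bounds \eqref{eq:bc-outer-bound-1}--\eqref{eq:bc-outer-bound-3} are obtained by standard single-letterization: with $\Omega^n\defeq(\Hc^n,\hat{\Hc}^n)$ the CSI available at decoding time, Fano gives $nR_i\le I(W_i;\yv_i^n\cond\Omega^n)+n\epsilon_n$; bounding $h(\yv_i(t)\cond\Omega^n)$ by the entropy of an $N_i$-dimensional Gaussian whose covariance has at most $\min\{M,N_i\}$ non-vanishing eigenvalues yields \eqref{eq:bc-outer-bound-1}--\eqref{eq:bc-outer-bound-2}, and handing $\yv_1^n$ to Receiver~2 to form a virtual $M\times(N_1+N_2)$ point-to-point link bounds the sum rate by $\min\{M,N_1+N_2\}\log P+O(1)$ per channel use, i.e.\ \eqref{eq:bc-outer-bound-3}. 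The genuinely new inequalities are \eqref{eq:bc-outer-bound-4}--\eqref{eq:bc-outer-bound-5}, and since these are symmetric under swapping the two user indices it suffices to prove \eqref{eq:bc-outer-bound-4}. Put $n_1\defeq\min\{M,N_1\}$ and $n_{12}\defeq\min\{M,N_1+N_2\}$. Enhancing Receiver~2 with $(W_1,\yv_1^n)$ and using $W_1\perp W_2$,
\begin{align}
  nR_1 &\le h(\yv_1^n\cond\Omega^n)-h(\yv_1^n\cond W_1,\Omega^n)+n\epsilon_n,\\
  nR_2 &\le h(\yv_1^n,\yv_2^n\cond W_1,\Omega^n)+n\epsilon_n,
\end{align}
where in the second line the subtracted term $h(\yv_1^n,\yv_2^n\cond W_1,W_2,\Omega^n)$ is the (nonnegative) noise entropy and has been dropped, and $h(\yv_1^n\cond\Omega^n)\le n\,n_1\log P+n\cdot O(1)$ as above. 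Forming the weighted combination $\tfrac{n_{12}}{n_1}\,nR_1+nR_2$ then reduces everything to the key lemma
\begin{equation}\label{eq:keylemma}
  h(\yv_1^n,\yv_2^n\cond W_1,\Omega^n)\;\le\;\frac{n_{12}}{n_1}\,h(\yv_1^n\cond W_1,\Omega^n)+n\,(n_{12}-n_1)\,\alpha_1\log P+n\cdot o(\log P),
\end{equation}
from which $\tfrac{n_{12}}{n_1}\,nR_1+nR_2\le n\,n_{12}\log P+n\,(n_{12}-n_1)\,\alpha_1\log P+n\cdot o(\log P)+n\cdot O(1)$; dividing by $n\log P$, sending $n\to\infty$ then $P\to\infty$, and finally dividing by $n_{12}$ gives exactly \eqref{eq:bc-outer-bound-4}.

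The remaining work is to prove \eqref{eq:keylemma}. For $\alpha_1=0$ this is the entropy-enhancement inequality underlying the pure-delayed-CSIT converses of \cite{MAT2012TIT,Vaze:2011BC}: conditioned on $(W_1,\Hc^{t-1},\hat{\Hc}^n)$ the input $\xv(t)$ is independent of $\Hm_1(t)$, so $\Hm_1(t)$ may be replaced by any ``virtual'' $N_1$-antenna channel with the same law, and averaging these replacements over a partition of the $n_{12}$ effective receive dimensions produces the factor $n_{12}/n_1$. For $\alpha_1>0$ the input depends on the estimate $\hat{\Hm}_1(t)$, so only the error part $\tilde{\Hm}_1(t)$---which by the Markov chain \eqref{eq:markov-chain} is still independent of $\xv(t)$ given the conditioning---can be freely permuted, while the ``known'' part $\hat{\Hm}_1(t)$ lets the transmitter push $W_2$'s footprint at Receiver~1 down to power $\sim P^{1-\alpha_1}$ yet keep it at full power on the $n_{12}-n_1$ extra dimensions, which is precisely the additive $(n_{12}-n_1)\alpha_1\log P$ slack. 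Making this rigorous uses the extremal inequality of \cite{Yang:2013} for the Gaussian covariance optimization against the small-covariance error channel, together with---the step that is not inherited from the MISO analysis---a matching pair of upper and lower bounds on the ergodic capacity $\E\log\det(\cdot)$ of the relevant (sub)MIMO channels, valid uniformly as $\sigma_1^2\to0$ (consistent with $\E(\log\det(\hat{\Hm}\hat{\Hm}^\H))=O(1)$ noted after Assumption~\ref{ass:CSI}), plus rank bookkeeping that separates the antenna regimes ($M\ge N_1+N_2$ versus $M<N_1+N_2$, and likewise $M\ge N_1$ versus $M<N_1$).

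For achievability, every point of \eqref{eq:outer-bound} is attained by a single block-Markov scheme parameterized only by the superposition rate split and the per-layer power allocation: in block $b$ the transmitter superposes the quantized residual interference generated in block $b-1$---forwarded as a common message and decodable under the $\min\{M,N_i\}$ and $\min\{M,N_1+N_2\}$ constraints---on fresh private symbols ZF-precoded from the current-CSIT estimate, so the fresh interference floor sits at power $\sim P^{1-\alpha_i}$, and backward decoding across blocks closes the recursion. One writes the induced decodability region explicitly in these parameters and then, as announced in the introduction, exhibits an explicit parameter assignment realizing each point of \eqref{eq:outer-bound} and verifies that it lies inside the decodability region, so that no corner-point enumeration is needed.

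The main obstacle is \eqref{eq:keylemma}: the new MIMO ergodic-capacity sandwich and the correct accounting of the $\alpha_1$-gain on the $n_{12}-n_1$ extra receive dimensions. In the MISO case ($N_1=N_2=1$) the outputs are scalar and the analogue of \eqref{eq:keylemma} is a short differential-entropy estimate; with vector outputs and arbitrary $(M,N_1,N_2)$ one must instead control the ranks of effective channel submatrices, treat the fat and tall regimes separately, and prove the ergodic-capacity bounds from scratch---which is exactly why the MISO argument does not transfer. On the achievability side the corresponding subtlety is that common messages are bottlenecked by the receiver with fewer antennas, so the block-Markov rate split must be tuned to balance this antenna asymmetry against the CSIT-quality asymmetry.
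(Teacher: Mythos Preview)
Your proposal is correct and takes essentially the same approach as the paper: the converse enhances Receiver~2 with $(W_1,\yv_1^n)$, single-letterizes, and bounds the resulting weighted entropy difference via the extremal inequality together with the new MIMO ergodic-capacity sandwich (your \eqref{eq:keylemma} is exactly the paper's Lemma~\ref{lemma:gcase} applied after single-letterization), while the achievability is the block-Markov scheme with interference quantization and backward decoding, proved by the explicit power-allocation map $\fdA$ rather than corner-point enumeration. The only cosmetic difference is that the paper states the key entropy inequality per time slot after conditioning on $\Uc(t)$, whereas you write it at the $n$-letter level; the content is identical.
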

\begin{proof}
The proof of achievability will be presented in Section IV showing some insight with toy examples, and in Section V for the general formulation. The converse proof will be given in Section VI focusing on \eqref{eq:bc-outer-bound-4} and \eqref{eq:bc-outer-bound-5}, because the first three bounds correspond to the upper bounds under perfect CSIT and thus hold trivially under delayed and imperfect current CSIT.
\end{proof}

\begin{remark}
This result yields a number of previous results as special cases:~the delayed CSIT case \cite{Vaze:2011BC} for $\alpha_1=\alpha_2=0$, where the sum DoF bound \eqref{eq:bc-outer-bound-3} is inactive; perfect CSIT case \cite{MIMOBC2006} for $\alpha_1=\alpha_2=1$, where the weighted sum DoF bounds \eqref{eq:bc-outer-bound-4} and \eqref{eq:bc-outer-bound-5} are inactive; partial CSIT (i.e., perfect CSIT for one channel and delayed CSIT for the other one) case \cite{Tandon:2012ISWCS} for $\alpha_1=1,\alpha_2=0$, where only \eqref{eq:bc-outer-bound-2} and \eqref{eq:bc-outer-bound-5} are active; delayed CSIT in MISO BC for $N_1=N_2=1$ \cite{Yang:2013,Gou:2012,Chen:2012}.
\end{remark}

Before presenting the optimal DoF region for MIMO IC, we specify two conditions.
\begin{definition} [Condition $C_k$]
Given $k \in \{1,2\}$,  condition $C_k$ holds, indicating the following inequality
\begin{align}
M_k \ge N_j, \quad M_j < N_k, \quad M_1+M_2 > N_1 +N_2
\end{align}
is true, $\forall~j \in \{1,2\},~j \ne k$. 
\end{definition}

\begin{remark} 
This definition that points out the existence of the corresponding outer bound, is different from that in \cite{Vaze:2012IC}, in which the condition implies the activation of the outer bounds.
\end{remark}

\begin{theorem}\label{theorem:IC}
  For the two-user $(M_1,M_2,N_1,N_2)$ MIMO IC with delayed and imperfect current CSIT, the optimal DoF region $\{(d_1,d_2)| (d_1,d_2) \in \mathbb{R}_{+}^2\}$ is characterized by
  \begin{subequations} \label{eq:outer-bound-ic}
    \begin{align}
      d_1 &\le \min\{M_1,N_1\},\\
      d_2 &\le \min\{M_2,N_2\}, \label{eq:ic-outer-bound-2} \\
      d_1 + d_2 &\le \min\{M_1+M_2,N_1+N_2,\max\{M_1,N_2\},\max\{M_2,N_1\}\}, \label{eq:ic-outer-bound-3} \\
      \frac{d_1}{\min\{M_2,N_1\}} + \frac{d_2}{\min\{M_2,N_1+N_2\}} &\le \frac{\min\{N_1,M_1+M_2\}}{\min\{M_2,N_1\}}+\frac{\min\{M_2,N_1+N_2\}-\min\{M_2,N_1\}}{\min\{M_2,N_1+N_2\}}\alpha_1, \label{eq:ic-outer-bound-4}\\
      \frac{d_1}{\min\{M_1,N_1+N_2\}} + \frac{d_2}{\min\{M_1,N_2\}} &\le \frac{\min\{N_2,M_1+M_2\}}{\min\{M_1,N_2\}}+\frac{\min\{M_1,N_1+N_2\}-\min\{M_1,N_2\}}{\min\{M_1,N_1+N_2\}}\alpha_2,  \label{eq:ic-outer-bound-5}\\
      d_1 + \frac{N_1+2N_2-M_2}{N_2}d_2 &\le N_1+N_2 + (N_1-M_2)\alpha_2, \quad \text{if $C_1$ holds}  \label{eq:ic-outer-bound-6} \\
      d_2 + \frac{N_2+2N_1-M_1}{N_1}d_1 &\le N_1+N_2 + (N_2-M_1)\alpha_1, \quad \text{if $C_2$ holds}   \label{eq:ic-outer-bound-7}
    \end{align}
  \end{subequations}
  where $\alpha_i \in [0,1]$ $(i=1,2)$ indicates the current CSIT quality exponent corresponds to Receiver~$i$.
\end{theorem}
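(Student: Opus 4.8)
The plan is to establish the converse; the achievability follows from the unified block-Markov scheme formulated in Sections IV--V, so here I focus on proving that the seven bounds \eqref{eq:ic-outer-bound-1}--\eqref{eq:ic-outer-bound-7} are all valid outer bounds. The first three bounds \eqref{eq:ic-outer-bound-1}--\eqref{eq:ic-outer-bound-3} are already known DoF outer bounds for the two-user MIMO IC with \emph{perfect} CSIT (they appear in \cite{Jafar:2007IC}), hence they hold a fortiori under the weaker delayed-plus-imperfect-current CSIT assumption; no extra work is needed. The substance is therefore in the remaining four bounds, which must exploit both the delayed nature and the quality exponent $\alpha_i$ of the current CSIT.

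First I would handle the weighted-sum bounds \eqref{eq:ic-outer-bound-4} and \eqref{eq:ic-outer-bound-5}. The strategy mirrors the BC converse for \eqref{eq:bc-outer-bound-4}--\eqref{eq:bc-outer-bound-5}: introduce, for the IC, a \emph{virtual received signal} that plays the role that the degraded-user signal plays in the BC argument, so that one receiver's observation becomes (up to noise and a bounded-capacity gap) a statistically enhanced version of a signal containing both messages. Concretely, for \eqref{eq:ic-outer-bound-4} one should give Receiver~1 a genie-aided signal built from $\xv_2$ seen through an appropriately dimensioned channel, so that after providing $W_2$ (or a suitable side-information signal) as a genie, Fano's inequality yields $nR_1 \le I(W_1;\Ym_1^n,\dots)$ and $nR_2 \le I(W_2;\dots)$ that can be recombined. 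Then one applies the same two ingredients used in \cite{Yang:2013}: (i) the extremal inequality, to bound the difference of two differential entropies of linear combinations of the input under a covariance constraint, and (ii) the new set of upper and lower bounds on the ergodic capacity of MIMO sub-channels announced in the introduction, which control terms of the form $\E(\log\det(\cdot))$ for fat submatrices of $\Hm_{ij}$ and $\hat\Hm_{ij}$ --- precisely the step where the MISO arguments of \cite{Yang:2013,Chen:2012} do not extend and the MIMO refinement is essential. The Markov chain \eqref{eq:markov-chain} is what lets the current-CSIT estimation error, of size $\sigma_i^2 \sim P^{-\alpha_i}$, enter only through the $\alpha_i$-weighted term on the right-hand side; the delayed CSIT $\Hc^{t-1}$ is available to the genie-aided bounding but contributes only $O(1)$.

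Next I would treat the two ``conditional'' bounds \eqref{eq:ic-outer-bound-6} and \eqref{eq:ic-outer-bound-7}. These are the genuinely IC-specific bounds, active only under Condition $C_1$ (resp.~$C_2$), i.e.~when the antenna configuration is such that one transmitter alone can, in effect, saturate a receiver while the other cannot. Here the idea is a more delicate genie argument: under $C_1$ one feeds Receiver~1 enough side information to decode both messages, which converts a piece of the argument into a BC-like bound with effective antenna counts governed by the $\min\{\cdot,\cdot\}$ and $\max\{\cdot,\cdot\}$ quantities appearing in \eqref{eq:ic-outer-bound-3}; the asymmetric coefficient $\frac{N_1+2N_2-M_2}{N_2}$ and the factor $(N_1-M_2)$ multiplying $\alpha_2$ should fall out of carefully counting how many dimensions of interference created by Transmitter~2 at Receiver~1 can be compressed given CSIT quality $\alpha_2$, combined again with the extremal inequality and the ergodic-MIMO-capacity bounds. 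I would verify consistency by checking the reductions noted in the remarks: $\alpha_1=\alpha_2=0$ must recover the pure delayed-CSIT IC bounds of \cite{Vaze:2012IC,Ghasemi:2011}, $\alpha_1=\alpha_2=1$ must collapse \eqref{eq:ic-outer-bound-4}--\eqref{eq:ic-outer-bound-7} into the perfect-CSIT region, and setting $M_1=M_2=N_1=N_2=1$ must reproduce the MISO/SISO results.

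The main obstacle I anticipate is the ergodic-capacity estimates for MIMO channels and their interaction with the genie construction in \eqref{eq:ic-outer-bound-6}--\eqref{eq:ic-outer-bound-7}: unlike the MISO case, where the ``interference'' seen at a receiver is a scalar and one can bound its entropy by a one-dimensional Gaussian, here one must track a full covariance matrix through the block, control $\E(\log\det(\Hm\Hm^\H))$ from both sides (bounded below away from $-\infty$, bounded above by $O(1)$ as $\sigma_i^2\to 0$, as noted after Assumption~\ref{ass:CSI}), and make sure the dimension counting in the genie matches the $\min/\max$ antenna expressions exactly so that the coefficients in front of $d_1$, $d_2$ and $\alpha_i$ come out as stated rather than merely up to a constant. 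Getting these constants sharp --- so that the converse meets the block-Markov achievability --- is the delicate part; everything else is an adaptation of the BC converse and of the techniques in \cite{Yang:2013}.
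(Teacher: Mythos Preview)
Your high-level plan matches the paper's approach: the first three bounds are inherited from perfect CSIT, and the last four are proved via genie-aided arguments combined with the extremal inequality and a new MIMO ergodic-capacity lemma (Lemma~\ref{lemma:gcase} in the paper). Two points, however, are stated inaccurately enough that following your plan literally would not produce the paper's proof.

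First, for bound \eqref{eq:ic-outer-bound-4} you have the genie direction reversed. The paper does \emph{not} give Receiver~1 a signal built from $\xv_2$ together with $W_2$; it gives Receiver~2 the message $W_1$ \emph{and} the full output $\Ym_1^n$. Once $W_1$ is known, each receiver can subtract the contribution of $\xv_1$, which is what produces the virtual received signal $\bar{\yv}_i(t)=\Hm_{i2}(t)\xv_2(t)+\zv_i(t)$ and the degraded structure $\Xm_2^n\to(\bar{\Ym}_1^n,\bar{\Ym}_2^n)\to\bar{\Ym}_1^n$. That degraded structure is precisely what licenses the extremal inequality step; your version (genie to Receiver~1, side information $W_2$) does not create the same degradedness and the argument would not go through as written.

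Second, for bound \eqref{eq:ic-outer-bound-6} the key device is not ``feeding Receiver~1 enough side information to decode both messages'' nor anything about compressing interference (compression is an achievability idea, not a converse tool). The paper instead performs a unitary rotation $\tilde{\yv}_1=\Um\yv_1$ so that the last $N_1-M_2$ components of $\tilde{\yv}_1$ are interference-free from $\xv_2$, splits $I(W_1;\tilde{\Ym}_1^n)$ via the chain rule, and bounds the interfered part using Lemma~6 of \cite{Vaze:2012IC}, which yields the $(M_2-d_2)\log P$ term. Only then is a genie signal $\tilde{\yv}_2=\Hm_{21}\xv_1+\zv_2$ appended to the interference-free part, after which the extremal inequality and Lemma~\ref{lemma:gcase} finish the job. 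This decomposition is what produces the asymmetric coefficient $\frac{N_1+2N_2-M_2}{N_2}$ and the factor $(N_1-M_2)$ in front of $\alpha_2$; your sketch does not contain it, and without it the dimension counting you worry about at the end cannot be made sharp.
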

\begin{proof}
The general formulation of achievability will be presented in Section V, and the converse will be given in Section VI. For the converse, the first three inequalities correspond to the outer bounds for the case of perfect CSIT, which should also hold for our setting. Hence, it is sufficient to prove the last four bounds. Due to the symmetry property of the bounds \eqref{eq:ic-outer-bound-4} and \eqref{eq:ic-outer-bound-5},   \eqref{eq:ic-outer-bound-6} and \eqref{eq:ic-outer-bound-7}, it is sufficient to prove the bounds \eqref{eq:ic-outer-bound-4} and \eqref{eq:ic-outer-bound-6}.
\end{proof}

\begin{remark}
 Some previous reported results can be regarded as the special cases of our results: delayed CSIT case \cite{Vaze:2012IC} for $\alpha_1=\alpha_2=0$; perfect CSIT case \cite{Jafar:2007IC} for $\alpha_1=\alpha_2=1$, where the weighted sum DoF bounds \eqref{eq:ic-outer-bound-4}-\eqref{eq:ic-outer-bound-7} are inactive; hybrid CSIT (i.e., perfect CSIT for one channel and delayed CSIT for the other one) case \cite{Vaze:2012Hybrid} for $\alpha_1=1,\alpha_2=0$, where  the bounds \eqref{eq:ic-outer-bound-5} and \eqref{eq:ic-outer-bound-6} are active.
\end{remark}

\section{Achievability: Toy Examples}

To introduce the main idea of our achievability scheme, we 
revisit  MAT \cite{MAT2012TIT} and $\alpha$-MAT alignment \cite{Kobayashi:2012,Yang:2013,Gou:2012} for the case of MISO BC in Section IV.A, followed by an alternative way built
on {\em block-Markov encoding and backward decoding} in Section IV.B, as well as some examples in Section IV.C and IV.D showing that block-Markov encoding allows us to balance the asymmetry both in current CSIT qualities and antenna configurations. 
Although MAT \cite{MAT2012TIT} and $\alpha$-MAT alignment \cite{Kobayashi:2012,Yang:2013,Gou:2012} appear to be conceptually different, these schemes boil down into a {\em single} block-Markov encoding scheme (of an infinite number of constant length blocks). In fact, both schemes can be represented exactly in the same manner with different parameters.

\subsection{MAT v.s. $\alpha$-MAT Revisit}
Let us take the simplest antenna configuration, i.e., $(2,1,1)$ BC, as
an example. Recall that both MAT and $\alpha$-MAT deliver symbol under
the same structure. Specifically, in the first phase (Phase I), two independent
messages $w_1$ and $w_2$ are encoded into two independent vectors
$\uv_1(w_1)$ and
$\uv_2(w_2)$ with different covariance matrices $\Qm_{1}\defeq\E[\uv_1 \uv_1^\H]$
and $\Qm_{2}\defeq\E[\uv_2 \uv_2^\H]$. The sum of these vectors are sent out, i.e.,
\begin{align}
\xv[1] = \uv_1 + \uv_2, \quad s.t.~\left\{\Pmatrix{\text{MAT:} & \quad
\Qm_{1} = \Qm_{2} = P \Id\\
\alpha\text{-MAT:} & \quad \Qm_{1} = P_1 \Phim_{\hat{h}_2} + P_2
\Phim_{\hat{h}_2^\bot}, ~ \Qm_{2} = P_1 \Phim_{\hat{h}_1} + P_2
\Phim_{\hat{h}_1^\bot}}\right.
\end{align}
where $P_1 \sim P^{1-\alpha}$, $P_2=P-P_1 \sim P$, $\forall\,\alpha>0$, and
$\Phim_{h} \defeq \frac{\hv \hv^\H}{\Norm{\hv}}$.  
Each receiver experiences some interference caused by the symbols
dedicated to the other receiver
\begin{align}
\left\{\Pmatrix{\eta_1 \defeq \hv_1^\H \uv_2, \\ \eta_2 \defeq \hv_2^\H
\uv_1}\right. \quad s.t.~\left\{\Pmatrix{\text{MAT:} &  \E[\Abs{\eta_i}]
\sim P \\ \alpha\text{-MAT:} &  \E[\Abs{\eta_i}] \sim P^{1-\alpha}} \right.
\end{align}
Then, the task of the second phase is to multicast the interferences
$(\eta_1, \eta_2)$ to \emph{both} receivers. The main difference between
the MAT and $\alpha$-MAT lies in the way in which the interferences are
sent. While the analog version of $\eta_k$ is sent in two slots with
MAT, the digitized version is sent with $\alpha$-MAT instead.  
Note that the covariance matrices $\Qm_{1}$ and $\Qm_{2}$, or
equivalently, the spatial precoding and power allocation, of
$\alpha$-MAT are such that the mutual interferences $(\eta_1, \eta_2)$
have a reduced power level $P^{1-\alpha}$. According to the rate-distortion
theorem~\cite{Cover_Thomas}, each interference $\eta_k$, $k=1,2$, can be compressed with 
a source codebook of size $P^{1-\alpha}$ or $(1-\alpha)\log P$ bits into
an index $l_k$, in such a way that the average distortion between $\eta_k$
and the source codeword $\hat{\eta}_k(l_k)$ is comparable to the AWGN
level~\cite{Yang:2013}. Then, the index $l_k$ is encoded
with a channel codebook into a codeword $\xv_c(l_k)\sim P \Id_2$ and sent as the 
common message to both receivers. Thanks to the reduced range of
$l_k$, there is still room to transmit private messages. The structure
of the two slots in the second phase (Phase II) is    
\begin{align}
\left\{\Pmatrix{\text{MAT:} & \xv[2] = \vv_k \eta_k,  
\\ \alpha\text{-MAT:}   & \xv[2] = \xv_c(l_k) +
\uv_{p1} +  \uv_{p2}} \right.
\end{align}
where $k=1,2$, $\vv_k$ is a randomly chosen vector; the covariances of
the private signals $\uv_{p1}$ and $\uv_{p2}$ are respectively
$\Qm_{u_{p1}} =
P^{\alpha} \Phim_{\hat{h}_2^\perp}$ and $\Qm_{u_{p2}} = P^{\alpha}
\Phim_{\hat{h}_1^\perp}$ in such a way that they are drown in the AWGN at
the unintended receivers without creating noticeable
interferences~(at high SNR). At receiver~$k$, the common messages
$l_1$ and $l_2$ are first decoded from the two slots in Phase~II, by
treating the private signal $\uv_{p1}$ or $\uv_{p2}$ as noise. The common
messages are then used to~1)~reconstruct $\eta_1$ and $\eta_2$ that will
be used with the received signal in Phase I to decode $w_k$ and recover
$2-\alpha$ DoF, and 2)~to
reconstruct $\xv_c(l_k)$ and remove it from the received signals in
Phase~II so as to decode the private messages and recover $2\alpha$
DoF~(in two slots). In the end, $2-\alpha+2\alpha = 2+\alpha$ of DoF per
user is achievable in three slots, yielding an average DoF of
$\frac{2+\alpha}{3}$ per user.  
The interested readers may refer to \cite{Yang:2013} for more details of $\alpha$-MAT alignment.

\subsection{An Alternative: Block-Markov Implementation}

In fact, both MAT and $\alpha$-MAT can be implemented in a block-Markov
fashion, the concept of which is shown in Fig.~\ref{fig:BME} for $\alpha=0$. 
The common message $\xv_c(l_{b-1})$ comes from the previous block $b-1$, and $\uv_k(w_{kb})$ is the new private message dedicated to Receiver~$k$ $(k=1,2)$.
Essentially, we
``squeeze'' the Phase~II of block $b-1$ and the Phase~I of
block $b$ into one single block, with proper power and rate scaling.

\begin{figure}[htb]
\centering
\includegraphics[width=0.8\columnwidth]{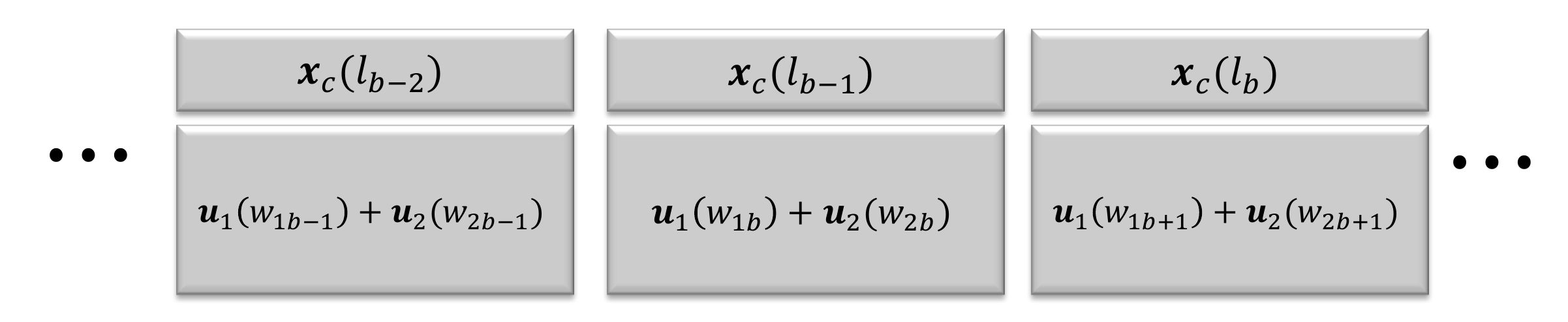}
\caption{Block-Markov Encoding.}
\label{fig:BME}
\end{figure}

The transmission consists of $B$ blocks of length~$n$. For simplicity of
demonstration, we set $n=1$. 
In block $b$, the
transmitter sends a mixture of two new private messages $w_{1b}$ and
$w_{2b}$ together with one common message $l_{b-1}$, for $b=1,\ldots,B$.
As it will become clear, the message $l_{b-1}$ is the
compression index of the mutual interferences experienced by the
receivers in the previous block~$b-1$. 
By encoding $w_{1b}$, $w_{2b}$, and $l_{b-1}$ into $\uv_1(w_{1b})$, $\uv_2(w_{2b})$, and $\xv_c(l_{b-1})$, respectively,
 with independent channel codebooks,  
 the transmitted signal is
written as
\begin{align} \label{eq:tx-signal-211}
  \xv[b] &= \xv_c(l_{b-1}) + \uv_1(w_{1b}) + \uv_2(w_{2b}), \quad
  b=1,\ldots,B
\end{align}%
where we set $l_{0} = 1$ to initiate the transmission and $w_{1B} =
w_{2B} = 1$ to end it. As before, the common message $\xv_c(l_{b-1})$ is with power $P$, whereas the precoding in $\uv_1$ and $\uv_2$ is with a reduced power, parameterized
by $A$, $A'$, with $0 \le A,A'\le 1$, i.e., 
\begin{align} \label{eq:cov-uv-211}
  \Qm_{1} &= P^{A} \Phim_{\hat{h}_2} + P^{A'}
  \Phim_{\hat{h}_2^\bot},\quad \Qm_{2} = P^{A}
  \Phim_{\hat{h}_1} + P^{A'} \Phim_{\hat{h}_1^\bot}
\end{align}%
where $A \defeq (A'-\alpha)^+$.
The mutual interferences are defined similarly and their powers are now
reduced
\begin{align}
  y_1[b] &= \underbrace{\hv_1^\H \xv_c(l_{b-1})}_{P} + \underbrace{\hv_1^\H
  \uv_1(w_{1b})}_{P^{A'}} + \underbrace{\hv_1^\H \uv_2(w_{2b})}_{\eta_{1b}\sim P^{A}}  \\
  y_2[b] &= \underbrace{\hv_2^\H \xv_c(l_{b-1})}_{P} +
  \underbrace{\hv_2^\H \uv_2(w_{2b})}_{P^{A'}} + \underbrace{\hv_2^\H
  \uv_1(w_{1b})}_{\eta_{2b}\sim P^{A}}  
\end{align}%
where we omit the block indices for the channel coefficients as well as
the AWGN for brevity. At the end of block~$b$, $(\eta_{1b},\eta_{2b})$
are compressed with a codebook of size $P^{2A}$ into an index
$l_b\in \left\{ 1,\ldots,P^{2A} \right\}$. The distortion
between $(\eta_{1b}, \eta_{2b})$ and $(\hat{\eta}_{1}(l_b),
\hat{\eta}_{2}(l_b))$ is at the noise level. 
 
At the end of $B$ blocks, Receiver~$k$ would like to retrieve $w_{k1},
\ldots, w_{k,B-1}$. Let us focus on Receiver~1, without loss of
generality. In this particular case, $l_{b-1}$ can
be decoded at the end of block $b$, by treating the private signals as
noise, i.e., with signal-to-interference-and-noise-ratio~(SINR)
level~$P^{1-A'}$, for $b=2,\ldots,B$. The correct decoding of $l_{b-1}$ is guaranteed if the
SINR can support the DoF of $2A$ for the common message, i.e., 
\begin{align}\label{eq:dec-eta-211}
   2A \le 1-A'. 
\end{align}%
Given that this condition is satisfied, $l_0, l_1, \ldots, l_{B-1}$ are
available to both receivers. Therefore, ${\eta_{1b}, \eta_{2b}}$,
$b=1,\ldots,B-1$, are known, up to the noise level. 
To decode $w_{1b}$, Receiver~1 uses $\eta_{1b}$, $\eta_{2b}$, and
$l_{b-1}$ to form the following $2\times 2$ MIMO system
\begin{align}
  \begin{bmatrix} y_1[b] - \hv_1^\H \xv_c(l_{b-1}) - \eta_{1b} \\ \eta_{2b}
  \end{bmatrix} &= \begin{bmatrix} \hv_1^\H \\ \hv_2^\H \end{bmatrix}
    \uv_1(w_{1b})
\end{align}%
where the equivalent channel matrix has rank~$2$ almost surely. 
This decoding strategy for the private message can be boiled down into the backward decoding, where the mutual interferences (cf.~$\eta_{1b}$, $\eta_{2b}$) decoded in the future block are utilized in current block as side information.
From the covariance matrix $\Qm_1$ of $\uv_1$ from \eqref{eq:cov-uv-211},
we deduce that the correct decoding of $w_{1b}$ is guaranteed if the
DoF $d_{1b}$ of $w_{1b}$ satisfies
\begin{align}\label{eq:dec-w-211}
  d_{1b} &\le A + A'. 
\end{align}%
Combining \eqref{eq:dec-eta-211} and \eqref{eq:dec-w-211}, it readily
follows that the optimal $A'$ should equalize \eqref{eq:dec-eta-211},
i.e., $A'^* = \frac{1+2\alpha}{3}$. Thus, we achieve $d_{1b} =
\frac{2+\alpha}{3}$. Due to the symmetry, $d_{2b}$ has the same value.
Finally, we have
\begin{align}
  d_k &= \frac{1}{B} \sum_{b=1}^{B-1} d_{kb} =
  \frac{B-1}{B}\frac{2+\alpha}{3}, \quad k=1,2
\end{align}%
which goes to $\frac{2+\alpha}{3}$ when $B\to\infty$. 

By now, we have shown that   
both MAT and $\alpha$-MAT schemes can be interpreted under a common framework of block-Markov encoding with power allocation parameters $(A, A')$ and that they only differ from the choice of these parameters. As we will show in the following subsections, the strength (or benefit) of the block-Markov encoding framework becomes evident in the asymmetric system setting, where the original $\alpha$-MAT alignment fails to achieve the optimal DoF in general.

\subsection{Asymmetry in Current CSIT Qualities}
Let us consider again the MISO BC case but assume now that the CSIT qualities of two channels are different, 
i.e., $\alpha_1\neq \alpha_2$,  
where $\alpha_k$
$(k=1,2)$ for Receiver~$k$. 
The signal model is in the exact same form as in
\eqref{eq:tx-signal-211} with a more general precoding, parameterized
by $A_k$, $A_k'$, with $0 \le A_k, A_k'\le1$:
\begin{align} \label{eq:cov-uv-211-2}
  \Qm_{1} &= P^{A_1} \Phim_{\hat{h}_2} + P^{A_1'}
  \Phim_{\hat{h}_2^\bot},\quad \Qm_{2} = P^{A_2}
  \Phim_{\hat{h}_1}
  + P^{A_2'} \Phim_{\hat{h}_1^\bot}
\end{align}%
where $A_k \defeq (A_k'-\alpha_j)^+$, $j \ne k \in \{1,2\}$.
Following the same footsteps as in the symmetric case, it is readily
shown that $\eta_{1b} \sim P^{A_2}$ and $\eta_{2b}\sim
P^{A_1}$ and that $(\eta_{1b}, \eta_{2b})$ can be compressed
up to the noise level with a source codebook of size
$P^{A_1 + A_2}$. The decoding at both
receivers is the same as before. To decode the common message
$l_{b-1}$ by treating the private signals as noise, since the SINR is
$P^{1-A_1'}$ at Receiver~1 and $P^{1-A_2'}$ at Receiver~2, the DoF of the
common message should satisfy 
\begin{align}
A_1 + A_2 &\le \min\{1 - A_1', 1 - A_2'\}. 
\end{align}
Using the common messages $l_{b}$ and $l_{b-1}$ as side information,
$w_{1b}$ and $w_{2b}$ can be decoded at the respective receivers if 
\begin{align}
  d_{1b} &\le A_1 + A_1' \quad \text{and} \quad
  d_{2b} \le A_2 + A_2'. 
\end{align}%
By carefully selecting the parameters $A_1'$ and $A_2'$, all corner points
of the DoF outer bound can be achieved, as shown in the following table. 
Note that the condition is to make sure the corner points exist, in which the corner point $(\alpha_2,1)$ always exists as long as $\alpha_1 \ge \alpha_2$.
\begin{table}[htbp]
 \caption{Parameter Setting for the $(2,1,1)$ BC Case $(\alpha_1 \ge \alpha_2)$}
  \centering
  \begin{tabular}{@{} c|c|c|c @{}} \hline  \hline
      Condition & $A_1'$ & $A_2'$ & Corner Point $(d_1, d_2)$  \\ \hline 
      \multirow{2}{*}{ $2\alpha_1 - \alpha_2 \le 1$}&
      $A_1'=\frac{1+\alpha_1+\alpha_2}{3}$ & $A_2'=\frac{1+\alpha_1+\alpha_2}{3} $ & $\left(\frac{2+2\alpha_1-\alpha_2}{3}, \frac{2-\alpha_1+2\alpha_2}{3}\right)$ \\ \cline{2-4}
      &  $A_1'=\frac{1+\alpha_2}{2}$ &   $A_2'=\alpha_1$ & $(1,\alpha_1)$ \\ \hline 
      \multirow{1}{*}{ $2\alpha_1 - \alpha_2 > 1$}   &
      $A_1'=\frac{1+\alpha_2}{2}$ & $A_2'=\frac{1+\alpha_2}{2} $ &
      $(1,\frac{1+\alpha_2}{2})$ \\ \cline{1-4}
      - &  $A_1'=\alpha_2$ & $A_2'=\frac{1+\alpha_1}{2}$ & $(\alpha_2,1)$ \\  \hline \hline
\end{tabular}
  \label{tab:bc_ex_211}
\end{table}

\subsection{Asymmetry in Antenna Configurations}

We use the $(4,3,2)$ MIMO BC case to show that the block-Markov encoding
can achieve the optimal performance in asymmetric antenna setting. 
Recall that, in the previous subsections, the backward decoding is performed to decode
the private messages, and that the common messages can be decoded {\em block by block}.
In this
case, nevertheless, we also need backward decoding to decode the common messages as well.  

The same transmission signal model \eqref{eq:tx-signal-211} is used here, with
the following precoding, parameterized by $A_k$ and $A_k'$, $(k=1,2)$,
$0 \le A_k \le A_k' \le1$:
\begin{align} \label{eq:cov-uv-432}
  \Qm_{1} &= P^{A_1} \Phim_{\hat{H}_2} + P^{A'_1}
  \Phim_{\hat{H}_2^\bot},\quad \Qm_{2} = P^{A_2}
  \Phim_{\hat{H}_1} + P^{A'_2} \Phim_{\hat{H}_1^\bot} 
\end{align}%
where $A_k$, $k \ne j \in \{1,2\}$, is defined as
\begin{align} \label{eq:Ak-def-ex}
A_k \defeq \left\{ \Pmatrix{(A'_k-\alpha_j)^+, & d_k \le 4-N_j\alpha_j, \\ \frac{d_k-(4-N_j)}{N_j}, & d_k > 4-N_j\alpha_j}   \right.
\end{align}
with $d_k \in \mathbb{R}_+$ being the achievable DoF associated with Receiver~$k$.
It is readily verified that $ A'_k-\alpha_j \le A_k \le A_k'$ is always true, such
that the created interference at intended Receiver~$j$ is of power level $A_k$, and the desired signal at Receiver~$k$ is of level $A_k'$.

We recall that the common message $\xv_c(l_{b-1})$ is transmitted with power $P$ and that the ranks of $\Phim_{\hat{H}_2}$,
$\Phim_{\hat{H}_2^\bot}$, $\Phim_{\hat{H}_1}$, and $\Phim_{\hat{H}_1^\bot}$
are respectively $2$, $2$, $3$, and $1$, almost surely. The received
signals are now vectors given by
\begin{align}
  \yv_1[b] &= \underbrace{\Hm_1 \xv_c(l_{b-1})}_{P \Id_3} + {\Hm_1
  \uv_1(w_{1b})} + \underbrace{\Hm_1 \uv_2(w_{2b})}_{\etav_{1b} \sim
  P^{A_2} \Id_3},  \\
  \yv_2[b] &= \underbrace{\Hm_2 \xv_c(l_{b-1})}_{P \Id_2} + {\Hm_2
  \uv_2(w_{2b})} + \underbrace{\Hm_2 \uv_1(w_{1b})}_{\etav_{2b} \sim
  P^{A_1} \Id_2}  .
\end{align}%
Following the same footsteps as in the single receive antenna case, it
is readily shown that 
$(\etav_{1b}, \etav_{2b})$ can be compressed up to the noise level with a
source codebook of size $P^{2A_1 + 3A_2}$. For convenience, let us define 
\begin{align}
  d_{\eta} \defeq {2 A_1 + 3 A_2}.
  \label{eq:deta-432}
\end{align}%

Unlike the MISO case where the common messages are decoded in each block
independently, backward decoding is required for both common and private
messages. As we will see later on, the common rate can be improved with
backward decoding in general. The decoding starts at block $B$, since
$w_{1B}$ and $w_{2B}$ are both known, the private signals can be removed
from the received signals $\yv_1[B]$ and $\yv_2[B]$. The common message
$l_{B-1}$ can be decoded at both receivers if $d_{\eta} \le 2$. At block
$b$, for $b=B-1,\ldots,2$, assuming $l_b$ is known perfectly from the
decoding of block~$b+1$, $\etav_{1b}$ and $\etav_{2b}$ can be
reconstructed up to the noise level. The following MIMO system can be
obtained 
\begin{align}
  \begin{bmatrix} \yv_1[b] - \etav_{1b} \\ \etav_{2b}
  \end{bmatrix} &= \begin{bmatrix} \Hm_1 \\ 0 \end{bmatrix}
    \xv_c(l_{b-1}) +
\begin{bmatrix} \Hm_1 \\ \Hm_2 \end{bmatrix}
    \uv_1(w_{1b}).
\end{align}%
Note that this is a multiple-access channel~(MAC) from which
$l_{b-1}$ and $w_{1b}$ can be correctly decoded if the rate pair lies
within the following region
\begin{align}
  d_{\eta} &\le 3 \label{eq:MAC-deta1-432}\\
  d_{1b} &\le 2 A_1 + 2 A_1' \label{eq:MAC-d1b-432}\\
  d_{\eta} + d_{1b} &\le 3 + 2 A_1, \label{eq:MAC-dsum1-432}
\end{align}
whose general proof is provided in Appendix~\ref{app:MAC}.
Let us set $d_{1b}$ to equalize \eqref{eq:MAC-d1b-432}. Then,
\eqref{eq:MAC-deta1-432} and \eqref{eq:MAC-dsum1-432} imply $d_{\eta}
\le 3-2 A'_1$. Similar analysis on Receiver~2 will lead to $d_{\eta}
\le 2- A'_2$, by setting $d_{2b} = A'_2 + 3 A_2$. Therefore,
from \eqref{eq:deta-432}, we obtain the following constraint
\begin{align}
  {2A_1 + 3A_2} \le \min\left\{ 3-2A'_1, 2-A'_2 \right\}   \label{eq:deta-432-constraint}
\end{align}%
to achieve any $(d_{1b}, d_{2b})$ such that
\begin{align}
  d_{1b} \le 2 A_1 + 2 A_1' \quad \text{and} \quad
  d_{2b} \le A'_2 + 3 A_2 . 
\end{align}%
By letting $B\to\infty$, $d_{1} = 2 A_1 + 2 A_1'$ and
$d_2 = A'_2 + 3 A_2$ can be achieved for any $A'_1, A'_2 \le
1$ given the definition of $(A_1,A_2)$ in \eqref{eq:Ak-def-ex}, as long as \eqref{eq:deta-432-constraint} is satisfied. We can show
that, by properly choosing $(A'_1, A'_2)$, all the corner points given
by the outer bound can be achieved. For example, by setting
$\alpha_1=\alpha_2=\alpha$, the values $(A'_1,A'_2)$ that achieve the
corner points are illustrated in Tab.~\ref{tab:bc_ex_432}.  Note that $(\frac{12}{5}, \frac{4}{5} + \alpha)$ exists when $\alpha \le \frac{4}{5}$, whereas $(3\alpha, 4-3\alpha)$ and $(4-2\alpha, 2\alpha)$ exist when $\alpha > \frac{4}{5}$.
\begin{table}[htbp]
 \caption{Parameter Setting for the $(4,3,2)$ BC Case with
 $\alpha_1=\alpha_2=\alpha$. }
  \centering
  \begin{tabular}{@{} c|c|c|c|c @{}} \hline  \hline
  Corner Point $(d_1, d_2)$ & Cond. & $(A'_1, A'_2)$ & $(A_1,A_2)$ & $d_{\eta}$  \\ \hline \hline      
  \multirow{2}{*}{$(3, \alpha)$} & $\alpha \le \frac{1}{2}$ & $(\frac{3+2\alpha}{4},\alpha)$ & $(\frac{3-2\alpha}{4},0)$ & $\frac{3-2\alpha}{2}$ \\ \cline{2-5}
  & $\alpha > \frac{1}{2}$ & $(1,\alpha)$ & $(\frac{1}{2},0)$ & 1 \\ \hline
  \multirow{2}{*}{$(2\alpha, 2)$} & $\alpha \le \frac{2}{3}$ & $(\alpha,\frac{2+3\alpha}{4})$ & $(0,\frac{2-\alpha}{4})$ & $\frac{6-3\alpha}{4}$  \\  \cline{2-5}
   & $\alpha > \frac{2}{3}$ & $(\alpha,1)$ & $(0,\frac{1}{3})$ & 1  \\  \hline
  $(\frac{12}{5}, \frac{4}{5} + \alpha)$ & $\alpha \le \frac{4}{5}$ & $(\frac{3}{5}+\frac{1}{2}\alpha,\frac{1}{5}+\alpha)$ & $(\frac{3}{5}-\frac{1}{2}\alpha,\frac{1}{5})$& $\frac{9}{5}-\alpha$  \\  \hline 
  $(3\alpha, 4-3\alpha)$ & $\alpha > \frac{4}{5} $ & $(1,1)$ & $(\frac{3\alpha-2}{2},1-\alpha)$ & 1  \\ \hline
  $(4-2\alpha, 2\alpha)$ & $\alpha > \frac{4}{5} $ & $(1,1)$ & $(1-\alpha,\frac{2\alpha-1}{3})$ & 1  \\ \hline
  \hline
  \end{tabular}
  \label{tab:bc_ex_432}
\end{table}

\section{Achievability: the General Formulation}
As aforementioned, the key ingredients of the achievability scheme consist of:
\begin{itemize}
\item {\em block-Markov encoding with a constant block length}: the
  fresh messages in the current block and the interference created in
  the past blocks are encoded together with the proper rate splitting and power scaling;
 \item {\em spatial precoding with imperfect current CSIT}: with proper
   power allocation on the symbols that delivered via the range and null
   spaces of the inaccurate current channel, the interference at unintended receiver can be reduced compared to that without any CSIT;
\item {\em interference quantization}: instead of forwarding the
  overhead interference directly in an analog way as done in pure delayed CSIT scenario, the reduced-power interferences are compressed first with reduced number of bits, and forwarded in a digital fashion with lower rate;
\item {\em backward decoding}:  the messages are decoded from the last block to the first one, where in each block the messages are decoded with the aid of side information provided by the blocks in the future. 
\end{itemize}

In the following, the general achievability scheme will be described in detail for BC and IC respectively.

\subsection{Broadcast Channels}
First of all, we notice that the region \eqref{eq:outer-bound} given in
Theorem~\ref{theorem:BC} does not depend on $M$~(resp. $N_k$) when
$M> N_1+N_2$~(resp. $N_k> M$). Therefore, it is sufficient to prove the
achievability for the case $M\le N_1+N_2$ and $N_k \le M$. And the
achievability for the other cases can be inferred by simply switching off 
the additional transmit/receive antennas. Thus, it yields
\begin{align} \label{eq:bc-ant-conf}
  M = \min\left\{ M, N_1+N_2 \right\}, \quad N_k = \min\left\{
  M, N_k \right\}, \ k=1,2. 
\end{align}

\subsection*{\underline{Block-Markov encoding}}

The block-Markov encoding has the same structure as before, given by 
\begin{align} \label{eq:tx-signal}
  \xv[b] &= \xv_c(l_{b-1}) + \uv_1(w_{1b}) + \uv_2(w_{2b}), \quad
  b=1,\ldots,B
\end{align}%
where we recall that we set $l_{0} = 1$ to initiate the transmission and $w_{1B} =
w_{2B} = 1$ to end it. 

\subsection*{\underline{Spatial precoding}}

Both $\uv_1, \uv_2\in\CC^{M\times 1}$ are precoded signals of $M$
streams, i.e.,   
\begin{align} \label{eq:cov-uv}
  \Qm_{1} &= P^{A_{1}} \Phim_{\hat{H}_2} + P^{A'_{1}}
  \Phim_{\hat{H}_2^\bot},\quad \Qm_{2} = P^{A_{2}}
  \Phim_{\hat{H}_1} + P^{A'_{2}} \Phim_{\hat{H}_1^\bot} 
\end{align}%
where the rank of $\Phim_{\hat{H}_k}$ is $N_k$ whereas
the rank of $\Phim_{\hat{H}_k^\bot}$ is $M-N_k$, $k=1,2$. 
In other words, for Receiver~$k$, $N_j$ streams are sent in the subspace of the
unintended Receiver~$j$ with power level $A_k$ and the other $M-N_j$
streams are sent in the null space of Receiver~$j$ with power level $A'_k$, 
where $(A_k, A_k')$ satisfies
\begin{align} \label{eq:def-Ak-bc}
  0\le A_k \le A_k' \le 1 \quad \text{and} \quad
  A_k \ge A'_k - \alpha_j
\end{align}%
for $j \neq k \in \left\{ 1,2 \right\}$. 
Note that the above condition guarantees that the interference
at Receiver~$j$ has power level $A_k$ and the desired signal at 
Receiver~$k$ is of power level $A_k'$.

\subsection*{\underline{Interference quantization}}

Recall that the common message $\xv_c(l_{b-1})$ is sent with power $P$. The received signals in block $b$ are given by 
\begin{align}
  \yv_1[b] &= \underbrace{\Hm_1 \xv_c(l_{b-1})}_{P \Id_{N_1}} + {\Hm_1
  \uv_1(w_{1b})} + \underbrace{\Hm_1 \uv_2(w_{2b})}_{\etav_{1b} \sim
  P^{A_2} \Id_{N_1}} , \\
  \yv_2[b] &= \underbrace{\Hm_2 \xv_c(l_{b-1})}_{P \Id_{N_2}} + {\Hm_2
  \uv_2(w_{2b})} + \underbrace{\Hm_2 \uv_1(w_{1b})}_{\etav_{2b} \sim
  P^{A_1} \Id_{N_2}}  .
\end{align}%
It is readily shown that 
$(\etav_{1b}, \etav_{2b})$ can be compressed up to the noise level with a
source codebook of size $P^{N_2 A_1 +
N_1 A_2}$ into an index $l_b$. For convenience, let us define 
\begin{align}
  d_{\eta_1} \defeq {N_1 A_2}, \ 
  d_{\eta_2} \defeq {N_2 A_1}, \ 
  \text{and } d_{\eta} \defeq d_{\eta_1} + d_{\eta_2}. 
  \label{eq:deta}
\end{align}%

\subsection*{\underline{Backward decoding}}

The decoding starts at block $B$, since
$w_{1B}$ and $w_{2B}$ are both known, the private signals can be removed
from the received signals $\yv_1[B]$ and $\yv_2[B]$. The common message
$l_{B-1}$ can be decoded at both receiver if $d_{\eta} \le \min\left\{
N_1, N_2 \right\}$. At block
$b$, assuming $l_b$ is known perfectly from the
decoding of block~$b+1$, $\etav_{1b}$ and $\etav_{2b}$ can be
reconstructed up to the noise level, for $b=B-1,\ldots,2$. The following MIMO system can be
obtained at Receiver~$k$, $k=1,2$
\begin{align} \label{eq:BC-MAC}
  \begin{bmatrix} \yv_k[b] - \etav_{kb} \\ \etav_{jb}
  \end{bmatrix} &= \begin{bmatrix} \Hm_k \\ 0 \end{bmatrix}
    \xv_c(l_{b-1}) +
\begin{bmatrix} \Hm_k \\ \Hm_j \end{bmatrix}
    \uv_k(w_{kb})
\end{align}%
for $j\ne k \in \left\{ 1, 2 \right\}$. 
Since the common message $l_{b-1}$ and the private message $w_{kb}$ are both desired by Receiver~$k$,
this system corresponds to a multiple-access channel~(MAC). As formally proved in Appendix~\ref{app:MAC}, Receiver~$k$
can decode correctly both messages if the following conditions are satisfied.
\begin{align}
  d_{\eta} &\le N_k \label{eq:MAC-detak}\\
  d_{kb} &\le N_j A_k + (M-N_j) A'_k  \label{eq:MAC-dkb}\\
  d_{\eta} + d_{kb} &\le N_k + N_j A_k. \label{eq:MAC-dsum}
\end{align}

Let us choose $d_{kb}$ to be equal to the right hand side of \eqref{eq:MAC-dkb} for $k=1, 2$ and $b=1, .., B-1$.
Then, the equality in \eqref{eq:MAC-dkb} together with \eqref{eq:deta}, \eqref{eq:MAC-detak}, \eqref{eq:MAC-dsum} implies when letting $B\to \infty$ the following lemma.

\begin{lemma}[decodability condition for BC] \label{lemma:decodability-BC}
  Let us define 
  \begin{align} 
    \AsetBC &\defeq \bigl\{ (A_1, A'_1, A_2, A'_2)\ \vert \ A_k, A'_k \in
    [0,1],\ A'_k - \alpha_j \le A_k \le A'_k, \quad \forall\, k\ne j \in
    \left\{ 1,2 \right\} \bigr\} \label{eq:Aset-BC} \\
    \DsetBC &\defeq \left\{ (d_1, d_2) \ \vert \ d_k \in [0,N_k], \quad
    \forall\,k \in \left\{ 1,2 \right\}  \right\}\\
    \shortintertext{and}
    \fAd \ : \ \AsetBC &\to \DsetBC \\
    (A_k, A'_k) &\mapsto d_k \defeq N_j A_k + (M-N_j) A'_k, \quad
    \forall k\ne j \in\{1,2\} . \label{eq:d1d2}
  \end{align}%
  Then $(d_1,d_2) = \fAd(\Avec)$, for some $\Am \in \AsetBC$, is achievable
  with the proposed scheme, if 
\begin{align}
  d_{\eta_1} + d_{1} &\le N_1, \label{eq:interp1} \\ 
  d_{\eta_2} + d_{2} &\le N_2. \label{eq:interp2}  
\end{align}%
where we recall $d_{\eta_1} \defeq N_1 A_2$ and $d_{\eta_2} \defeq N_2 A_1$. 
\end{lemma}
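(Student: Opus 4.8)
The plan is to collect the three MAC constraints \eqref{eq:MAC-detak}, \eqref{eq:MAC-dkb}, \eqref{eq:MAC-dsum} together with the common-message decodability condition at block $B$ ($d_\eta \le \min\{N_1,N_2\}$) and eliminate the auxiliary quantity $d_\eta$, showing that the surviving constraints are exactly \eqref{eq:interp1}--\eqref{eq:interp2}. First I would note that, since $d_{kb}$ is chosen to meet \eqref{eq:MAC-dkb} with equality, the definition \eqref{eq:d1d2} gives $d_k = N_j A_k + (M-N_j)A'_k$, so $d_k$ and $(d_{\eta_1},d_{\eta_2})$ are all determined by $\Avec$. The factor $\frac{B-1}{B}\to 1$ as $B\to\infty$ takes care of the loss from the terminal blocks, so it suffices to exhibit that the per-block rate tuple $(d_{1b},d_{2b})=(d_1,d_2)$ is feasible for every interior block whenever \eqref{eq:interp1}--\eqref{eq:interp2} hold.

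Next I would carry out the elimination. The sum constraint \eqref{eq:MAC-dsum} at Receiver~$k$ reads $d_\eta + d_k \le N_k + N_j A_k$; substituting $d_k = N_j A_k + (M-N_j)A'_k$ this becomes $d_\eta \le N_k - (M-N_j)A'_k$, i.e. $d_{\eta_1}+d_{\eta_2} \le N_k - (M-N_j)A'_k$ for $k=1,2$. On the other hand, $d_{\eta_j}+d_j = d_{\eta_j} + N_j A_j + (M-N_j)A'_j$; using $d_{\eta_j}=N_j A_{k}$ (with $k\ne j$) one has to check that requiring $d_{\eta_j}+d_j \le N_j$ is equivalent to the just-derived bound $d_\eta \le N_j -(M-N_j)A'_j$ after adding $d_{\eta_k}$ on both sides — indeed $d_{\eta_k}+d_{\eta_j}+d_j \le N_j$ is precisely $d_\eta \le N_j-(M-N_j)A'_j$ once we identify $d_{\eta_j}+d_j = N_j A_k + d_j$ and re-expand. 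Then \eqref{eq:MAC-detak} ($d_\eta\le N_k$) and the block-$B$ condition ($d_\eta\le\min\{N_1,N_2\}$) are implied by \eqref{eq:interp1}--\eqref{eq:interp2}, since $d_\eta \le d_{\eta_k}+d_k \le N_k$ follows from nonnegativity of the remaining terms; and \eqref{eq:MAC-dkb} holds with equality by the choice of $d_{kb}$. So all six decoding constraints reduce to the two stated inequalities, and conversely the two stated inequalities are among them.

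The main obstacle I anticipate is purely bookkeeping: making sure the MAC region of Appendix~\ref{app:MAC} is being applied to the correct effective channel in \eqref{eq:BC-MAC} — in particular that the receive dimension for the common message is $N_k$ (the full $\yv_k[b]$ after interference cancellation) while the private message $\uv_k$ sees the stacked $(N_k+N_j)\times M$ matrix, which is full column rank $M$ almost surely because $\hat H_1,\hat H_2$ are generic and $M\le N_1+N_2$. One should also check that the power levels feeding the MAC — $P$ for $\xv_c$, $P^{A'_k}$ in the $M-N_j$ directions and $P^{A_k}$ in the $N_j$ directions of $\uv_k$, with $P^{A_j}$ being the residual interference level at Receiver~$k$ — are exactly those producing the DoF coefficients $N_k$, $N_j A_k+(M-N_j)A'_k$, and $N_k+N_j A_k$ in \eqref{eq:MAC-detak}--\eqref{eq:MAC-dsum}; this is where the constraint $A_k \ge A'_k-\alpha_j$ from \eqref{eq:def-Ak-bc} is used, guaranteeing the interference floor is no higher than $P^{A_k}$. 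Once these identifications are in place, the lemma follows by the substitution above and the $B\to\infty$ limit; I would present it as a short chain of equivalences rather than a case analysis.
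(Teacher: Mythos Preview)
Your approach is correct and matches the paper's own treatment, which simply states that setting $d_{kb}$ equal to the right-hand side of \eqref{eq:MAC-dkb} and combining with \eqref{eq:deta}, \eqref{eq:MAC-detak}, \eqref{eq:MAC-dsum} yields the lemma as $B\to\infty$. Your proposal spells out the elimination of $d_\eta$ that the paper leaves implicit, and the key identity is exactly the one you need: $d_{\eta_k}+d_k = d_\eta + (M-N_j)A'_k$, so the sum constraint \eqref{eq:MAC-dsum} at Receiver~$k$ is \emph{identical} to $d_{\eta_k}+d_k\le N_k$, and \eqref{eq:MAC-detak} follows since $(M-N_j)A'_k\ge 0$.

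One bookkeeping slip to fix in your second paragraph: you write $d_j = N_jA_j + (M-N_j)A'_j$, but from \eqref{eq:d1d2} one has $d_j = N_kA_j + (M-N_k)A'_j$ with $k\ne j$; correspondingly the bound coming from Receiver~$j$ is $d_\eta \le N_j - (M-N_k)A'_j$, not $N_j-(M-N_j)A'_j$. Once the indices are straightened out the equivalence is immediate and no ``adding $d_{\eta_k}$ on both sides'' detour is needed: just observe that $d_{\eta_k}+d_k = N_kA_j + N_jA_k + (M-N_j)A'_k = d_\eta + (M-N_j)A'_k$, so \eqref{eq:MAC-dsum} and \eqref{eq:interp1}--\eqref{eq:interp2} coincide term for term.
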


\begin{remark}
  In the above lemma, $d_{\eta_k}$ can be interpreted as the degrees of freedom
  occupied by the interference 
  at Receiver~$k$. Therefore, \eqref{eq:interp1} and \eqref{eq:interp2}
  are clearly outer bounds for any transmission strategies, i.e., the
  sum of the dimension of the useful signal and the dimension of the
  interference signal at the receiver side cannot exceed the total
  dimension of the signal space. These bounds are in general not tight
  except for special cases such as the ``strong interference'' regime
  where interference can be decoded completely and removed or the ``weak
  interference'' regime where the interference can be treated as noise
  while the useful signal power dominates the received power.
  Remarkably, the proposed scheme achieves these outer bounds. This is
  due to two of the main ingredients of our scheme, namely, interference 
  quantization and the block-Markov encoding. The block-Markov encoding 
  places the digitized interference in the ``upper level'' of the signal
  space~(with full power) and thus ``pushes'' the channel into the ``strong interference''
  regime in which the digitized interference can be decoded thanks to 
  the structure brought by the interference quantization.  
\end{remark}

\begin{definition}[achievable region for BC]
  Let us define
  \begin{align} \label{eq:IABC}
    \IABC &\defeq  \left\{ (A_1, A'_1, A_2, A'_2) \in \AsetBC \left| \Pmatrix{
    (d_1, d_2) = \fAd(A_1, A'_1, A_2, A'_2), \\
    \frac{d_k}{N_k} \le 1 - A_j, \quad k\ne j\in\left\{ 1, 2 \right\}} \right. \right\}
  \end{align}%
and the achievable DoF region of the proposed scheme
  \begin{align} \label{eq:IdBC}
    \IdBC &\defeq \fAd(\IABC) \defeq 
    \left\{ (d_1, d_2) \left| \Pmatrix{
    (d_1, d_2) = \fAd(A_1, A'_1, A_2, A'_2), \\
    (A_1, A'_1, A_2, A'_2) \in \AsetBC, \\
    \frac{d_k}{N_k} \le 1 - A_j, \quad k\ne j\in\left\{ 1, 2 \right\}} \right. \right\}.
  \end{align}%
\end{definition}

\subsection*{\underline{Achievability analysis}}

In the following, we would like to show that any pair~$(d_1, d_2)$ in
the outer bound region defined by \eqref{eq:outer-bound}, hereafter
referred to as $\OdBC$, can be achieved by the proposed strategy.
Therefore, it is sufficient to show that $\OdBC \subseteq \IdBC$. The
main idea is as follows. If there exists a function 
\begin{align}
  \fdA \ &: \ \OdBC \to \AsetBC \\
  \shortintertext{such that}
  (d_1, d_2) &= \fAd(\fdA(d_1, d_2)),\quad \text{and}
  \label{eq:inverse-BC} \\
  \fdA(d_1, d_2) &\in \IABC, \label{eq:cond-dec-BC}
\end{align}%
then for every $(d_1,d_2)\in \OdBC$ we can use the power allocation
$(A_1,A'_1,A_2,A'_2) = \fdA(d_1, d_2)$ on the proposed scheme to achieve
it, i.e., 
\begin{align}
  \OdBC = \fAd(\fdA(\OdBC)) \subseteq \fAd(\IABC) = \IdBC
\end{align}%
from which the achievability is proved. 
Now, we define formally the power allocation function. 
\begin{definition}[power allocation for BC] \label{def:Ak-dk-bc}
Let us define $\fdA\ : \ \OdBC \to \AsetBC$:
\begin{align}
  (d_1, d_2) \mapsto (A_1, A'_1) \defeq f_1(d_1),\ (A_2, A'_2) \defeq
  f_2(d_2)
\end{align}%
where $f_k$, $j\ne k \in \{1,2\}$, is specified as below.
\begin{itemize}
  \item When $M=N_j$: $A'_k=A_k = \frac{d_k}{M}$;
  \item When $M>N_j$ and $d_k < M-N_j \alpha_j$: $A_k = (A'_k -
    \alpha_j)^+$, and thus
    \begin{align}
      A'_k &= \begin{cases} 
        \frac{d_k}{M-N_j}, & \text{if } d_k < (M-N_j) \alpha_j;  \\
        \frac{d_k + N_j\alpha_j}{M}, & \text{otherwise};
      \end{cases}
    \end{align}%
  \item When $M>N_j$ and $d_k \ge M-N_j \alpha_j$: $A'_k=1$, and thus
    $A_k=\frac{d_k - (M-N_j)}{N_j}$. 
\end{itemize}
\end{definition}
It is readily shown that, for any $(d_1,d_2) \in \OdBC$,  the resulting
power allocation always lies in $\AsetBC$ as defined in \eqref{eq:Aset-BC}
and that \eqref{eq:inverse-BC} is always satisfied. It remains to show
that \eqref{eq:cond-dec-BC} holds as well, i.e., the decodability
condition in \eqref{eq:IABC} is satisfied. To that end, for any $(d_1, d_2) \in \OdBC$, 
we first define $(A_1,A'_1,A_2,A_2') \defeq \fdA(d_1,d_2)$ which implies
$d_j =  N_k A_j + (M-N_k) A'_j$, $j\ne k \in \{1,2\}$. Applying this equality
on the constraints in the outer bound $\OdBC$ in
\eqref{eq:outer-bound}, we have  
\begin{align}
  \frac{d_k}{N_k} &\le \frac{M-(M-N_k)A'_j}{N_k} - A_j, \label{eq:tmp666}\\
   \frac{d_k}{N_k} &\le 1 - \left[ \frac{(M-N_k)(A'_j-\alpha_k) +
   N_k A_j }{M}  \right]^+ \label{eq:tmp667}
\end{align}%
for $k\ne j\in\left\{ 1, 2 \right\}$, where the first one is from the
sum rate constraint \eqref{eq:bc-outer-bound-3} whereas the second one
is from the rest of the constraints in \eqref{eq:outer-bound}. 
The final step is to show that either of \eqref{eq:tmp666} and
\eqref{eq:tmp667} implies the last constraint in \eqref{eq:IABC}:
\begin{itemize}
  \item When $M=N_k$, \eqref{eq:tmp667} is identical to the last
    constraint in \eqref{eq:IABC}; 
  \item When $M>N_k$ and $d_j \ge M - N_k \alpha_k$, we have $A'_j=1$
    according to the map $\fdA$ defined in Definition~\ref{def:Ak-dk-bc}. Hence, 
    \eqref{eq:tmp666} is identical to the last
    constraint in \eqref{eq:IABC}; 
  \item When $M > N_k$ and $d_j < M - N_k \alpha_k$, we have $A_j =
    (A'_j - \alpha_k)^+$ according to Definition~\ref{def:Ak-dk-bc}. Hence, 
    \begin{align} \label{eq:const-bc-1}
      \left[ \frac{(M-N_k)(A'_j-\alpha_k) +
      N_k A_j }{M}  \right]^+ &\ge A_j 
    \end{align}%
    with which \eqref{eq:tmp667} implies the last constraint in
    \eqref{eq:IABC}. 
\end{itemize}
By now, we have proved the achievability through the existence of a
proper power allocation function such that \eqref{eq:inverse-BC} and
\eqref{eq:cond-dec-BC} are satisfied for every pair $(d_1, d_2)$ in the
outer bound.

\subsection{Interference Channels}

The proposed scheme for MIMO IC is similar to that for BC, with the
differences that (a) the interferences can only be reconstructed at the
transmitter from which the symbols are sent, and (b) antenna configuration does
matter at both transmitters and receivers. 
Further, as with the broadcast channel, we notice that the region
\eqref{eq:outer-bound-ic} given in Theorem~\ref{theorem:IC} does not
depend on $M_k$~(resp. $N_k$) when $M_k> N_1+N_2$~(resp. $N_k>
M_1+M_2$), $k=1,2$. Therefore, it is sufficient to prove the achievability
for the case $M_k\le N_1+N_2$ and $N_k \le M_1 + M_2$, $k=1,2$, since the achievability for
the other cases can be inferred by simply switching off the additional
transmit/receive antennas. Thus, it yields
\begin{align} \label{eq:ic-ant-conf}
  M_k = \min\left\{ M_k, N_1+N_2 \right\}, \quad N_k = \min\left\{ N_k, M_1+M_2 \right\},
  \quad k=1,2. 
\end{align}
We also define for notational convenience
\begin{align}
  N'_1 &\defeq \min\left\{ N_1, M_2 \right\}, \quad
  N'_2 \defeq \min\left\{ N_2, M_1 \right\}.
\end{align}%

\subsection*{\underline{Block-Markov encoding}}

The block-Markov encoding is done independently at both transmitters
\begin{align} \label{eq:tx-signal-ic}
  \xv_1[b] &= \xv_{1c}(l_{1,b-1}) + \uv_1(w_{1b}), \\
  \xv_2[b] &= \xv_{2c}(l_{2,b-1}) + \uv_2(w_{2b}), \quad b=1,\ldots,B
\end{align}%
where we set $l_{1,0} = l_{2,0} = 1$ to initiate the transmission and $w_{1B} =
w_{2B} = 1$ to end it. 

\subsection*{\underline{Spatial precoding}}

The signal $\uv_k \in \CC^{M_k\times 1}$, $k=1,2$, is precoded signal of
$M_k$ streams, i.e.,   
\begin{align} \label{eq:cov-uv-ic}
  \Qm_{1} &= P^{A_{1}} \Phim_{\hat{H}_{21}} + P^{A'_1} \Phim_{\hat{H}^{\perp1}_{21}} + P^{A''_{1}} \Phim_{\hat{H}^{\perp2}_{21}},\\ 
  \Qm_{2} &= P^{A_{2}} \Phim_{\hat{H}_{12}} + P^{A'_2} \Phim_{\hat{H}^{\perp1}_{12}} + P^{A''_{2}} \Phim_{\hat{H}^{\perp2}_{12}} 
\end{align}%
where we use
$\hat{\Hm}^{\perp1}_{jk}$~(resp.~$\hat{\Hm}^{\perp2}_{jk}$) to denote any matrix spanning the $(M_k - N'_j -
\xi_k)$-dimensional~(resp.~$\xi_k$-dimensional) subspace of the null
space of $\hat{\Hm}_{jk}$ where $\xi_k$ will be specified later on. 
Therefore, the rank of $\Phim_{\hat{H}_{jk}}$ is $N'_j$ whereas
the rank of $\Phim_{\hat{H}_{jk}^{\perp1}}$ and
$\Phim_{\hat{H}_{jk}^{\perp2}}$ are respectively $M_k-N'_j-\xi_k$ and
$\xi_k$, $k=1,2$. The power levels $(A_k, A'_k, A''_k)$ satisfy
\begin{align} \label{eq:const-Ak-ic}
  A_k, A'_k, A''_k \in [0, 1], \ A_k \le A'_k, \ A''_k \le A'_k, \quad \text{and} \quad
  A_k \ge A'_k - \alpha_j
\end{align}%
for $j \ne k \in \left\{ 1, 2 \right\}$. 
Note that the above condition guarantees that the interference
at Receiver $j$ has power level $A_k$ and the desired signal at Receiver~$k$ at power level $A_k'$. 

\subsection*{\underline{Interference quantization}}

Recall that the common messages $\xv_{1c}(l_{1,b-1})$ and $\xv_{2c}(l_{2,b-1})$ are sent with power $P$.  The received signals in block $b$ are given by
\begin{align}
  \yv_1[b] &= \underbrace{\Hm_{11} \xv_{1c}(l_{1,b-1}) + \Hm_{12}
  \xv_{2c}(l_{2,b-1})}_{P \Id_{N_1}} + {\Hm_{11} \uv_1(w_{1b})} + \underbrace{\Hm_{12} \uv_2(w_{2b})}_{\etav_{1b} \sim P^{A_2} \Id_{N'_1}},  \\
  \yv_2[b] &= \underbrace{\Hm_{22} \xv_{2c}(l_{2,b-1}) + \Hm_{21}
  \xv_{1c}(l_{1,b-1})}_{P \Id_{N_2}} + {\Hm_{22} \uv_2(w_{2b})} + \underbrace{\Hm_{21} \uv_1(w_{1b})}_{\etav_{2b} \sim P^{A_1} \Id_{N'_2}} . 
\end{align}%
It is readily shown that $\etav_{1b}$ and $\etav_{2b}$ can be compressed
\emph{separately} up to the noise level with two independent source
codebooks of size $P^{N'_1 A_2}$ and $P^{N'_2 A_1}$, into indices
$l_{2,b}$ and $l_{1,b}$, respectively. For convenience, let us define 
\begin{align}
  d_{\eta_1} \defeq {N'_1 A_2}, \ 
  d_{\eta_2} \defeq {N'_2 A_1}, \ 
  \text{and } d_{\eta} \defeq d_{\eta_1} + d_{\eta_2}. 
  \label{eq:deta-ic}
\end{align}%

\subsection*{\underline{Backward decoding}}

The decoding starts at block $B$, since
$w_{1B}$ and $w_{2B}$ are both known, the private signals can be removed
from the received signals $\yv_1[B]$ and $\yv_2[B]$. The common messages
$l_{1,B-1}$ and $l_{2,B-1}$ can be decoded at both receivers if 
\begin{align}
  d_{\eta_k} &\le \min\left\{ M_j, N_1, N_2 \right\}, \\
  d_{\eta_1} + d_{\eta_2} &\le \min\left\{ N_1, N_2 \right\}, 
\end{align}%
i.e., the common rate pair should lie within the intersection of MAC
regions at both receivers for the common messages.  At block $b$,
assuming both $l_{1,b}$ and $l_{2,b}$ are known perfectly from the
decoding of block~$b+1$, $\etav_{1b}$ and $\etav_{2b}$ can be
reconstructed up to the noise level, for $b=B-1,\ldots,2$. The following MIMO system can be
obtained at Receiver~$k$
\begin{align} \label{eq:IC-MAC}
  \begin{bmatrix} \yv_k[b] - \etav_{kb} \\ \etav_{jb}
  \end{bmatrix} &= 
  \begin{bmatrix} \Hm_{kk} \\ 0 \end{bmatrix} \xv_{kc}(l_{k,b-1}) +
    \begin{bmatrix} \Hm_{kj} \\ 0 \end{bmatrix} \xv_{jc}(l_{j,b-1}) +
      \begin{bmatrix} \Hm_{kk} \\ \Hm_{jk} \end{bmatrix}
    \uv_k(w_{kb})
\end{align}%
for $j\ne k \in \left\{ 1, 2 \right\}$. 
Note that this system corresponds to a multiple-access channel from which the three
independent messages $l_{1,b-1}$, $l_{2,b-1}$, and $w_{kb}$ are to be decoded. 
It will be shown in the Appendix~\ref{app:MAC} that the three messages  
can be correctly decoded if the DoF
quadruple~$(d_{\eta_1}, d_{\eta_2}, d_{1b}, d_{2b})$ lies within the
following region
\begin{align}
   d_{kb} &\le N'_j A_k + (M_k-N'_j-\xi_k) A'_k + \xi_k A''_k \label{eq:MAC-dkb-ic}\\
   d_{\eta_k} &\le \min\left\{ M_j, N_1, N_2 \right\} \label{eq:MAC-detak-ic}\\
   d_{\eta_1} + d_{\eta_2} &\le \min\left\{ N_1, N_2 \right\} \label{eq:MAC-detasum-ic}\\
   d_{\eta_k} + d_{kb} &\le N'_k + \min\left\{ M_k - N'_j, N_k - N'_k
   \right\} A'_k + N'_j A_k \label{eq:MAC-detasum-ic2}\\
   d_{\eta_j} + d_{kb} &\le \min\left\{ M_k, N_k \right\} + N'_j A_k \label{eq:MAC-detasum-ic3} \\
   d_{\eta_1} + d_{\eta_2} + d_{kb} &\le N_k + N'_j A_k. \label{eq:MAC-allsum} 
\end{align}%
Now, let us fix
\begin{align}
   d_{kb} &\defeq N'_j A_k + (M_k-N'_j-\xi_k) A'_k + \xi_k A''_k  \label{eq:MAC-dkb-ic2}\\
   d_{\eta_j} &\defeq N'_j A_k \label{eq:MAC-detak-ic2}
\end{align}%
from which we can reduce the region defined by
\eqref{eq:MAC-dkb-ic}-\eqref{eq:MAC-allsum}. First, we remove
\eqref{eq:MAC-dkb-ic} that is implied by \eqref{eq:MAC-dkb-ic2}. Second, 
\eqref{eq:MAC-detak-ic} is not active as it is implied by
\eqref{eq:MAC-detak-ic2} and \eqref{eq:MAC-detasum-ic}. Third,
\eqref{eq:MAC-detasum-ic} is implied by \eqref{eq:MAC-allsum} and
\eqref{eq:MAC-dkb-ic2}. Finally,  
from \eqref{eq:MAC-detak-ic2}, \eqref{eq:MAC-detasum-ic3} is equivalent
to $d_{kb} \le \min\{M_k, N_k\}$ that is implied by
\eqref{eq:MAC-dkb-ic2}. Therefore, by letting $B\to\infty$, we have the following counterpart of
Lemma~\ref{lemma:decodability-BC} for interference channel.

\begin{lemma}[decodability condition for IC]\label{lemma:decodability-IC}
  Let us define 
  \begin{align} 
    \AsetIC &\defeq  \left\{ (A_1, A'_1, A''_1, A_2, A'_2, A''_2) \left| \Pmatrix{
    A_k, A'_k, A''_k \in [0,1] \\
    A'_k - \alpha_j \le A_k \le A_k', \ A''_k\le A'_k, \\ 
    \xi_k A''_k \le N_k' (1-A_j), \quad k\ne j\in\left\{ 1,2
    \right\} 
    } \right. \right\}
     \label{eq:Aset-IC} \\
    \DsetIC &\defeq \bigl\{ (d_1, d_2) \ \vert \ d_k \in [0,\nk], \quad
    \forall\,k \in \left\{ 1,2 \right\}  \bigr\}\\
    \shortintertext{and}
    \fAd \ : \ \AsetIC &\to \DsetIC \\
    (A_k, A'_k, A''_k) &\mapsto d_k \defeq N'_j A_k + (M_k-N'_j-\xi_k) A'_k + \xi_k
    A''_k, \quad \forall k\ne j \in\{1,2\} \label{eq:d1d2-ic}
  \end{align}%
  where
  \begin{align}
    \xi_k \defeq  \left\{ \Pmatrix{(M_k - N'_j)^+ - (N_k - N'_k)^+, & \text{if $C_k$ holds} \\ 0, & \text{otherwise.}} \right. \label{eq:xi}
  \end{align}
  Then $(d_1,d_2) = \fAd(\Avec)$, for some $\Am \in \AsetIC$, is achievable
  with the proposed scheme, if  
\begin{align}
  d_{\eta_1} + d_{1} &\le N_1, \label{eq:interp3} \\ 
  d_{\eta_2} + d_{2} &\le N_2. \label{eq:interp4}  
\end{align}%
where we recall $d_{\eta_1} \defeq N_1' A_2$ and $d_{\eta_2} \defeq N_2' A_1$. 
\end{lemma}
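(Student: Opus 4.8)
The plan is to follow the blueprint already set up just before the statement, mirroring the proof of Lemma~\ref{lemma:decodability-BC} for the BC. Once the rate assignment is fixed by \eqref{eq:MAC-dkb-ic2}--\eqref{eq:MAC-detak-ic2} and we let $B\to\infty$ (so $d_{kb}\to d_k$ and the factor $\frac{B-1}{B}$ vanishes), the scheme succeeds provided every decoding step does: the common-message MAC at the terminal block $B$ (requiring $d_{\eta_k}\le\min\{M_j,N_1,N_2\}$ and $d_{\eta_1}+d_{\eta_2}\le\min\{N_1,N_2\}$), and the three-message MAC \eqref{eq:MAC-dkb-ic}--\eqref{eq:MAC-allsum} at each interior block, whose achievability is furnished by Appendix~\ref{app:MAC}. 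As the discussion preceding the lemma records, with \eqref{eq:MAC-dkb-ic2}--\eqref{eq:MAC-detak-ic2} the constraints \eqref{eq:MAC-dkb-ic}, \eqref{eq:MAC-detak-ic}, \eqref{eq:MAC-detasum-ic}, \eqref{eq:MAC-detasum-ic3} are implied by the rest, and since \eqref{eq:MAC-detak-ic} and \eqref{eq:MAC-detasum-ic} are exactly the terminal-block conditions those get absorbed too. So the entire proof reduces to the two surviving constraints \eqref{eq:MAC-detasum-ic2} and \eqref{eq:MAC-allsum}, under the hypotheses $\Am\in\AsetIC$ and \eqref{eq:interp3}--\eqref{eq:interp4}. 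Constraint \eqref{eq:MAC-allsum} is immediate: substituting $d_{\eta_j}=N'_jA_k$ from \eqref{eq:MAC-detak-ic2} and $d_{kb}=d_k$ cancels the common term $N'_jA_k$ and leaves $d_{\eta_k}+d_k\le N_k$, which is precisely \eqref{eq:interp3} for $k=1$ and \eqref{eq:interp4} for $k=2$.

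For \eqref{eq:MAC-detasum-ic2} I would substitute $d_{kb}$ from \eqref{eq:MAC-dkb-ic2} and $d_{\eta_k}=N'_kA_j$, cancel the $N'_jA_k$ terms, and reduce the inequality to
\[
 N'_kA_j+(M_k-N'_j-\xi_k)A'_k+\xi_kA''_k \;\le\; N'_k+\min\{M_k-N'_j,\,N_k-N'_k\}\,A'_k ,
\]
then split into cases by the sign of $(M_k-N'_j)-(N_k-N'_k)$ and whether $C_k$ holds, keeping $N'_j=\min\{N_j,M_k\}$, $N'_k=\min\{N_k,M_j\}$ in mind. If $M_k-N'_j\le N_k-N'_k$, one first checks $C_k$ must fail (else $N'_j=N_j$, $N'_k=M_j$ and $(M_k-N'_j)-(N_k-N'_k)=M_1+M_2-N_1-N_2>0$, a contradiction), so $\xi_k=0$, the $A'_k$ terms cancel, and only $N'_kA_j\le N'_k$ is left --- true since $A_j\le1$. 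If $M_k-N'_j>N_k-N'_k$ and $C_k$ holds, then $\xi_k=(M_k-N'_j)-(N_k-N'_k)$, hence $M_k-N'_j-\xi_k=N_k-N'_k$, the $A'_k$ terms cancel, and the inequality becomes $\xi_kA''_k\le N'_k(1-A_j)$ --- exactly the defining inequality of $\AsetIC$ in \eqref{eq:Aset-IC}. The remaining case, $M_k-N'_j>N_k-N'_k$ with $C_k$ failing, is where I expect the real work: I would rule out $M_k\le N_j$ (it forces $M_k-N'_j=0$) and then $M_j<N_k$ (combined with $M_k>N_j$ it would validate all three parts of $C_k$), leaving $M_j\ge N_k$, whence $N'_k=N_k$, $N_k-N'_k=0$, $\xi_k=0$, and the inequality collapses to $N_kA_j+(M_k-N_j)A'_k\le N_k$; since here $d_k=N_jA_k+(M_k-N_j)A'_k$ and $d_{\eta_k}=N_kA_j$, this is $d_{\eta_k}+d_k-N_jA_k\le N_k$, which follows from \eqref{eq:interp3}--\eqref{eq:interp4} as $N_jA_k\ge0$.

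Collecting the pieces, with \eqref{eq:MAC-detasum-ic2} and \eqref{eq:MAC-allsum} verified all MAC constraints at every block hold, backward decoding succeeds from block $B$ down to block $1$, and $B\to\infty$ makes $(d_1,d_2)=\fAd(\Am)$ achievable, which is the claim; I would also record the sanity checks that $0\le\xi_k<M_k-N'_j$ whenever $C_k$ holds (there $\xi_k=M_1+M_2-N_1-N_2$ and $N_k-N'_k=N_k-M_j>0$), so the subspace split in \eqref{eq:cov-uv-ic} is well-defined, and that $d_k\le M_k$ readily while \eqref{eq:interp3}--\eqref{eq:interp4} give $d_k\le N_k$, so $\fAd$ indeed maps into $\DsetIC$. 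The main obstacle is precisely the last case above: correctly establishing that ``$C_k$ fails but $M_k-N'_j>N_k-N'_k$'' can only happen when $M_j\ge N_k$, which is exactly what makes the leftover inequality coincide with the achievability hypothesis \eqref{eq:interp3}--\eqref{eq:interp4} rather than with membership in $\AsetIC$.
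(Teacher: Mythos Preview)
Your proposal is correct and follows the same route as the paper: after fixing \eqref{eq:MAC-dkb-ic2}--\eqref{eq:MAC-detak-ic2} and pruning the redundant MAC constraints, reduce everything to \eqref{eq:MAC-detasum-ic2} and \eqref{eq:MAC-allsum}, identify the latter with \eqref{eq:interp3}--\eqref{eq:interp4}, and verify the former from the standing hypotheses.

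Your case analysis for \eqref{eq:MAC-detasum-ic2} is in fact more careful than the paper's. The paper asserts the identity $M_k-N'_j-\xi_k=\min\{M_k-N'_j,\,N_k-N'_k\}$ and then reduces \eqref{eq:MAC-detasum-ic2} directly to the constraint $\xi_k A''_k\le N'_k(1-A_j)$ in $\AsetIC$. That identity holds in your Cases~1 and~2 but \emph{fails} in your Case~3 (e.g.\ $M_1=M_2=3$, $N_1=N_2=2$, $k=1$: there $\xi_1=0$, $M_1-N'_2=1$, $N_1-N'_1=0$). You correctly observe that in Case~3 one must have $M_j\ge N_k$---because $M_k>N_j$, $M_j<N_k$ and the running hypothesis $M_k-N'_j>N_k-N'_k$ would together force $M_1+M_2>N_1+N_2$ and hence validate $C_k$---and then finish via \eqref{eq:interp3}--\eqref{eq:interp4} rather than via membership in $\AsetIC$. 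So your argument plugs a small gap in the paper's one-line verification while keeping the overall architecture identical. One cosmetic remark: in your parenthetical you might make explicit that the third defining inequality of $C_k$ comes from $M_k-N'_j>N_k-N'_k$ once $N'_j=N_j$ and $N'_k=M_j$; you clearly have this in mind, but spelling it out would strengthen the write-up.
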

\begin{proof}
  It has been shown that with \eqref{eq:MAC-detak-ic2} and \eqref{eq:d1d2-ic},
  only \eqref{eq:MAC-detasum-ic2} and \eqref{eq:MAC-allsum} are active.
  With $\xi_k$ defined in \eqref{eq:xi}, we can verify that $M_k - N'_j
  - \xi_k = \min\left\{ M_k - N'_j, N_k - N'_k \right\}$. Thus,
  from \eqref{eq:MAC-detak-ic2}, \eqref{eq:d1d2-ic},  \eqref{eq:xi}, and
  the last constraint in \eqref{eq:Aset-IC}, it follows that \eqref{eq:MAC-detasum-ic2} always holds. Finally, the only
  constraint that remains is \eqref{eq:MAC-allsum} that can be
  equivalently written as \eqref{eq:interp3} and \eqref{eq:interp4}. 
\end{proof}

\begin{definition}[achievable region for IC]
  Let us define
  \begin{align} \label{eq:IAIC}
    \IAIC &\defeq  \left\{ (A_1, A'_1, A''_1, A_2, A'_2, A''_2) \in
    \AsetIC \left| \Pmatrix{
    (d_1, d_2) = \fAd(A_1, A'_1, A''_1, A_2, A'_2, A''_2), \\
    \frac{d_k}{N'_k} \le \frac{N_k}{N'_k} - A_j, \quad k\ne j\in\left\{ 1, 2 \right\}} \right. \right\}
  \end{align}%
and the achievable DoF region of the proposed scheme
  \begin{align} \label{eq:IdIC}
    \IdIC &\defeq \fAd(\IAIC) \defeq 
    \left\{ (d_1, d_2) \left| \Pmatrix{
    (d_1, d_2) = \fAd(A_1, A'_1, A''_1, A_2, A'_2, A''_2), \\
    (A_1, A'_1, A''_1, A_2, A'_2, A''_2) \in \AsetIC, \\
    \frac{d_k}{N'_k} \le \frac{N_k}{N'_k} - A_j, \quad k\ne j\in\left\{ 1, 2 \right\}} \right. \right\}.
  \end{align}%
\end{definition}

\subsection*{\underline{Achievability analysis}}

The analysis is similar to the BC case, i.e., it is sufficient to find a
function $\fdA \,:\, \OdIC \to \AsetIC$ where $\OdIC$ denotes the outer
bound region defined by \eqref{eq:outer-bound-ic}, such that
\begin{align}
  (d_1, d_2) &= \fAd(\fdA(d_1, d_2)),\quad \text{and} \label{eq:inverse-IC} \\
  \fdA(d_1, d_2) &\in \IAIC. \label{eq:cond-dec-IC}
\end{align}

Now, we define formally the power allocation function. 
\begin{definition}[power allocation for IC] \label{def:Ak-dk-ic}
Let us define $\gamma_k$, $k\ne j \in \{1,2\}$, as
\begin{align}
  \gamma_k &\defeq \min\left\{1, \frac{M_j - d_j}{\xi_k}  \right\}. 
\end{align}%
Then, we define $\fdA\ : \ \OdIC \to \AsetIC$:
\begin{align}
  (d_1, d_2) \mapsto (A_1, A'_1, A''_1) \defeq f_1(d_1, d_2),\ (A_2,
  A'_2, A''_2) \defeq f_2(d_1, d_2)
\end{align}%
where $f_k$, $k\ne j \in \{1,2\}$, such that \eqref{eq:d1d2-ic} is satisfied,
and that 
\begin{itemize}
  \item when $M_k=N'_j$: $A''=A'_k=A_k = \frac{d_k}{M_k}$;
  \item when $M_k>N'_j$, $d_k < (M_k - N'_j) \gamma_k + N'_j
    (\gamma_k-\alpha_j)^+$:
    \begin{align}
      A_k &= (A'_k - \alpha_j)^+, \quad A'_k = A''_k < \gamma_k; \label{eq:Ak-ic-Ck}
    \end{align}%
  \item when $M_k>N'_j$, $d_k \ge (M_k - N'_j) \gamma_k + N'_j
    (\gamma_k-\alpha_j)^+$, and $\gamma_k < 1$:
    \begin{align}
      A_k &= (A'_k - \alpha_j)^+, \quad A'_k > A''_k = \gamma_k; \label{eq:Ak-ic-Ck-1}
    \end{align}%
  \item when $M_k>N'_j$, $d_k \ge (M_k - N'_j) \gamma_k + N'_j
    (\gamma_k-\alpha_j)^+$, and $\gamma_k = 1$:
    \begin{align}
      A'_k &= A''_k = 1. \label{eq:Ak-ic-Ck-2}
    \end{align}%
\end{itemize}
\end{definition}
First, one can verify, with some basic manipulations that, $\fdA(\OdIC) \subseteq \AsetIC$. Second, \eqref{eq:inverse-IC} is
satisfied by construction.  Finally, it remains to show
that \eqref{eq:cond-dec-IC} holds as well, i.e., the decodability
condition in \eqref{eq:IAIC} is satisfied. 
Since the region $\OdIC$ depends on whether the condition $C_k$ holds,
we prove the achievability accordingly. 
 
\subsubsection{Neither $C_1$ nor $C_2$ holds~($\xi_1 = \xi_2 = 0$)} 

For any $(d_1, d_2) \in \OdIC$, 
we can define $(A_1,A'_1,A''_1,A_2,A_2',A''_2) \defeq \fdA(d_1,d_2)$
which implies, in this case, 
 \begin{align}
   d_j &= N_k' A_j + (M_j-N_k')A'_j,\quad  j \ne k\in\{1,2\}. \label{eq:d1d2-ic2}
 \end{align}%
Applying this equality on the constraints in the outer bound $\OdIC$ in
\eqref{eq:outer-bound-ic}, we have 
\begin{align}
  \frac{d_k}{N'_k} &\le \frac{\min\left\{ \max\left\{ M_1,N_2 \right\},
  \max\left\{ M_2,N_1 \right\}\right\} - (M_j-N'_k)A'_j}{N'_k} - A_j, \label{eq:tmp777}\\
  \frac{d_k}{N'_k} &\le \frac{\min\{M_k, N_k\}}{N'_k} -
  \left[\frac{\min\{M_k, N_k\} - N_k}{N'_k} + \frac{(M_j-N'_k)(A'_j-\alpha_k) +
   N'_k A_j }{M_j}  \right]^+, \label{eq:tmp778}
\end{align}%
for $k\ne j\in\left\{ 1, 2 \right\}$, where the first one is from the
sum rate constraint \eqref{eq:ic-outer-bound-3} whereas the second one
is from the rest of the constraints in \eqref{eq:outer-bound-ic}. 
The final step is to show that either of \eqref{eq:tmp777} and
\eqref{eq:tmp778} implies the last constraint in \eqref{eq:IAIC}. 
\begin{itemize}
  \item When $M_j=N'_k$, \eqref{eq:tmp778} implies the last
    constraint in \eqref{eq:IAIC} because $\frac{\nk - N_k}{N_k} \le 0$; 
  \item When $M_j>N'_k$ and $d_j \ge M_j - N'_k \alpha_k$, we have $A'_j=1$
    according to the map $\fdA$ defined in
    Definition~\ref{def:Ak-dk-ic},
    since $\gamma_j = 1$. Hence, 
    the right hand side~(RHS) of \eqref{eq:tmp777} is not greater than
    $\frac{N_k}{N'_k} - A_j$, which implies the last
    constraint in \eqref{eq:IAIC}; 
  \item When $M_j > N'_k$ and $d_j < M_j - N'_k \alpha_k$, we have $A_j =
    (A'_j - \alpha_k)^+$ according to Definition~\ref{def:Ak-dk-ic} with $\gamma_j = 1$. Since
    $\frac{\nk - N_k}{N_k} \le 0$, we can show
    that 
    \begin{align} \label{eq:const-ic-1}
      \left[\frac{\nk - N_k}{N_k} + \frac{(M_j-N'_k)(A'_j-\alpha_k) +
      N'_k A_j }{M_j}  \right]^+ &\ge \frac{\nk
      - N_k}{N_k} + A_j
    \end{align}%
    with which \eqref{eq:tmp778} implies the last constraint in
    \eqref{eq:IAIC}. 
\end{itemize}

\subsubsection{ $C_k$ holds~($\xi_k>0$, $\xi_j=0$)}
In this case, it is readily shown, from \eqref{eq:d1d2-ic} and
\eqref{eq:xi}, that
\begin{align}
  d_k &= N_j A_k + (N_k - M_j)A'_k + \xi_k A''_k, \label{eq:dk-Ck}\\
  d_j &= M_j A_j. \label{eq:dj-Ck}
\end{align}
Applying the map $d_j = M_j A_j$ on \eqref{eq:ic-outer-bound-3} results in
\begin{align}
  \frac{d_k}{N'_k} &\le \frac{\nk}{N'_k} -
  A_j\label{eq:tmp888}
\end{align}%
that always implies $\frac{d_k}{N'_k} \le \frac{N_k}{N'_k} - A_j$. 
Due to the asymmetry, we also need to prove that $\frac{d_j}{N_j} \le
1 - A_k$. 
Therefore, the final step is to show that it can be implied by at least one of the constraints in \eqref{eq:outer-bound-ic}, together
with \eqref{eq:dk-Ck} and \eqref{eq:dj-Ck}.
\begin{itemize}
  \item When $d_k < (M_k-N_j) \gamma_k + N_j(\gamma_k - \alpha_j)^+$, we
    have $A'_k=A''_k < \gamma_k$ according to \eqref{eq:Ak-ic-Ck}. Therefore,
    $d_k = N_j A_k + (M_k-N_j)A'_k$, plugging
    which into \eqref{eq:ic-outer-bound-5}, we obtain 
    \begin{align} 
      \frac{d_j}{N_j} &\le \frac{\nj}{N_j} - \left[ \frac{\nj-N_j}{N_j}
      + \frac{(M_k-N_j)(A'_k-\alpha_j) + N_jA_k}{M_k}\right]^+
      \label{eq:tmp889} \\
      &\le \frac{\nj}{N_j} - \left[\frac{\nj-N_j}{N_j} + A_k \right] 
    \end{align}%
    where the $[\cdot]^+$ in \eqref{eq:tmp889} is from the single user
    bound \eqref{eq:ic-outer-bound-2}; the last inequality is due to 
    $A_k=(A'_k-\alpha_j)^+$ and $\frac{\nj-N_j}{N_j}\le0$.
  \item When $d_k \ge (M_k-N_j) \gamma_k+ N_j(\gamma_k - \alpha_j)^+$, we have
    $A'_k \ge A''_k=\gamma_k$ according to \eqref{eq:Ak-ic-Ck-1} and \eqref{eq:Ak-ic-Ck-2}. 
    \begin{itemize}
      \item If $\gamma_k < 1$, then $A''_k = \gamma_k = \frac{M_j -
        d_j}{\xi_k}$ and $d_k = (N_k-M_j)A'_k + M_j - d_j + N_j A_k$.
        Plugging the latter into \eqref{eq:ic-outer-bound-6}, we obtain 
        \begin{align}
          \frac{d_j}{N_j} &\le \frac{\nj}{N_j} - \left[ \frac{\nj-N_j}{N_j} + \frac{(N_k-M_j)(A'_k-\alpha_j) + N_j A_k}{N_k+N_j-M_j}  \right]^+ \label{eq:tmp890} \\
          &\le \frac{\nj}{N_j} - \left[ \frac{\nj-N_j}{N_j} + A_k \right]
        \end{align}%
        where the $[\cdot]^+$ in \eqref{eq:tmp890} is from the single user
        bound \eqref{eq:ic-outer-bound-2}; the last inequality is due to 
        $A_k=(A'_k-\alpha_j)^+$ and $\frac{\nj-N_j}{N_j}\le0$.
      \item If $\gamma_k = 1$, then $A'_k = A''_k = 1$ and $d_k = M_k - N_j +
        N_j A_k$. Plugging the latter into \eqref{eq:ic-outer-bound-3}, we
        obtain 
        \begin{align}
          \frac{d_j}{N_j} &\le \frac{\nk - M_k + N_j - N_j A_k}{N_j} \\
          &\le {1 -  A_k}. \label{eq:tmp1007}
        \end{align}%
    \end{itemize}
\end{itemize}
Thus, the last constraint in \eqref{eq:IAIC} is shown in all cases. By now, we have proved the achievability through the existence of a
proper power allocation function such that \eqref{eq:inverse-IC} and
\eqref{eq:cond-dec-IC} are satisfied for every pair $(d_1, d_2)$ in the
outer bound.

\section{Converse}
To obtain the outer bounds, the following ingredients are essential:
\begin{itemize}
  \item \emph{Genie-aided} bounding techniques by providing side information of one receiver to the other one~\cite{Vaze:2011BC,Vaze:2012IC};
  \item \emph{Extremal inequality} to bound the weighted difference of conditional differential entropies~\cite{Liu:2007,Extremal};
  \item \emph{Ergodic capacity upper and lower bounds} for MIMO channels with channel uncertainty.
\end{itemize}

In the following, we first present the proof of outer bound \eqref{eq:bc-outer-bound-4} for MIMO BC and \eqref{eq:ic-outer-bound-4} for MIMO IC, referred to in this section as $L_4$. It should be noticed that both bounds share the same structure. Then, we give the proof of bound \eqref{eq:ic-outer-bound-6} for the MIMO IC case, referred to in this section as $L_6$, when the condition $C_1$ holds.

\subsection{Proof of Bound $L_4$}
We first provide the outer bounds by employing the genie-aided
techniques for BC and IC, respectively, reaching a similar {formulation}
of the rate bounds. With the help of extremal inequalities, the weighted
sum rates are further bounded. Finally, the bounds in terms of
$(\alpha_1, \alpha_2)$ are obtained by deriving novel ergodic capacity bounds for MIMO channels with channel uncertainty. 

To obtain the outer bounds, we adopt a genie-aided upper bounding technique reminisced in \cite{Vaze:2011BC,Vaze:2012IC}, by providing Receiver~2 the side information of the Receiver~1's message $W_1$ as well as received signal $\Ym_1^n$. 
For notational brevity, we define {a virtual received signal as}
\begin{align}
\bar{\yv}_i(t) \defeq \left\{
\Pmatrix{
\Hm_{i}(t) \xv(t) + \zv_i(t) & \text{for BC}\\
\Hm_{i2}(t) \xv_2(t) + \zv_i(t) & \text{for IC}
} \right. \label{eq:def-eq}
\end{align}
and we also define $\Xm^n \defeq \{\xv(t)\}_{t=1}^n$, $\Xm_i^n \defeq \{\xv_i(t)\}_{t=1}^n$, $\Ym_i^k \defeq \{\yv_i(t)\}_{t=1}^k$, and $\bar{\Ym}_i^k \defeq \{\bar{\yv}_i(t)\}_{t=1}^k$. Denote also $n\epsilon_n \defeq 1+nR P_{e}^{(n)}$ where $\epsilon_n$ tends to zero as $n \to \infty$ by the assumption that $\lim_{n \to \infty}P_{e}^{(n)}=0$.

\subsubsection{Broadcast Channel}
The genie-aided model is a degraded BC $\Xm^n \to \left(\Ym_1^n, \Ym_2^n\right) \to \Ym_1^n$, and therefore we bound the achievable rates by applying Fano's inequality as
\begin{align}
n(R_1-\epsilon_n) &\le I(W_1;\Ym_1^n|\Hc^n,\hat{\Hc}^n)\\
&= \sum_{t=1}^n I(W_1; \yv_1(t)|\Hc^n,\hat{\Hc}^n,\Ym_1^{t-1}) \\
&= \sum_{t=1}^n \left(h(\yv_1(t)|\Hc^n,\hat{\Hc}^n,\Ym_1^{t-1})-h(\yv_1(t)|\Hc^n,\hat{\Hc}^n,\Ym_1^{t-1},W_1)\right) \\
&\le \sum_{t=1}^n \left(h({\yv}_1(t)|\Hc(t))-h({\yv}_1(t)|\Uc(t),\Hc(t)) \right)  \label{eq:BC-bound-1}\\
&\le n N_1' \log P - \sum_{t=1}^n h( \bar{\yv}_1(t)|\Uc(t),\Hc(t)) + n \cdot O(1)  \label{eq:BC-bound-2}
\\
n(R_2-\epsilon_n)
&\le I(W_2;\Ym_1^n,\Ym_2^n,W_1|\Hc^n,\hat{\Hc}^n) \\
&= I(W_2;\Ym_1^n,\Ym_2^n|W_1,\Hc^n,\hat{\Hc}^n)  \label{eq:BC-bound-3}\\
&= \sum_{t=1}^n I(W_2;\yv_1(t),\yv_2(t)|\Hc^n,\hat{\Hc}^n,\Ym_1^{t-1},\Ym_2^{t-1},W_1)   \\
&\le \sum_{t=1}^n  I(\xv(t);\yv_1(t),\yv_2(t)|\Hc^n,\hat{\Hc}^n,\Ym_1^{t-1},\Ym_2^{t-1},W_1)  \label{eq:BC-bound-4}\\
&= \sum_{t=1}^n \left(h(\yv_1(t),\yv_2(t)|\Hc^n,\hat{\Hc}^n,{\Ym}_1^{t-1},\Ym_2^{t-1},W_1) \right. \\
& \left. ~~~~~~~~~~~~~~~-h(\yv_1(t),\yv_2(t)|\xv(t),\Hc^n,\hat{\Hc}^n,\Ym_1^{t-1},\Ym_2^{t-1} ,W_1) \right)  \\
&\le\sum_{t=1}^n h(\yv_1(t),\yv_2(t)|\Hc^n,\hat{\Hc}^n,\Ym_1^{t-1},\Ym_2^{t-1},W_1)  \label{eq:BC-bound-5}\\
&=\sum_{t=1}^n h(\bar{\yv}_1(t),\bar{\yv}_2(t)|\Uc(t),\Hc(t))
\end{align}
where $N_1' \defeq \min\{M,N_1\}$ and $\Uc(t) \defeq \left\{
\bar{\Ym}_1^{t-1},\bar{\Ym}_2^{t-1},\Hc^{t-1} ,\hat{\Hc}^{t}, W_1
\right\}$  for BC; (\ref{eq:BC-bound-1}) is from (\ref{eq:def-eq}) and
because (a) conditioning reduces differential entropy, and (b)
$\{\bar{\yv}_1(t),\bar{\yv}_2(t)\}$ are independent of $\Hc_{t+1}^n$ and
$\hat{\Hc}_{t+1}^n$, given the past states and channel outputs;
(\ref{eq:BC-bound-2}) follows the fact that the rate of the
point-to-point $M \times N_1$ MIMO channel (i.e., between the Transmitter an Receiver~1) is
bounded by $ \min\{M,N_1\}\log P + O(1)$; (\ref{eq:BC-bound-3}) is due
to the independence between $W_1$ and $W_2$; (\ref{eq:BC-bound-4})
follows date processing inequality; and (\ref{eq:BC-bound-5}) is
obtained by noticing (a) translation does not change differential
entropy, (b) Gaussian noise terms are independent from instant to
instant, and are also independent of the channel matrices and the
transmitted signals, and (c) the differential entropy of Gaussian noise
with normalized variance is nonnegative.
\subsubsection{Interference Channel}
Given the message and corresponding channel states, part of the received signal is deterministic and therefore removable without mutual information loss. Hence, similarly to the BC case, we formulate a degraded channel model, i.e., $\Xm_2^n \to \left(\bar{\Ym}_1^n, \bar{\Ym}_2^n\right) \to \bar{\Ym}_1^n$. By applying Fano's inequality, the achievable rate of Receiver~1 and Receiver~2 can be bounded as
\begin{align}
n(R_1-\epsilon_n) &\le I(W_1;\Ym_1^{n}|\Hc^n,\hat{\Hc}^n) \\
&= I(W_1,W_2;\Ym_1^{n}|\Hc^n,\hat{\Hc}^n) - I(W_2;\Ym_1^{n}|W_1,\Hc^n,\hat{\Hc}^n)   \\
&\le n \tilde{N}_1 \log P - I(W_2;\Ym_1^{n}|W_1,\Hc^n,\hat{\Hc}^n) + n \cdot O(1)  \label{eq:IC-bound-1}\\
&= n \tilde{N}_1 \log P - h(\Ym_1^{n}|W_1,\Hc^n,\hat{\Hc}^n) + h(\Ym_1^{n}|W_1,W_2,\Hc^n,\hat{\Hc}^n) + n \cdot O(1) \\
&= n \tilde{N}_1 \log P - h(\Ym_1^{n}|W_1,\Hc^n,\hat{\Hc}^n) + n \cdot O(1)  \label{eq:IC-bound-2}\\
&= n \tilde{N}_1 \log P - h(\bar{\Ym}_1^{n}|\Hc^n,\hat{\Hc}^n) + n \cdot O(1)   \label{eq:IC-bound-3}\\
&\le n \tilde{N}_1 \log P - \sum_{t=1}^n h( \bar{\yv}_1(t)|\Hc^n,\hat{\Hc}^n,\bar{\Ym}_1^{t-1},\bar{\Ym}_2^{t-1}) + n \cdot O(1)  \label{eq:IC-bound-3.0}\\
&= n \tilde{N}_1 \log P - \sum_{t=1}^n h( \bar{\yv}_1(t)|\Uc(t),\Hc(t)) + n \cdot O(1)
\\
n(R_2-\epsilon_n)
&\le I(W_2;\Ym_1^{n},\Ym_2^{n},W_1|\Hc^n,\hat{\Hc}^n)  \\
&= I(W_2;\Ym_1^{n},\Ym_2^{n}|W_1,\Hc^n,\hat{\Hc}^n)  \\
&= I(W_2;\bar{\Ym}_1^{n},\bar{\Ym}_2^{n}|\Hc^n,\hat{\Hc}^n)  \label{eq:IC-bound-4}\\
&= \sum_{t=1}^n I(W_2;\bar{\yv}_1(t),\bar{\yv}_2(t)|\Hc^n,\hat{\Hc}^n,\bar{\Ym}_1^{t-1},\bar{\Ym}_2^{t-1})   \\
&\le \sum_{t=1}^n  I(\xv_2(t);\bar{\yv}_1(t),\bar{\yv}_2(t)|\Hc^n,\hat{\Hc}^n,\bar{\Ym}_1^{t-1},\bar{\Ym}_2^{t-1})  \\
&= \sum_{t=1}^n \left(h(\bar{\yv}_1(t),\bar{\yv}_2(t)|\Hc^n,\hat{\Hc}^n,\bar{\Ym}_1^{t-1},\bar{\Ym}_2^{t-1}) \right. \\
& \quad \quad \quad \quad\quad\quad\quad\quad - \left. h(\bar{\yv}_1(t),\bar{\yv}_2(t)|\xv_2(t),\Hc^n,\hat{\Hc}^n,\bar{\Ym}_1^{t-1},\bar{\Ym}_2^{t-1} ) \right)  \\
&\le \sum_{t=1}^n h(\bar{\yv}_1(t),\bar{\yv}_2(t)|\Hc^n,\hat{\Hc}^n,\bar{\Ym}_1^{t-1},\bar{\Ym}_2^{t-1})  \label{eq:IC-bound-5}\\
&=\sum_{t=1}^n h(\bar{\yv}_1(t),\bar{\yv}_2(t)|\Uc(t),\Hc(t))  \label{eq:IC-bound-6}
\end{align}
where we define $\Uc(t) \defeq \left\{
\bar{\Ym}_1^{t-1},\bar{\Ym}_2^{t-1},\Hc^{t-1} ,\hat{\Hc}^{t}\right\}$
for IC and $\tilde{N}_1 \defeq \min\{M_1+M_2,N_1\}$;
(\ref{eq:IC-bound-1}) follows the fact that the mutual information at
hand is upper bounded by the rate of the $(M_1+M_2) \times N_1$
point-to-point MIMO channel created by letting the two transmitters
cooperate perfectly, given by $\min\{M_1+M_2,N_1\} \log P + O(1)$;
(\ref{eq:IC-bound-2}) is due to the fact that (a) transmitted signal
$\Xm_i^n$  is a deterministic function of messages $W_i$, $\Hc^n$, and
$\hat{\Hc}^{n-1}$ as specified in \eqref{eq:enc-fun} for $i=1,2$, (b)
translation does change differential entropy, and (c) the
differential entropy of Gaussian noise with normalized variance is
nonnegative and finite; (\ref{eq:IC-bound-3}) and
(\ref{eq:IC-bound-4}) are obtained because translation preserves
differential entropy; (\ref{eq:IC-bound-3.0}) is because conditioning
reduces differential entropy; (\ref{eq:IC-bound-5}) is because (a)
translation does not change differential entropy, (b) Gaussian noise
terms are independent from instant to instant, and are also independent
of the channel matrices and the transmitted signals, and (c) the
differential entropy of Gaussian noise with normalized variance is
nonnegative and finite; and (\ref{eq:IC-bound-6}) is obtained due to the
independence $\{\bar{\yv}_1(t),\bar{\yv}_2(t)\}$ of $\Hc_{t+1}^n$ and
$\hat{\Hc}_{t+1}^n$, given the past state and channel outputs.

It is worth noting that BC and IC share the common structure of the achievable rate bounds, and therefore can be further bounded in a similar way. To avoid redundancy, we give the proof for IC, which can be straightforwardly extended to BC.

Define
\begin{align}
  \Sm(t) \defeq \Bmatrix{\Hm_{12}(t) \\ \Hm_{22}(t)} \quad \quad \hat{\Sm}(t) \defeq \Bmatrix{\hat{\Hm}_{12}(t) \\ \hat{\Hm}_{22}(t)}  \quad \quad
  \Km(t) \eqdef \E\{\xv_2(t)\xv_2^{\H}(t) \cond \Uc(t)\}.
\end{align} 
Let $p=\min\{M_2,N_1+N_2\}$ and $q=\min\{M_2,N_1\}$. By following the
footsteps in \cite{Yang:2013}, we have
{\small
\begin{align}
  \MoveEqLeft \frac{1}{p}h(\bar{\yv}_1(t),\bar{\yv}_2(t)|\Uc(t),\Hc(t)) - \frac{1}{q} h(\bar{\yv}_1(t)|\Uc(t),\Hc(t))\\
&\le \E_{\hat{\Sm}(t)}  \max_{\substack{\Km \succeq 0,\\ \trace(\Km) \le P}} \E_{\Sm(t)|\hat{\Sm}(t)} \left(\frac{1}{p}\log \det (\mathbf I + \Sm(t) \Km(t) \Sm^{\H}(t) ) - \frac{1}{q} \log \det (\Id+\Hm_{12}(t) \Km(t) \Hm_{12}^\H(t))\right) \label{eq:EI-3} \\
   &\le -\frac{\min\{M_2,N_1+N_2\}-\min\{M_2,N_1\}}{\min\{M_2,N_1+N_2\}} \log \sigma_1^2 + O(1)
\end{align}%
}%
where (\ref{eq:EI-3}) is obtained by applying extremal inequality
\cite{Liu:2007,Extremal} for degraded outputs;
 the last inequality is obtained from the following lemma:
\begin{lemma} \label{lemma:gcase}
  For two random matrices $\Sm = \hat{\Sm}+\tilde{\Sm} \in \CC^{L \times
  M}$ and $\Hm = \hat{\Hm}+\tilde{\Hm} \in \CC^{N \times M}$ with $L \ge
  N$, $\tilde{\Sm}$, $\tilde{\Hm}$ being
  respectively independent of $\hat{\Sm}$, $\hat{\Hm}$, 
  the entries of $\tilde{\Hm}$ being i.i.d. $\CN(0,\sigma^2)$. Then, given any $\Km
  \succeq 0$ with eigenvalues $\lambda_1 \ge \cdots \ge \lambda_{M} \ge
  0$, we have 
  \begin{multline}
    \frac{1}{\min\{M,L\}} \E_{\tilde{\Sm}} \log \det (\Id+\Sm \Km
    \Sm^\H) - \frac{1}{\min\{M,N\}} \E_{\tilde{\Hm}} \log \det (\Id+\Hm
    \Km \Hm^\H)  \\
    \le - \frac{\min\{M,L\}-\min\{M,N\}}{\min\{M,L\}} \log (\sigma^2) +
    \Es + \Eh \label{eq:lemma3}
  \end{multline}
  as $\sigma^2$ goes to 0.
\end{lemma}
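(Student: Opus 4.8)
The plan is to bound the two ergodic terms separately and subtract: an upper bound on $\tfrac1{m_S}\E_{\tilde\Sm}\log\det(\Id+\Sm\Km\Sm^\H)$ with $m_S\defeq\min\{M,L\}$, and a lower bound on $\tfrac1{m_H}\E_{\tilde\Hm}\log\det(\Id+\Hm\Km\Hm^\H)$ with $m_H\defeq\min\{M,N\}$, noting $m_S\ge m_H$ because $L\ge N$. Diagonalize $\Km=\Um\Lambdam\Um^\H$, $\lambda_1\ge\cdots\ge\lambda_M\ge0$, with eigenvectors $\uv_1,\dots,\uv_M$. The aim is to get \emph{per-mode} bounds in which the contribution of $\lambda_i$ is governed by how visible the direction $\uv_i$ is to $\hat\Sm$ (resp.\ $\hat\Hm$), plus a Gaussian floor set by $\sigma^2$. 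For the $\Sm$-term (no Gaussianity of $\tilde\Sm$ needed), write $\Sm\Km\Sm^\H=\sum_i\lambda_i(\Sm\uv_i)(\Sm\uv_i)^\H$, split $\Sm\uv_i=\hat\Sm\uv_i+\tilde\Sm\uv_i$ via $(\av+\bv)(\av+\bv)^\H\preceq2\av\av^\H+2\bv\bv^\H$ and $\det(\Id+\Am+\Bm)\le\det(\Id+\Am)\det(\Id+\Bm)$ ($\Am,\Bm\succeq0$); the deterministic part $\log\det(\Id+2\hat\Sm\Km\hat\Sm^\H)$ is bounded per mode by $\sum_k\log(1+2e_k^2\lambda_k)$ through the Horn/Weyl product inequality ($e_k$ the singular values of $\hat\Sm$, only the non-zero ones contributing), and the $\tilde\Sm$-part has expectation at the level of the perturbation variance of $\tilde\Sm$ (comparable to $\sigma^2$ in the operating regime), giving $\tfrac1{m_S}\E_{\tilde\Sm}\log\det(\Id+\Sm\Km\Sm^\H)\le\tfrac1{m_S}\sum_{k=1}^{m_S}\log\!\big(1+\lambda_k(e_k^2+O(\sigma^2))\big)+\Es$.

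The $\Hm$-term lower bound is the heart of the argument and is where $\sigma^2$ enters. Apply the chain rule $\log\det(\Id+\Hm\Km\Hm^\H)=\sum_{i=1}^M\log\big(1+\lambda_i\,\gv_i^\H(\Id+\textstyle\sum_{j<i}\lambda_j\gv_j\gv_j^\H)^{-1}\gv_i\big)$ with $\gv_i\defeq\Hm\uv_i=\hat\Hm\uv_i+\tilde\Hm\uv_i$, where the perturbations $\tilde\Hm\uv_i$ are i.i.d.\ $\CN(0,\sigma^2\Id_N)$ across $i$. Keep only the first $m_H$ summands (the rest are non-negative) and bound $\gv_i^\H(\Id+\sum_{j<i}\lambda_j\gv_j\gv_j^\H)^{-1}\gv_i\ge\|\Pi_{i-1}^{\perp}\gv_i\|^2$, with $\Pi_{i-1}^{\perp}$ the projection off $\mathrm{span}(\gv_1,\dots,\gv_{i-1})$ (dimension $<m_H\le N$, so $\Pi_{i-1}^\perp\ne0$). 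This reduces matters to $\E\log(1+\lambda_i\|\Pi_{i-1}^{\perp}\gv_i\|^2)$; conditioning on $\Pi_{i-1}^{\perp}$ (independent of $\tilde\Hm\uv_i$) and projecting onto the direction of $\Pi_{i-1}^{\perp}\hat\Hm\uv_i$, the bracket is bounded below by $|\mu+g|^2$ with $g\sim\CN(0,\sigma^2)$ and $\mu$ deterministic given $\Pi_{i-1}^\perp$. Here one uses the scalar fact that $a\mapsto\E_{w\sim\CN(0,1)}\log|a+w|^2$ is radial with $\Delta_a\E\log|a+w|^2=4e^{-|a|^2}>0$, hence non-decreasing in $|a|$ and $\ge\log(1+|a|^2)-\gamma$; this gives $\E[\log\lambda_i\|\Pi_{i-1}^{\perp}\gv_i\|^2\mid\Pi_{i-1}^\perp]\ge\log\lambda_i+\log(\sigma^2+\|\Pi_{i-1}^{\perp}\hat\Hm\uv_i\|^2)-\gamma$, and pushing it through the convexity of $x\mapsto\log(1+e^x)$ yields $\E\log(1+\lambda_i\|\Pi_{i-1}^{\perp}\gv_i\|^2)\ge\log(1+\lambda_i(\sigma^2+d_i^2))-O(1)$ with $d_i^2\defeq e^{\E\log\|\Pi_{i-1}^{\perp}\hat\Hm\uv_i\|^2}$. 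Summing, $\tfrac1{m_H}\E_{\tilde\Hm}\log\det(\Id+\Hm\Km\Hm^\H)\ge\tfrac1{m_H}\sum_{i=1}^{m_H}\log(1+\lambda_i(\sigma^2+d_i^2))-\Eh$, the residual $-\log d_i^2$-type terms being integrable over the Gaussian $\hat\Hm$ (a small quadratic form in Gaussian entries has bounded density at the origin) and hence $\Eh$.

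Subtracting, the left-hand side of \eqref{eq:lemma3} is at most $\tfrac1{m_S}\sum_{k=1}^{m_S}\log(1+\lambda_k(e_k^2+O(\sigma^2)))-\tfrac1{m_H}\sum_{i=1}^{m_H}\log(1+\lambda_i(\sigma^2+d_i^2))+\Es+\Eh$. Pair term $i$ of the two sums for $i\le m_H$: each difference is $O(1)$ when $\hat\Hm$ is not blind to $\uv_i$, and at most $-\log\sigma^2+O(1)$ when it is; the $m_S-m_H$ unpaired terms of the first sum add at most $\tfrac{m_S-m_H}{m_S}\log(1+\lambda_1)$, controlled by the power constraint $\trace(\Km)\le P$ against $-\log\sigma^2$ in the regime of use. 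The crucial point is that these two sources of slack cannot both be maximal: making many $d_i$ vanish forces $\Km$ to concentrate in $\ker\hat\Hm$, which caps the number of modes the first sum can charge at full strength (the role of $e_k$), and a counting argument on these ranks with $m_S\ge m_H$ converts the total slack into exactly $-\tfrac{m_S-m_H}{m_S}\log\sigma^2$ up to $\Es+\Eh$.

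\emph{Main obstacle.} The decisive step is the $\Hm$-term lower bound: establishing the monotonicity of $\E\log|a+w|^2$ in $|a|$ (via its Laplacian) and propagating this scalar estimate through the chain-rule/successive-decoding decomposition so that the resulting per-mode bound $\log(1+\lambda_i(\sigma^2+d_i^2))$ is tight \emph{both} when $\hat\Hm$ sees $\uv_i$ and when it is blind to it, with the $-\log d_i^2$ residues controllable only after averaging over $\hat\Hm$. Everything downstream — the elementary $\Sm$-term estimate and the rank/eigenvalue bookkeeping that produces the precise exponent $\tfrac{m_S-m_H}{m_S}$ — is essentially combinatorial.
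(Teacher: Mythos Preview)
Your approach is genuinely different from the paper's --- you use a chain-rule/successive-decoding decomposition for the $\Hm$-term, whereas the paper expands $\det(\Id+\Phim\Lambdam\Phim^\H)$ as a sum over principal minors, keeps only those indexed by a cyclic sliding window of size $N$, and applies AM--GM. The per-mode scalar estimate you develop (monotonicity of $a\mapsto\E\log|a+w|^2$ via its Laplacian) is elegant and correct, but it is not where the difficulty lies.

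The real gap is in your final ``bookkeeping'' step. Your lower bound $\tfrac{1}{m_H}\sum_{i=1}^{m_H}\log(1+\lambda_i(\sigma^2+d_i^2))$ captures only $m_H$ eigenmodes, while your upper bound runs over $m_S\ge m_H$ modes. To get the precise exponent $\tfrac{m_S-m_H}{m_S}$ you must show that the lower bound effectively accounts for \emph{all} large eigenvalues of $\Km$ with weight $\tfrac{m_H}{m_S}$, not just $m_H$ of them with weight $1$. Two specific problems: (i)~you keep the first $m_H$ summands in the eigenvalue ordering, but the right choice of which $m_H$ directions to process depends on $\hat\Hm$ --- if the top eigenvectors of $\Km$ all lie in $\ker\hat\Hm$, your lower bound degenerates to $\log(1+\lambda_i\sigma^2)$ and the difference can approach $-\log\sigma^2$ rather than $-\tfrac{m_S-m_H}{m_S}\log\sigma^2$, unless you exploit other modes; (ii)~you invoke ``the power constraint $\trace(\Km)\le P$'' to control the unpaired terms, but the lemma as stated has no power constraint --- it must hold for \emph{every} $\Km\succeq0$. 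Moreover, your upper bound via Horn/Weyl pairs $\lambda_k$ with the singular values $e_k$ of $\hat\Sm$, which carry no information about how $\Km$ sits relative to $\ker\hat\Hm$, so the sentence ``making many $d_i$ vanish\ldots caps the number of modes the first sum can charge'' has no mechanism behind it. The paper circumvents all of this: its sliding-window lower bound yields $\E_{\tilde\Hm}\log\det(\Id+\Hm\Km\Hm^\H)\ge\tfrac{N}{M}\log\det(\Lambdam')+\tfrac{N(M-N)}{M}\log\sigma^2+\Eh$, where $\Lambdam'$ collects \emph{all} eigenvalues $\ge1$ (possibly more than $N$); this $\log\det(\Lambdam')$ then cancels cleanly against the upper bound with no residual dependence on $\Km$, and the exponent falls out directly.
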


\begin{proof}
  See Appendix B. 
\end{proof}

{\bf Remark}:
\begin{itemize}
\item This lemma can be regarded as the weighted difference of the ergodic capacity for two MIMO channels with uncertainty, where $\tilde{\Sm}$ and $\tilde{\Hm}$ are channel uncertainties. It can also be interpreted as the ergodic capacity difference of two Ricean MIMO channels with line-of-sight components $\hat{\Sm}$, $\hat{\Hm}$, and fading components $\tilde{\Sm}$, $\tilde{\Sm}$. 
\item This lemma also shows the change of the ergodic capacity per dimension as the dimensionality decreases. In other words, as the channel dimension decreases, the difference of the ergodic capacity per dimension is bounded by the dimension difference and the channel uncertainty.
\end{itemize} 

According to the Markov chain $\Xm_2^n \to \left(\bar{\Ym}_1^n, \bar{\Ym}_2^n\right) \to \bar{\Ym}_1^n$, we upper-bound the weighted sum rate as
\begin{align}
  \MoveEqLeft n \left( \frac{R_1}{\min\{M_2,N_1\}} + \frac{R_2}{\min\{M_2,N_1+N_2\}} - \epsilon_n \right) \\
  &\le  n \cdot \frac{\min\{M_1+M_2,N_1\}}{\min\{M_2,N_1\}} \log P +  \sum_{t=1}^n \left( \frac{1}{\min\{M_2,N_1+N_2\}}h(\bar{\yv}_1(t),\bar{\yv}_2(t)|\Uc(t),\Hc(t)) \right. \\ &\left.~~~~~~~~~~~~~~~~~~~~~~~~~~~~~~~~~~~~~~~~ - \frac{1}{\min\{M_2,N_1\}} h( \bar{\yv}_1(t)|\Uc(t),\Hc(t))\right) + n \cdot O(1) \\
  &\le n \frac{\min\{M_1+M_2,N_1\}}{\min\{M_2,N_1\}} \log P + n \frac{\min\{M_2,N_1+N_2\}-\min\{M_2,N_1\}}{\min\{M_2,N_1+N_2\}}\alpha_1 \log P  + n \cdot O(1) 
\end{align}
and another outer bound can be similarly obtained by exchanging the roles of Receiver~1 and Receiver~2.
Accordingly, the corresponding outer bound $L_4$ of the DoF region is obtained by the definition.

\subsection{Proof of Bound $L_6$}
This bound is active when $C_1$ holds, i.e., $M_1\ge N_2$, $N_1>M_2$,
and $M_1+M_2>N_1+N_2$. The proof follows the same lines of thought in
\cite{Vaze:2012IC}. Since $N_1 > M_2$, we formulate a virtual received signal
\begin{align}
\tilde{\yv}_1(t) \defeq \Um \yv_1(t) = \Um \Hm_{11}(t) \xv_1(t) + \Um
\Hm_{12}(t) \xv_2(t) + \Um \zv_1(t)
\end{align}
where $\Um \in \CC^{N_1 \times N_1}$ is any unitary matrix such that the
last $N_1-M_2$ rows of $\Um(t) \Hm_{12}(t)$ are with all zeros and that
is independent of the rest of random variables. Therefore, the last
$N_1-M_2$ outputs in $\tilde{\yv}_1(t)$ are interference free, i.e.,
$\tilde{\yv}_{1[M_2+1:N_1]}(t) \sim  \Hm_{1[M_2+1:N_1]1}(t) \xv_1(t) +
\zv_{1[M_2+1:N_1]}(t)$.  
For convenience, we also define
\begin{align}
\tilde{\yv}_2(t) &\defeq \Hm_{21}(t) \xv_1(t) + \zv_2(t). 
\end{align}

Starting with Fano's inequality, the achievable rate can be bounded as
\begin{align}
n(R_1-\epsilon_n) &\le I(W_1;\Ym_1^{n}|\Hc^n,\hat{\Hc}^n) \\
& = I(W_1;\tilde{\Ym}_1^{n}|\Hc^n,\hat{\Hc}^n) \label{eq:L7-1}\\
&=
I(W_1;\tilde{\Ym}_{1[1:M_2]}^{n}|\Hc^n,\hat{\Hc}^n,\tilde{\Ym}_{1[M_2+1:N_1]}^{n})
+ I(W_1;\tilde{\Ym}_{1[M_2+1:N_1]}^{n}|\Hc^n,\hat{\Hc}^n) \\
&\le n(M_2 - d_2) \log P +  I(W_1;\tilde{\Ym}_{1[M_2+1:N_1]}^{n}|\Hc^n,\hat{\Hc}^n) + O(1) \label{eq:L7-2}\\
&\le n(M_2 - d_2) \log P +  I(W_1;\tilde{\Ym}_{1[M_2+1:N_1]}^{n}, \tilde{\Ym}_2^n|\Hc^n,\hat{\Hc}^n) + O(1) \\
&=  n(M_2 - d_2) \log P + \sum_{t=1}^n I(W_1;
\tilde{\yv}_{1[M_2+1:N_1]}(t),
\tilde{\yv}_2(t)|\Hc^n,\hat{\Hc}^n,\tilde{\Ym}_{1[M_2+1:N_1]}^{t-1},
\tilde{\Ym}_2^{t-1}) + O(1) \\
&\le  n(M_2 - d_2) \log P + \sum_{t=1}^n
h(\tilde{\yv}_{1[M_2+1:N_1]}(t),
\tilde{\yv}_2(t)|\Hc^n,\hat{\Hc}^n,\tilde{\Ym}_{1[M_2+1:N_1]}^{t-1},
\tilde{\Ym}_2^{t-1}) + O(1) \label{eq:L7-3} \\
&=  n(M_2 - d_2) \log P + \sum_{t=1}^n h(\tilde{\yv}_{1[M_2+1:N_1]}(t),
\tilde{\yv}_2(t)|\Uc(t),\Hc(t)) + O(1) \label{eq:L7-30} \\
n(R_2-\epsilon_n) &\le I(W_2;\Ym_2^{n}|\Hc^n,\hat{\Hc}^n) \\
&= I(W_1,W_2;\Ym_2^{n}|\Hc^n,\hat{\Hc}^n) -  I(W_1;\Ym_2^{n}|W_2, \Hc^n,\hat{\Hc}^n) \\
&\le nN_2 \log P -  I(W_1;\tilde{\Ym}_2^{n}|\Hc^n,\hat{\Hc}^n) + O(1) \label{eq:L7-4}\\
&\le nN_2 \log P -  h(\tilde{\Ym}_2^{n}|\Hc^n,\hat{\Hc}^n) + h(\tilde{\Ym}_2^{n}|W_1, \Hc^n,\hat{\Hc}^n)+ O(1)\\
&= nN_2 \log P - h(\tilde{\Ym}_2^{n}|\Hc^n,\hat{\Hc}^n) + O(1) \label{eq:L7-5}\\
&\le nN_2 \log P - \sum_{t=1}^n h(\tilde{\yv}_2(t)|\Hc^n,\hat{\Hc}^n,
\tilde{\Ym}_{1[M_2+1:N_1]}^{t-1}, \tilde{\Ym}_2^{t-1}) + O(1) \label{eq:L7-6} \\
&= nN_2 \log P - \sum_{t=1}^n h(\tilde{\yv}_2(t)|\Uc(t),\Hc(t)) + O(1)
\end{align}
where $\tilde{\Ym}_i^k \defeq \{\tilde{\yv}_i(t)\}_{t=1}^k$, $i=1,2$, and $\Uc(t) \defeq
\{\Hc^{t-1},\hat{\Hc}^t,\tilde{\Ym}_{1[M_2+1:N_1]}^{t-1}, \tilde{\Ym}_2^{t-1}\}$; \eqref{eq:L7-1} holds due to the fact that unitary transformation does not change the mutual information; \eqref{eq:L7-2} comes from the Lemma 6 in \cite{Vaze:2012IC}, given by
\begin{align}
I(W_1;\tilde{\Ym}_{1[1:M_2]}^{n}|\Hc^n,\hat{\Hc}^n,\tilde{\Ym}_{1[M_2+1:N_1]}^{n}) \le n(M_2 - d_2) \log P + O(1)
\end{align}%
where a similar proof can be straightforwardly obtained;
 \eqref{eq:L7-3} holds because (a) $ \tilde{\yv}_{1[M_2+1:N_1]}(t)$ and
 $\tilde{\yv}_2(t)$ are deterministic functions of $W_1$, $\Hc^n$ and
 $\hat{\Hc}^n$, (b) translation does not change differential entropy,
 and (c) the differential entropy of Gaussian noise with normalized
 variance is non-negative;
 \eqref{eq:L7-4} follows that the mutual information at hand is upper
 bounded by the capacity of an $(M_1+M_2) \times N_2$ point-to-point MIMO
 channel, i.e., $N_2 \log P + O(1)$ since $M_1+M_2
 > N_2$ from condition $C_1$; \eqref{eq:L7-5} holds because
 $\tilde{\yv}_2(t)$ is a deterministic function of $W_1$, given channel
 states, and the differential entropy of the normalized Gaussian noise
 is finite; \eqref{eq:L7-6} is due to conditioning reduces the
 differential entropy; \eqref{eq:L7-30} and the last equality are due to
 that the received signals at instant $t$ are independent of
 $\Hc^{n}_{t+1}$ and $\hat{\Hc}^{n}_{t+1}$, given the past states and
 channel outputs.

Next, we define
\begin{align}
\Sm(t) \defeq \Bmatrix{\Hm_{1[M_2+1:N_1]1}(t) \\ \Hm_{21}(t)} \in
\CC^{(N_1+N_2-M_2) \times M_1}.
\end{align}
Similarly to the proof for bound $L_4$, we obtain the weighted
difference of two differential entropies by applying the extremal
inequality and Lemma~\ref{lemma:gcase}
\begin{align}
  \frac{1}{p}h(\tilde{\yv}_{1[M_2+1:N_1]}(t),\tilde{\yv}_2(t)|\Uc(t),\Hc(t)) - \frac{1}{q} h(\tilde{\yv}_2(t)|\Uc(t),\Hc(t))
&\le \frac{N_1-M_2}{N_1+N_2-M_2}  \log \sigma_2^2 + O(1)
\end{align} %
where we set $p=\min\{M_1,N_1+N_2-M_2\}=N_1+N_2-M_2$ and
$q=\min\{M_1,N_2\}=N_2$.

Finally, we have
\begin{align}
  \MoveEqLeft
  n \left( \frac{R_1}{N_1+N_2-M_2} + \frac{R_2}{N_2} -  \epsilon_n\right) \\
  &\le  n \left( 1+\frac{M_2-d_2}{N_1+N_2-M_2} \right) \log P +  \sum_{t=1}^n \left( \frac{1}{N_1+N_2-M_2}h(\tilde{\yv}_{1[M_2+1:N_1]}(t),\tilde{\yv}_2(t)|\Uc(t),\Hc(t)) \right. \\ &\left.~~~~~~~~~~~~~~~~~~~~~~~~~~~~~~~~~~~~~~~~ - \frac{1}{N_2} h(\tilde{\yv}_2(t)|\Uc(t),\Hc(t))\right) + n \cdot O(1) \\
  &\le n \left( 1+\frac{M_2-d_2}{N_1+N_2-M_2} \right) \log P + n \frac{N_1-M_2}{N_1+N_2-M_2}  \alpha_2 \log P  + n \cdot O(1)
\end{align}
which leads to
\begin{align}
  d_1 + \frac{N_1+2N_2-M_2}{N_2} d_2  \le N_1+N_2 +(N_1-M_2)  \alpha_2. 
\end{align}
By exchanging the roles of Receiver~1 and Receiver~2, the outer bound \eqref{eq:ic-outer-bound-7} can be
obtained straightforwardly when $C_2$ holds. 

\section{Conclusion}
In this work, we focus on the two-user MIMO broadcast and interference channels where the transmitter(s) has/have access to both delayed CSIT and an estimate of current CSIT. Specifically, the DoF region of MIMO networks (BC/IC) in this setting with general antenna configuration and general current CSIT qualities has been fully characterized, thanks to a simple yet unified framework employing interference quantization, block-Markov encoding and backward decoding techniques. Our DoF regions generalize a number of existing results under more specific CSIT settings, such as delayed CSIT \cite{Vaze:2011BC,Vaze:2012IC,Ghasemi:2011}, perfect CSIT \cite{MIMOBC2006,Jafar:2007IC}, partial/hybrid/mixed CSIT \cite{RIA,Tandon:2012ISWCS,Vaze:2012Hybrid}. The results further shed light on the benefits of the temporally correlated channel, when such correlation can be opportunistically taken into account for a system design. Finally, we leave some interesting open questions for future works. These include counting in the imperfectness of the delayed CSIT as well as capturing the time-varying nature of  CSIT qualities in MIMO networks.

\appendix

\subsection{Achievable rate regions for the related MAC channels}
\label{app:MAC}

\newcommand{\Qmc}{\Qm_{\text{c}}}
\newcommand{\Qmkc}{\Qm_{k\text{c}}}
\newcommand{\Qmjc}{\Qm_{k\text{c}}}
\newcommand{\Qmk}{\Qm_{k}}
\newcommand{\Qmj}{\Qm_{j}}

\subsubsection{Broadcast Channels}
Let us focus on Receiver $k$,
$k\ne j\in \left\{ 1,2 \right\}$, without loss of generality.
The channel in \eqref{eq:BC-MAC} is a MAC, rewritten as
\begin{align}
  \underbrace{ \begin{bmatrix} \yv_k[b] - \etav_{kb} \\ \etav_{jb}
  \end{bmatrix} }_{Y_k}  &= \underbrace{\begin{bmatrix} \Hm_k \\ 0
  \end{bmatrix}}_{S_1} X_c +  
  \underbrace{\begin{bmatrix} \Hm_k \\ \Hm_j \end{bmatrix} }_{S_2} X_k +
    Z_k
\end{align}%
where $X_c \defeq \xv_c(l_{b-1})$ and $X_k \defeq \uv_k(w_{kb})$ are
independent, with rate $R_c$ and $R_k$, respectively; $Z_k$ is the AWGN. 
It is well known \cite{Cover_Thomas} that a rate pair $(R_c, R_k)$ is achievable in
the channel if 
\begin{align}
  R_c &\le I(X_c; Y_k \cond X_k, S)  \\
  R_k &\le I(X_k; Y_k \cond X_c, S)  \\
  R_c + R_k &\le I(X_c, X_k; Y_k \cond S)  
\end{align}%
for any input distribution $p_{X_c X_k} = p_{X_c} p_{X_k}$; $S\defeq
\left\{ S_1, S_2 \right\}$ denotes the state of the channel. 
Let $X_c\sim\CN[0, \Qmc]$ and
$X_k \sim\CN[0, \Qmk]$ with 
$\Qmc \defeq P \Id_{M}$ and $\Qmk \defeq P^{A'_{k}}
\Phim_{\hat{H}_j^\perp} + P^{A_{k}} \Phim_{\hat{H}_j}$. 
It readily follows that\footnote{Hereafter, we omit for notational brevity the expectation on the channel
states $S$, whenever possible, which does not change the high SNR
behavior in this case. We consider any realization $\Sm_1$ and
$\Sm_2$ instead. }
\begin{align}
I(X_c; Y_k \cond X_k) &= \log \det (\Id + P \Sm_1 \Sm_1^\H) = N_k \log P + O(1)\\
I(X_k; Y_k \cond X_c) &= \log \det (\Id + \Sm_2 \Qmk \Sm_2^\H) =
((M-N_j) A'_k  + N_j A_k) \log P + O(1)
\end{align} 
since $\Sm_2 \in \CC^{(N_1+N_2) \times M}$ has rank $M$ almost
surely, given the assumption $N_1+N_2 \ge M$. For the sum rate constraint, we have
\begin{align}
  I(X_c, X_k; Y_k) &= h(Y_k) - h(Z_k) \label{eq:tmp122}\\ 
&= h(\Hm_j X_k + Z_{k2}) + h(\Hm_k (X_c+X_k) + Z_{k1} \cond
\Hm_j X_k + Z_{k2}) + O(1) \\
&\ge h(\Hm_j X_k + Z_{k2}) + h(\Hm_k (X_c+X_k) + Z_{k1} \cond \Hm_j X_k + Z_{k2}, X_k)  + O(1) \label{eq:tmp123}\\
&= h(\Hm_j X_k + Z_{k2}) + h(\Hm_k X_c + Z_{k1})  +
O(1) \label{eq:tmp124}\\
& = N_j A_k \log P +  N_k \log P + O(1) \label{eq:tmp126}
\end{align}
where we define $Z_{k1}$ and $Z_{k2}$ the first and second parts of the
noise vector $Z_k$; the second equality is from the chain rule and the
fact that the Gaussian noise $Z_k$ is normalized; \eqref{eq:tmp123} is
due to conditioning reduces differential entropy; \eqref{eq:tmp124} is
from the independence between $X_c$ and $X_k$ and between the noises and
the rest; the first term in \eqref{eq:tmp126} is essentially the
differential entropy of the interference $\etav_{jb}$. By relating the
rate pair $(R_c, R_k)$ to the DoF pair $(d_{\eta}, d_{kb})$,
\eqref{eq:MAC-detak}-\eqref{eq:MAC-dsum} is straightforward.   

\subsubsection{Interference Channels}

In \eqref{eq:IC-MAC}, each receiver sees a MAC with three independent
messages. Let us focus on Receiver $k$,
$k\ne j\in \left\{ 1,2 \right\}$, without loss of generality.
The channel in \eqref{eq:IC-MAC} is rewritten as
\begin{align}
  \underbrace{ \begin{bmatrix} \yv_k[b] - \etav_{kb} \\ \etav_{jb}
  \end{bmatrix} }_{Y_k}  &= \underbrace{\begin{bmatrix} \Hm_{kk} \\ 0
  \end{bmatrix}}_{S_{k1}} X_{kc} +  \underbrace{\begin{bmatrix} \Hm_{kj} \\ 0
  \end{bmatrix}}_{S_{k2}} X_{jc} + \underbrace{\begin{bmatrix} \Hm_{kk} \\ \Hm_{jk} \end{bmatrix} }_{S_{k3}} X_{k} + Z_k
\end{align}%
where $X_{kc} \defeq \xv_{kc}(l_{k,b-1})$, $X_{jc} \defeq
\xv_{jc}(l_{j,b-1})$, and $X_k \defeq
\uv_k(w_{kb})$, $k\ne j \in\{1,2\}$, are
three independent signals, with rate $R_{kc}$, $R_{jc}$, and $R_k$, 
respectively; $Z_k$ is the AWGN. 
It is well known \cite{Cover_Thomas} that a rate triple $(R_{kc}, R_{jc}, R_k)$ is achievable in the channel if 
\begin{align}
  R_{kc} &\le I(X_{kc}; Y_k \cond X_{jc}, X_k)  \\
  R_{jc} &\le I(X_{jc}; Y_k \cond X_{kc}, X_k)  \\
  R_k &\le I(X_k; Y_k \cond X_{kc}, X_{jc})  \\
  R_{kc} + R_{jc} &\le I(X_{kc}, X_{jc}; Y_k \cond X_k)  \\
  R_{kc} + R_{k} &\le I(X_{kc}, X_{k}; Y_k \cond X_{jc})  \\
  R_{jc} + R_{k} &\le I(X_{jc}, X_{k}; Y_k \cond X_{kc})
  \label{eq:tmp128} \\
  R_{kc} + R_{jc} + R_{k} &\le I(X_{kc}, X_{jc}, X_{k}; Y_k)
  \label{eq:tmp129}
\end{align}%
for any $p_{X_{kc} X_{jc} X_k} = p_{X_{kc}} p_{X_{jc}} p_{X_k}$; where
we omit the conditioning on the channel states $S$ as in the BC case for
brevity. Let $X_{kc}\sim\CN[0, \Qmkc]$ and $X_k \sim\CN[0, \Qmk]$ with 
$\Qmkc \defeq P \Id_{M_k}$ and $\Qmk \defeq P^{A_{k}}
\Phim_{\hat{H}_{jk}} + P^{A'_{k}}\Phim_{\hat{H}_{jk}^{\perp1}} 
+ P^{A''_{k}} \Phim_{\hat{H}_{jk}^{\perp2}}$. 
It is readily shown that
\begin{align}
  I(X_{kc}; Y_k \cond X_{jc}, X_k) &= \log \det (\Id + P \Sm_{k1} \Sm_{k1}^\H) = \min\left\{ M_k, N_k \right\} \log P + O(1) \\
  I(X_{jc}; Y_k \cond X_{kc}, X_k) &= \log \det (\Id + P \Sm_{k2}
  \Sm_{k2}^\H) = \min\left\{ M_j, N_k \right\} \log P + O(1)
  \label{eq:tmp891}\\
  I(X_{k}; Y_k \cond X_{kc}, X_{jc}) &= \log \det (\Id +  \Sm_{k3}
  \Qmk \Sm_{k3}^\H) = (N'_j A_k + (M_k - N'_j - \xi_k) A'_k + \xi_k
  A''_k) \log P + O(1) \\
  I(X_{jc}, X_{kc}; Y_k \cond X_{k}) &= \log \det (\Id + P \Sm_{k1} \Sm_{k1}^\H + P \Sm_{k2} \Sm_{k2}^\H) =  N_k \log P + O(1)
\end{align} 
since $\Sm_{k3} \in \CC^{(N_1+N_2) \times M_k}$ has rank $M_k$ almost
surely, given the assumption $N_1+N_2 \ge M_k$. Following the same steps
as \eqref{eq:tmp122}-\eqref{eq:tmp126}, we can obtain
\begin{align}
  I(X_{kc}, X_{k}; Y_k \cond X_{jc}) &\ge  N'_j A_k \log P +
  \min\{M_k, N_k\} \log P + O(1). \label{eq:tmp127}
\end{align}%
\begin{figure}[t]
\centering
\includegraphics[width=0.8\columnwidth]{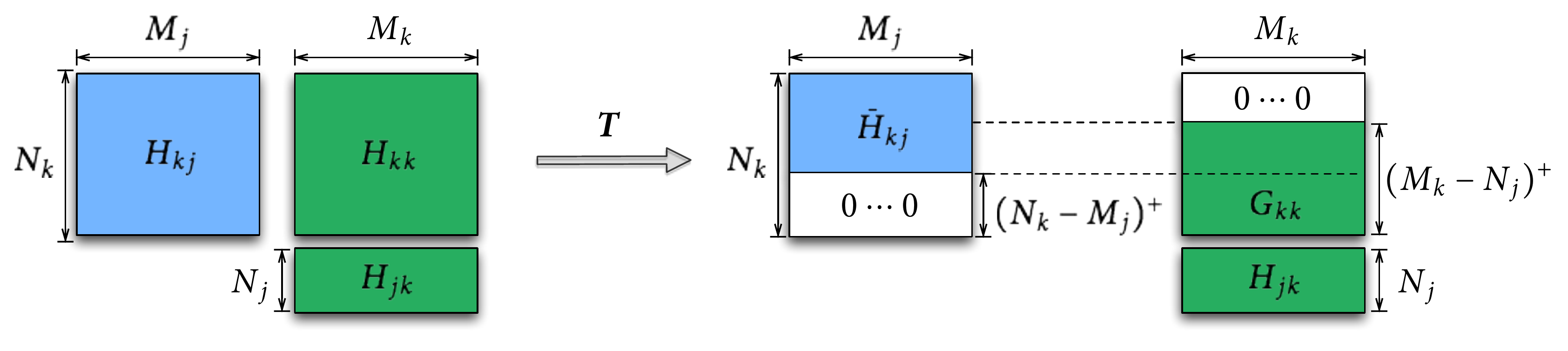}
\caption{Visualization of the interplay between $X_{jc}$ and $X_k$.}
\label{fig:Xk-Xjc}
\end{figure}
It remains to bound the RHS of \eqref{eq:tmp128} and \eqref{eq:tmp129}. 
First, using the chain rule,
\begin{align}
I(X_{jc}, X_{k}; Y_k \cond X_{kc}) &= 
I(X_{k}; Y_k \cond X_{kc}) + I(X_{jc}; Y_k \cond X_{k}, X_{kc})
\label{eq:tmp945}
\end{align}
where the scaling of the second term is already shown in
\eqref{eq:tmp891}. The first term can be interpreted as the rate of
$X_k$ by treating $X_{jc}$ as noise in a two-user MAC with a channel matrix in the block upper triangular form 
$\left[\begin{smallmatrix} \Hm_{kj} & \Hm_{kk} \\ & \Hm_{jk} 
\end{smallmatrix}\right]$. As shown in
Fig.~\ref{fig:Xk-Xjc}, since $\Hm_{kj}$, $\Hm_{kk}$, and
$\Hm_{jk}$ are mutually independent, there exists an invertible
row transformation $\Tm$ that can convert the $(N_1+N_2)\times (M_1+M_2)$
matrix to the form on the right, almost surely. The interference created
by $X_{jc}$ is through the matrix $\bar{\Hm}_{kj}$, only affecting the
overlapping part between $X_{jc}$ and $X_k$, as shown in
Fig.~\ref{fig:Xk-Xjc}. Note that the dimension of the overlapping is
$((M_k - N_j)^+ - (N_k - M_j)^+)^+$ that coincides with the definition of
$\xi_k$ in \eqref{eq:xi}.
From Fig.~\ref{fig:Xk-Xjc}, the interference-free received signal for
$X_k$ is $\tilde{Y}_k = \left[\begin{smallmatrix} \Gm_{kk} \\ \Hm_{jk} 
\end{smallmatrix}\right] X_k + \tilde{Z}_k$. Thus,   
\begin{align}
I(X_{k}; Y_k \cond X_{kc}) &\ge I(X_{k}; \tilde{Y}_k) \\ 
&= \log \det \left(\Id + \left[\begin{smallmatrix} \Gm_{kk} \\
  \Hm_{jk} \end{smallmatrix}\right] \Qmk \left[\begin{smallmatrix} \Gm_{kk} \\
  \Hm_{jk} \end{smallmatrix}\right]^\H \right) + O(1) \label{eq:tmp564} \\
  &\ge \log \det \left(\Id + \left[\begin{smallmatrix} \tilde{\Gm}'_{kk}
    & \tilde{\Gm}_{kk} \\ \tilde{\Hm}'_{kk} & \tilde{\Hm}_{kk}
  \end{smallmatrix}\right] 
  \left[ \begin{smallmatrix}
    P^{A'_k} \Id_{M_k - N'_j - \xi_k} & \\  & P^{A_k} \Id_{N'_j}  
  \end{smallmatrix} \right]
  \left[\begin{smallmatrix} \tilde{\Gm}'_{kk}
    & \tilde{\Gm}_{kk} \\ \tilde{\Hm}'_{kk} & \tilde{\Hm}_{kk}
  \end{smallmatrix}\right]^\H \right) \label{eq:tmp565} + O(1) \\
  &=  (({M_k - N'_j - \xi_k}) A'_k + N'_j A_k) \log P + O(1)
  \label{eq:tmp567}
\end{align}
where the $O(1)$ term in \eqref{eq:tmp564} is from the fact that the covariance of the noise
$\tilde{Z}_k$ depends on $\Tm$ that does not scale with $P$; $\Gm_{kk}$
and $\Hm_{jk}$ remain independent. Next, let $\Qmk = \Um_{jk} \,
\diag(P^{A''_k} \Id_{\xi_k}, P^{A'_k} \Id_{M_k - N'_j - \xi_k}, P^{A_k}
\Id_{N'_j}) \, \Um_{jk}^{\H}$ be the eigenvalue decomposition of $\Qmk$ and
define the column partitions $\bigl[\tilde{\Gm}''_{kk}\,
\tilde{\Gm}'_{kk}\, \tilde{\Gm}_{kk}\bigr] \defeq \Gm_{kk} \Um_{jk}$ and
$\bigl[\tilde{\Hm}''_{jk}\, \tilde{\Hm}'_{jk}\, \tilde{\Hm}_{jk}\bigr]
\defeq \Hm_{jk} \Um_{jk}$ where the number of columns of the
sub-matrices is $\xi_k$, $M_k - N'_j - \xi_k$, and $N'_j$, respectively;
inequality \eqref{eq:tmp565} is from the fact that removing one column block and the
corresponding diagonal block of size $\xi_k$ can only reduce the
log-determinant; the last equality is from the fact that the square
matrix $\left[\begin{smallmatrix} \tilde{\Gm}'_{kk}
  & \tilde{\Gm}_{kk} \\ \tilde{\Hm}'_{jk} & \tilde{\Hm}_{jk}
\end{smallmatrix}\right]$ has full rank, almost surely, for the
following reasons: 1) the matrices $G$ and $H$ are mutually
independent since the column transform $\Um_{jk}$ is unitary and
independent of the $G$ matrices; 2) the rows related to the matrices
$H$ are linearly independent, since it can be proved that
$\rank(\tilde{\Hm}_{jk}) = \rank(\Hm_{jk} \Phim_{\hat{H}_{jk}}
\Hm_{jk}^\H) = \min\left\{ M_k, N_j \right\}$, i.e.,
$\tilde{\Hm}_{jk}$ has full rank; 3) the rows related to the matrices
$G$ are linearly independent as well. Plugging \eqref{eq:tmp567} and
\eqref{eq:tmp891} into \eqref{eq:tmp945}, we have 
\begin{align}
  I(X_{jc}, X_{k}; Y_k \cond X_{kc}) &\ge 
  (N'_k + ({M_k - N'_j - \xi_k}) A'_k + N'_j A_k) \log P + O(1).
\end{align}%
Finally, for the sum rate constraint \eqref{eq:tmp129}, we follow the
same steps as \eqref{eq:tmp122}-\eqref{eq:tmp126}, we can obtain
\begin{align}
  I(X_{kc}, X_{jc}, X_{k}; Y_k) &\ge  N'_j A_k \log P +
  \min\{M_k+M_j, N_k\} \log P + O(1) \\
  &= (N_k + N_j' A_k) \log P + O(1)
\end{align}%

By relating the rate pair $(R_{kc}, R_{jc}, R_k)$ to the DoF pair
$(d_{\eta_1}, d_{\eta_2}, d_{kb})$, \eqref{eq:MAC-dkb-ic}-\eqref{eq:MAC-allsum} are
straightforward.

\subsection{Proof of Lemma~\ref{lemma:gcase}}

In order to prove Lemma~\ref{lemma:gcase}, we provide the following
preliminary results stated as Lemma~\ref{lemma:rank_subm}-\ref{lemma:ELogDet}. 

Let $\Am\in\CC^{N\times M}$, $N\le M$, be a full rank matrix and ${\Am'}\in{\CC^{N\times M'}}$, $M'\le M$, be a submatrix of $\Am$. We have the following lemmas that will be repeatedly used in the rest of the proof.

\begin{lemma}[rank of submatrix]\label{lemma:rank_subm}
  \begin{align}
    \rank(\Am') &\ge \rank(\Am) - (M-M').
  \end{align}%
\end{lemma}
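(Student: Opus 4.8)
The plan is to reduce the claim to the elementary subadditivity of rank under horizontal concatenation of matrices, applied after a harmless column permutation.

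First, I would observe that $\Am'$ is, by hypothesis, the matrix formed by some $M'$ of the $M$ columns of $\Am$ (all $N$ rows retained, consistent with the dimensions $\CC^{N\times M'}$). Permuting the columns of $\Am$ amounts to right-multiplication by a permutation matrix and hence does not change $\rank(\Am)$; after such a permutation we may write $\Am = \bigl[\,\Am' \ \ \Am''\,\bigr]$, where $\Am'' \in \CC^{N\times(M-M')}$ collects the remaining $M-M'$ columns.

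Second, I would invoke the fact that for any two matrices with the same number of rows, $\rank\bigl(\bigl[\,\Bm\ \ \Cm\,\bigr]\bigr) \le \rank(\Bm) + \rank(\Cm)$, since the column space of $\bigl[\,\Bm\ \ \Cm\,\bigr]$ is the sum of the column spaces of $\Bm$ and $\Cm$, whose dimension is at most the sum of the two individual dimensions. Applying this with $\Bm = \Am'$ and $\Cm = \Am''$, and bounding $\rank(\Am'') \le M - M'$ trivially (it has only $M-M'$ columns), gives
\begin{align}
  \rank(\Am) \;\le\; \rank(\Am') + \rank(\Am'') \;\le\; \rank(\Am') + (M - M'),
\end{align}
which rearranges to $\rank(\Am') \ge \rank(\Am) - (M-M')$, as claimed. (The full-rank and $N\le M$ hypotheses are not even needed for this inequality; they just make $\rank(\Am) = N$ in the intended application.)

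There is essentially no obstacle here — the statement is a one-line linear-algebra fact. The only minor point to keep straight is the meaning of ``submatrix'': here it is a column-submatrix, so the permutation/partition argument is exactly as above; had rows also been deleted, one would simply apply the same bound a second time to the transpose, incurring an extra additive $N - N'$ term, but that is not the situation of this lemma.
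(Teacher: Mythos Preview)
Your proof is correct. The paper actually states this lemma without proof, treating it as an elementary fact; your argument via rank subadditivity of a column partition is exactly the standard one-line justification one would supply.
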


\begin{lemma}\label{lemma:ess_lb}
  Let $\Ic_1,\ldots,\Ic_M$ be a cyclic sliding window of size $N$ on the set of indices $\{1,\ldots,M\}$, i.e.,
  \begin{align}
    \Ic_k \defeq \{(k+i)_M + 1\,: i \in [0,N-1]\},\quad
    k=1,\ldots,M.
  \end{align}%
  If the columns of $\Am$ are arranged such that $\rank(\Am_{\Ic_k}) = N$ for some $k\in[1,M]$, then
  \begin{align}
    \sum_{k=1}^M \mathrm{rank}(\Am_{\Ic_k}) &\ge N^2
  \end{align}%
  where $\Am_{\Ic_k}$ is the matrix composed of $N$ columns of $\Am$ defined by $\Ic_k$, i.e., $\Am_{\Ic_k}\defeq [A_{j,i}]_{j\in[1,N], i\in\Ic_k}$.
\end{lemma}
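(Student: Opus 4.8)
The plan is to derive two elementary rank inequalities from Lemma~\ref{lemma:rank_subm} — one relating consecutive windows, one relating each window to all of $\Am$ — and then sum them over rank levels. Write $r_k \defeq \rank(\Am_{\Ic_k})$, and fix (by hypothesis) an index $k_0$ with $r_{k_0} = N$. If $M = N$ then every $\Ic_k$ equals $\{1,\dots,M\}$ and the claimed inequality reads $MN \ge N^2$, an equality; so assume $M > N$ from now on.

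First I would establish the \emph{sliding estimate} $\abs{r_{k+1}-r_k}\le 1$ for all $k$ (indices modulo $M$). Since $M > N$, consecutive windows $\Ic_k$ and $\Ic_{k+1}$ share exactly $N-1$ indices, so $\Am_{\Ic_k\cap\Ic_{k+1}}$ is $\Am_{\Ic_k}$ with a single column deleted while being at the same time a submatrix of $\Am_{\Ic_{k+1}}$; the $M-M'=1$ case of Lemma~\ref{lemma:rank_subm} then gives $r_{k+1}\ge r_k-1$, and by symmetry $r_k\ge r_{k+1}-1$. Iterating cyclically from $k_0$ in both directions yields
\[
  r_k \;\ge\; N - d(k,k_0), \qquad k = 1,\dots,M,
\]
where $d(k,k_0)$ is the cyclic distance on $\mathbb Z/M\mathbb Z$. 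Independently, $\Am_{\Ic_k}$ is obtained from the full-rank $\Am$ by deleting $M-N$ columns, so Lemma~\ref{lemma:rank_subm} gives the \emph{global estimate} $r_k \ge \rank(\Am)-(M-N) = 2N-M$ for every $k$.

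Next I would sum over levels, using $\sum_{k=1}^M r_k = \sum_{v=1}^N \#\{k : r_k\ge v\}$. For each level $v\in\{1,\dots,N\}$, the sliding estimate shows that every index within cyclic distance $N-v$ of $k_0$ belongs to level $v$, so $\#\{k : r_k\ge v\} \ge \min\{M,\,2(N-v)+1\}$; and the global estimate shows $\#\{k : r_k\ge v\} = M$ whenever $v\le 2N-M$. Feeding in the stronger of the two bounds level by level, and splitting cases: for $M\ge 2N$ the global bound is vacuous and $\sum_{v=1}^N\bigl(2(N-v)+1\bigr)=N^2$; for $N< M< 2N$ the first $2N-M$ levels each contribute $M$ and the remaining $M-N$ levels contribute $\sum_{j=0}^{M-N-1}(2j+1)=(M-N)^2$, so the total is $M(2N-M)+(M-N)^2 = N^2$. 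In all cases $\sum_k r_k \ge N^2$, which is the claim.

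I expect the main obstacle to be recognizing that \emph{both} inequalities are genuinely needed. When $M$ is large the windows are almost disjoint and the sliding estimate alone already gives $N^2$; but in the narrow range $N< M< 2N$ the windows overlap heavily enough that the sliding/distance bound by itself misses $N^2$ (by about $2N-M-1$), and one must invoke the global bound $r_k\ge 2N-M$ to cover the ``central'' levels $v\le 2N-M$ — whereas for $M\ge 2N$ that global bound is empty. The only real bookkeeping is then checking that the two pieces $M(2N-M)$ and $(M-N)^2$ always add up to exactly $N^2$; this is the short algebra above, and it simultaneously shows the derived lower bound equals $N^2$ precisely (already attained at $M=N$), so no finer estimate is called for.
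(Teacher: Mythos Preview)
Your proof is correct and follows essentially the same approach as the paper: both rest on the two rank inequalities you call the sliding estimate ($|r_{k+1}-r_k|\le 1$) and the global estimate ($r_k\ge 2N-M$), derived from Lemma~\ref{lemma:rank_subm}, and both then count the resulting worst-case rank profile to obtain $N^2$. Your level-set summation $\sum_k r_k = \sum_{v=1}^N \#\{k:r_k\ge v\}$ is a slightly cleaner bookkeeping device than the paper's direct ``triangle-completion'' count from the figure, and your explicit split into $M\ge 2N$ versus $N<M<2N$ makes the argument more self-contained than the paper's sketch (which tacitly assumes $2N-M>0$), but the substance is the same.
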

\begin{proof}
The sketch of the proofs for the above lemma is illustrated in Fig.~\ref{fig:win-a}. Given that there exists $k$ such that the submatrix selected by the window is full rank $N$~(the blue window in Fig.\ref{fig:win-a}), the rank of the submatrix selected by the window $\Ic_{k+1}$ or $\Ic_{k-1}$~(the red window in Fig.\ref{fig:win-b}) is lower bounded by $N-1$. By applying the same argument, it is readily shown that the rank of the submatrix selected by the window $\Ic_{k+2}$ or $\Ic_{k-2}$ is lower bounded by $N-2$. This lower bound keeps decreasing when the window slides away from the blue one, until it hits another lower bound $N-(M-N) = 2N-M$ given by Lemma~\ref{lemma:rank_subm}. The submatrices within the sliding windows are of rank $2N-M$, which lasts $M-1-2(M-N)=2N-M-1$ times. With the help of Fig.\ref{fig:win-a}, a lower bound on the sum of the ranks of all the submatrices visited by the sliding window, can be obtained by counting the dots in the figure, i.e.,
\begin{align}
  N + 2 \sum_{i=1}^{M-N} (N-i) + (2N-M)(2N-M+1) = N^2.
\end{align}
In fact, this can be found easily by ``completing the triangle'', the number of dots in which is $N^2$.
\end{proof}

\begin{lemma}\label{lemma:ess_lb2}
  ${\Am'}\in{\CC^{N\times M'}}$, $N\le M'\le M$, is a submatrix of $\Am$.  We define $\Ic'_1,\ldots,\Ic'_{{M'}}$ as a cyclic sliding window of size $N$ on the set of indices $\{1,\ldots,M'\}$, i.e.,
  \begin{align}
    \Ic'_k \defeq \{(k+i)_{M'}+1\,: i \in
    [0,N-1]\},\quad k=1,\ldots,{M'}.
  \end{align}%
   If the columns of $\Am'$ are arranged such that the first $\rank(\Am')$ columns of $\Am'_{\Ic'_k}$ are linear independent for some $k\in[1,M]$, then we have
  \begin{align}
    \sum_{k=1}^{M'} \mathrm{rank}(\Am'_{\Ic'_k}) &\ge N(N-(M-M'))
  \end{align}%
  where $\Am'_{\Ic'_k}$ is the submatrix of $\Am'$ with $N$ columns defined by $\Ic'_k$, i.e., $\Am'_{\Ic'_k}\defeq [A'_{j,i}]_{j\in[1,N], i\in\Ic'_k}$.
\end{lemma}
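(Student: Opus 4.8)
The plan is to obtain the bound by a short double-counting argument, in the same spirit as the proof of Lemma~\ref{lemma:ess_lb} but without the staircase bookkeeping. Write $r\defeq\rank(\Am')$. Since $\Am'$ is obtained from the full-rank $N\times M$ matrix $\Am$ by keeping $M'$ of its $M$ columns, Lemma~\ref{lemma:rank_subm} gives $r\ge N-(M-M')$; hence it is enough to establish the formally stronger inequality $\sum_{k=1}^{M'}\rank(\Am'_{\Ic'_k})\ge Nr$, from which $\ge N\bigl(N-(M-M')\bigr)$ follows at once.

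The next step is to normalize the column order. By hypothesis there is a $k_0$ such that the first $r$ columns of $\Am'_{\Ic'_{k_0}}$ are linearly independent; in terms of the column indices of $\Am'$ these occupy a cyclic block of $r$ consecutive positions. Applying the cyclic shift of $\{1,\ldots,M'\}$ that carries this block to $\{1,\ldots,r\}$ — a relabeling that permutes the family $\{\Ic'_1,\ldots,\Ic'_{M'}\}$ among itself and therefore leaves $\sum_k\rank(\Am'_{\Ic'_k})$ unchanged — we may assume that columns $1,\ldots,r$ of $\Am'$ are linearly independent (a basis of its column space, as $r=\rank(\Am')$).

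The core estimate is then immediate. For each $k$ the submatrix $\Am'_{\Ic'_k}$ contains the columns indexed by $\Ic'_k\cap\{1,\ldots,r\}$, which are linearly independent, so
\begin{align}
\rank(\Am'_{\Ic'_k})\ \ge\ \bigl|\,\Ic'_k\cap\{1,\ldots,r\}\,\bigr|.
\end{align}
Summing over $k$ and exchanging the order of summation,
\begin{align}
\sum_{k=1}^{M'}\rank(\Am'_{\Ic'_k})\ \ge\ \sum_{k=1}^{M'}\bigl|\,\Ic'_k\cap\{1,\ldots,r\}\,\bigr|\ =\ \sum_{c=1}^{r}\bigl|\{k:\ c\in\Ic'_k\}\bigr|\ =\ Nr,
\end{align}
because each fixed column index lies in exactly $N$ of the $M'$ cyclic length-$N$ windows (its admissible window starts form a block of $N$ consecutive indices, using $N\le M'$). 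Combining this with $r\ge N-(M-M')$ gives the claim.

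What needs attention here is bookkeeping rather than a genuine conceptual obstacle: one must verify that the cyclic relabeling really does send the window family to itself (so that no rank in the sum is altered), and confirm the ``$N$ windows per column'' count against the modular definition of $\Ic'_k$, including the wrap-around. It is worth remarking that the per-window bound used above is much weaker than the descending-rank estimate exploited in Lemma~\ref{lemma:ess_lb}; the mild surprise is that, after summation, this crude count already saturates — which is convenient, since a staircase argument would otherwise force a case split according to the relative sizes of $r$, $N$ and $M'$.
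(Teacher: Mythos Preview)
Your argument is correct, and it reaches the same total as the paper by a genuinely shorter route. The paper's proof tracks, window by window, a lower bound on $\rank(\Am'_{\Ic'_k})$ as the window slides away from the distinguished position: the bound descends from $r$ in unit steps and is eventually held up by the floor $2N-M$ supplied by Lemma~\ref{lemma:rank_subm}, and the total is recovered by counting dots in the staircase figure. You instead use the single per-window estimate $\rank(\Am'_{\Ic'_k})\ge|\Ic'_k\cap\{1,\ldots,r\}|$ and double-count incidences, obtaining $\sum_k\rank(\Am'_{\Ic'_k})\ge Nr$ in one line; the target $N(N-(M-M'))$ then drops out from $r\ge N-(M-M')$. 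Your approach is cleaner --- no figure, no case split on the relative sizes of $r$, $N$, $M'$, and no appeal to the $2N-M$ floor --- and it actually yields the marginally sharper intermediate bound $Nr$. The paper's staircase, by contrast, makes the geometry of the worst case visible. The two checks you flag (that a cyclic relabel of the columns permutes the window family among itself, and that each fixed column index lies in exactly $N$ of the $M'$ windows when $N\le M'$) are both routine and correct as you describe them.
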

\begin{proof}
The sketch of the proofs for the above lemma is illustrated in Fig.\ref{fig:win-b}. Given that there exists $k$ such that the submatrix selected by the window has rank $r = N-(M-M')$ given by Lemma~\ref{lemma:rank_subm} and that the first $r$ columns are linearly independent~(the blue window in Fig.\ref{fig:win-b}), the rank of the submatrix selected by the windows $\Ic'_{k-1}, \ldots, \Ic'_{k-(N-r)}$~(the red and brown windows in Fig.\ref{fig:win-b}) is lower bounded by $r-1$. This lower bound keeps decreasing when the window slides go away from these positions, until it hits another lower bound $N-(M-N) = 2N-M$ given by Lemma~\ref{lemma:rank_subm}. With the help of Fig.\ref{fig:win-b}, a lower bound on the sum of the ranks of all the submatrices visited by the sliding window, can be obtained by counting the dots in the Figure. In fact, after some basic computations, it turns out that there are $N(N-(M-M'))$ dots.
\end{proof}

\begin{figure*}
\centering
  \begin{subfigure}{0.47\textwidth}
  \includegraphics*[width=\columnwidth]{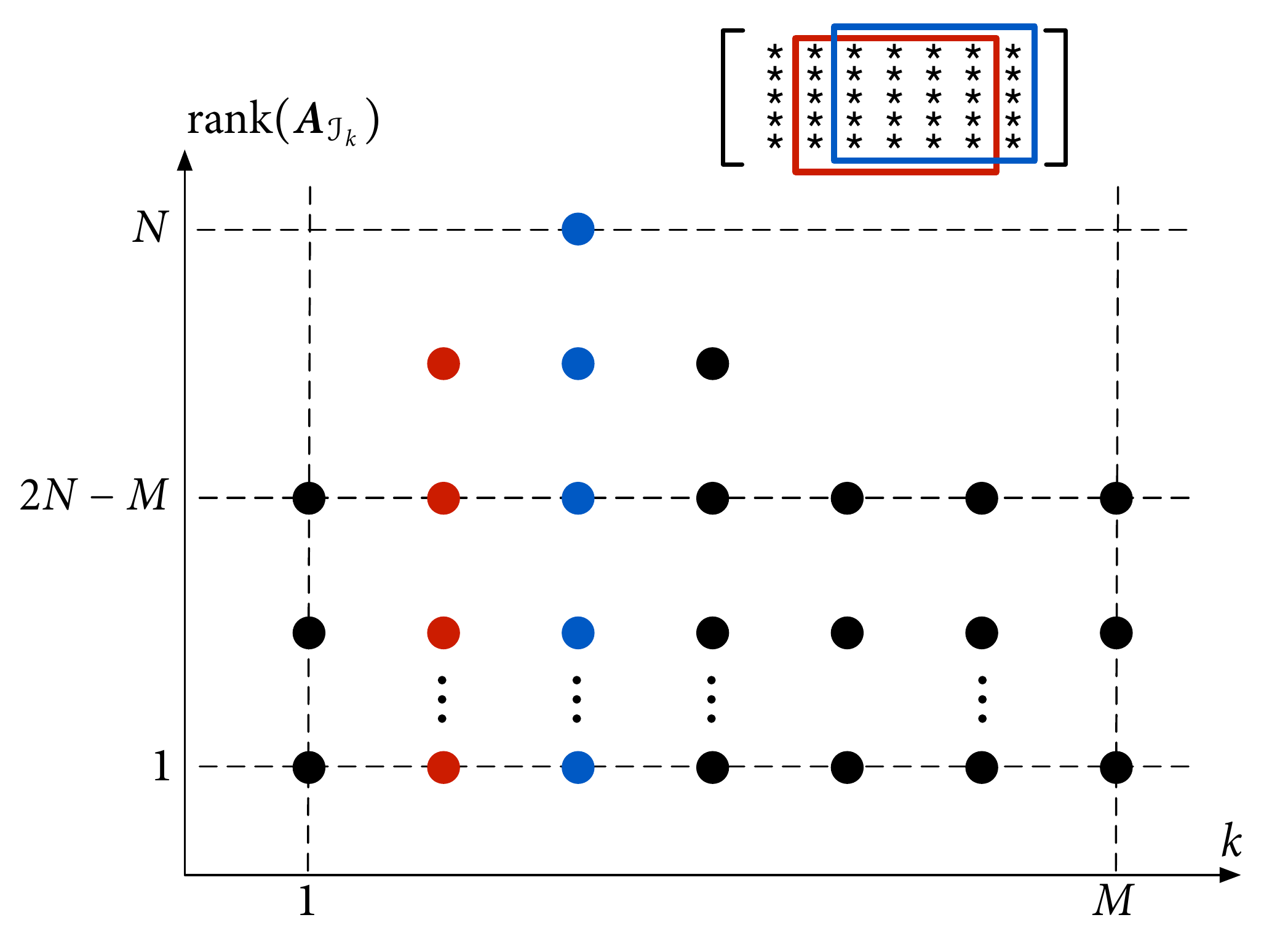}  
  \caption{An example for Lemma~\ref{lemma:ess_lb}}
  \label{fig:win-a}
  \end{subfigure}  
  \begin{subfigure}{0.50\textwidth}
  \includegraphics*[width=\columnwidth]{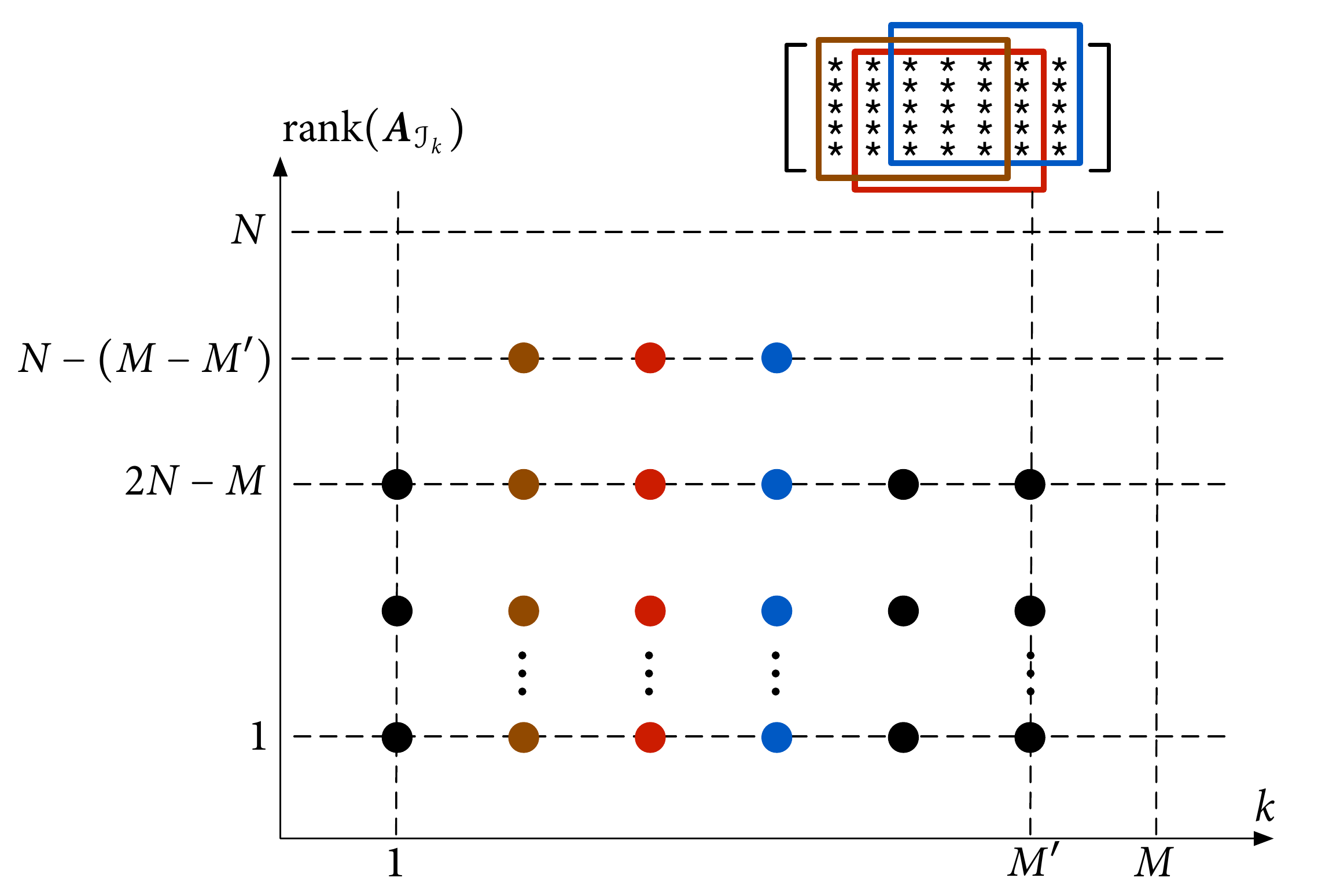} 
  \caption{An example for Lemma~\ref{lemma:ess_lb2}}
  \label{fig:win-b}
  \end{subfigure}
  \caption{Illustrations of the worst-case ranks of the submatrices from a sliding
  window. For each $k$, the number of vertical dots represents the rank
  of the submatrix $\Am_{\Ic_k}$. In particular, the number of
  red~(resp. blue) dots
  is the rank of the submatrix selected by the red~(resp. blue) window.
  The sum of the ranks can be found by counting the number of dots.}
\end{figure*}

\begin{lemma} \label{lemma:ELogDet}
  Given $\Hm = \hat{\Hm} + \tilde{\Hm} \in \CC^{N\times M}$, $N\le M$,
  with the entries of $\tilde{\Hm}$ being i.i.d.~$\CN[0,\sigma^2]$,
  $\sigma>0$, 
  then
  \begin{align}
    \E_{\tilde{\Hm}}\log \det (\Hm\Hm^\H) &\ge (N-\rank(\hat{\Hm})) \log
    \sigma^2 + \Eh
  \end{align}%
  as $\sigma^2$ goes to $0$.
\end{lemma}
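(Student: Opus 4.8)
The plan is to isolate, after a singular value decomposition (SVD) normalization and a row-wise Schur-complement split of $\det(\Hm\Hm^\H)$, a ``noise-free'' factor contributing only $\Eh$ and a purely Gaussian factor contributing exactly $(N-\rank(\hat{\Hm}))\log\sigma^2$ up to a constant. First I would normalize $\hat{\Hm}$: set $r\defeq\rank(\hat{\Hm})$, take an SVD $\hat{\Hm}=\Um\Lambdam\Vm^\H$ with $\Um,\Vm$ unitary and the $N\times M$ matrix $\Lambdam$ supported on its leading $r\times r$ block $\Lambdam_r\defeq\diag[s_1,\ldots,s_r]$, $s_i>0$, and replace $\Hm$ by $\Um^\H\Hm\Vm=\Lambdam+\tilde{\Gm}$. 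Since $\Um,\Vm$ are deterministic, $\tilde{\Gm}\defeq\Um^\H\tilde{\Hm}\Vm$ still has i.i.d.\ $\CN[0,\sigma^2]$ entries, and $\det(\Hm\Hm^\H)$ is unchanged. Now let $\Hm_{\mathrm{a}}$ be the first $r$ rows of $\Lambdam+\tilde{\Gm}$ (a Gaussian perturbation of a fixed matrix of full row rank $r$) and $\Hm_{\mathrm{b}}$ the remaining $N-r$ rows; by construction $\Hm_{\mathrm{b}}$ has i.i.d.\ $\CN[0,\sigma^2]$ entries, and since the corresponding row-blocks of $\tilde{\Gm}$ are disjoint, $\Hm_{\mathrm{a}}$ and $\Hm_{\mathrm{b}}$ are independent. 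The block-matrix identity then gives
\begin{align}
  \det(\Hm\Hm^\H)&=\det(\Hm_{\mathrm{a}}\Hm_{\mathrm{a}}^\H)\cdot\det\bigl(\Hm_{\mathrm{b}}\,\Pi^\perp\,\Hm_{\mathrm{b}}^\H\bigr),\\
  \Pi^\perp&\defeq\Id-\Hm_{\mathrm{a}}^\H(\Hm_{\mathrm{a}}\Hm_{\mathrm{a}}^\H)^{-1}\Hm_{\mathrm{a}},
\end{align}
where $\Pi^\perp$ is the orthogonal projector onto the complement of the row space of $\Hm_{\mathrm{a}}$, of dimension $M-r$ almost surely (the leading $r\times r$ block of $\Hm_{\mathrm{a}}$ being nonsingular almost surely).

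Next I would bound the two factors separately. For the first, $\Hm_{\mathrm{a}}\Hm_{\mathrm{a}}^\H\succeq(\Lambdam_r+\tilde{\Gm}_{11})(\Lambdam_r+\tilde{\Gm}_{11})^\H$ with $\tilde{\Gm}_{11}$ the top-left $r\times r$ block of $\tilde{\Gm}$ (the difference being positive semidefinite), so monotonicity of $\det$ on the positive semidefinite cone gives $\E\log\det(\Hm_{\mathrm{a}}\Hm_{\mathrm{a}}^\H)\ge\E\log\abs{\det(\Lambdam_r+\tilde{\Gm}_{11})}^2$; invoking the standard fact that the expected log-determinant of a vanishing-variance Gaussian perturbation of a fixed nonsingular matrix stays bounded below as $\sigma^2\to0$ (it converges to $\sum_i\log s_i^2$), this factor is $\Eh$. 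For the second factor I would condition on $\Hm_{\mathrm{a}}$, hence on $\Pi^\perp$: writing $\Pi^\perp=\Wm\Wm^\H$ for any $\Wm\in\CC^{M\times(M-r)}$ with orthonormal columns, one has $\Hm_{\mathrm{b}}\Pi^\perp\Hm_{\mathrm{b}}^\H=(\Hm_{\mathrm{b}}\Wm)(\Hm_{\mathrm{b}}\Wm)^\H$, and since $\Hm_{\mathrm{b}}$ is i.i.d.\ Gaussian and independent of $\Wm$, the matrix $\Hm_{\mathrm{b}}\Wm\in\CC^{(N-r)\times(M-r)}$ again has i.i.d.\ $\CN[0,\sigma^2]$ entries; as $M-r\ge N-r$ (because $N\le M$), it has full row rank almost surely. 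Factoring out $\sigma$ gives $\E\log\det\bigl((\Hm_{\mathrm{b}}\Wm)(\Hm_{\mathrm{b}}\Wm)^\H\bigr)=(N-r)\log\sigma^2+c$, where $c$ is the finite expected log-determinant of a standard $(N-r)\times(M-r)$ complex Wishart matrix, a constant depending only on $N-r$ and $M-r$, and this value does not depend on the realization of $\Pi^\perp$, so it survives the outer expectation over $\Hm_{\mathrm{a}}$. Adding the two contributions yields $\E_{\tilde{\Hm}}\log\det(\Hm\Hm^\H)\ge(N-r)\log\sigma^2+\Eh$, which is the claim.

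The step I expect to be the main obstacle is controlling the first factor, i.e.\ showing $\E\log\abs{\det(\Lambdam_r+\tilde{\Gm}_{11})}^2$ does not drift to $-\infty$ as $\sigma^2\to0$; this amounts to uniform integrability over $\sigma^2\le1$ of the negative part of $\log\abs{\det(\cdot)}^2$, which holds because $\tilde{\Gm}_{11}\mapsto\det(\Lambdam_r+\tilde{\Gm}_{11})$ is a polynomial that is nonzero at $\tilde{\Gm}_{11}=0$: on the event that $\norm{\tilde{\Gm}_{11}}$ is small the determinant is bounded away from zero, while the complementary event has probability exponentially small in $1/\sigma^2$ and $\log\abs{\det(\cdot)}^2$ has at worst a logarithmic, hence square-integrable, singularity along its zero set. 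Everything else is bookkeeping, and the degenerate cases are immediate: if $r=0$ then $\Hm_{\mathrm{a}}$ is empty and $\Pi^\perp=\Id$, so $\det(\Hm\Hm^\H)$ is exactly the Gaussian Wishart determinant, giving $N\log\sigma^2$ up to a constant; if $r=N$ then $\Hm_{\mathrm{b}}$ is empty and $\det(\Hm\Hm^\H)=\det(\Hm_{\mathrm{a}}\Hm_{\mathrm{a}}^\H)=\Eh$, consistent with $N-r=0$.
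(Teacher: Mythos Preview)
Your proof is correct and takes a genuinely different route from the paper's. The paper's argument is a two-line reduction: it invokes an external result (Lemma~1 of \cite{Chen:2013}) stating that for $\Gm=\hat{\Gm}+\tilde{\Gm}$ with unit-variance i.i.d.\ Gaussian $\tilde{\Gm}$ one has $\E\log\det(\Gm\Gm^\H)\ge\sum_{i\le\rank(\hat{\Gm})}\log\lambda_i(\hat{\Gm}\hat{\Gm}^\H)+O(1)$, then applies this to $\sigma^{-1}\Hm$ and subtracts $N\log\sigma^{-2}$ from both sides. Your argument is instead self-contained: the SVD reduces $\hat{\Hm}$ to a diagonal $\Lambdam$ supported on the first $r$ rows, the Schur-complement split $\det(\Hm\Hm^\H)=\det(\Hm_{\mathrm a}\Hm_{\mathrm a}^\H)\det(\Hm_{\mathrm b}\Pi^\perp\Hm_{\mathrm b}^\H)$ cleanly separates a ``signal'' factor (rank-$r$, bounded below as $\sigma\to0$) from a ``pure noise'' factor, and unitary invariance of the Gaussian makes the second factor an exact $(N-r)\times(M-r)$ Wishart log-determinant, yielding the $(N-r)\log\sigma^2$ term on the nose with an explicit constant.

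What each buys: the paper's proof is shorter but outsources the real work to the cited lemma, and in fact that lemma's proof (for the square case) is close in spirit to what you do here. Your decomposition makes transparent \emph{why} the coefficient is $N-\rank(\hat{\Hm})$ --- it literally counts the rows of $\Lambdam+\tilde{\Gm}$ that carry no deterministic signal --- and avoids any external dependency. The step you flag as the main obstacle, controlling $\E\log\abs{\det(\Lambdam_r+\tilde{\Gm}_{11})}^2$ uniformly as $\sigma\to0$, is genuine but your outline handles it: on $\{\norm{\tilde{\Gm}_{11}}_{\mathrm{op}}<\tfrac12 s_{\min}\}$ the smallest singular value of $\Lambdam_r+\tilde{\Gm}_{11}$ is at least $\tfrac12 s_{\min}$, while the complementary event has probability $O(e^{-c/\sigma^2})$ and on it $\log\abs{\det(\cdot)}$ has finite second moment (uniformly in $\sigma\le1$, since $\det(\Lambdam_r+\sigma\Zm)$ is a degree-$r$ polynomial in standard Gaussians with nonzero constant term), so Cauchy--Schwarz controls the bad-event contribution. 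One small simplification you might note: writing $\det(\Lambdam_r+\tilde{\Gm}_{11})=\det(\Lambdam_r)\det(\Id+\Lambdam_r^{-1}\tilde{\Gm}_{11})$ isolates the $\hat{\Hm}$-dependence in the first factor and reduces the obstacle to showing $\E\log\abs{\det(\Id+\sigma\Bm)}^2=O(1)$ for a Gaussian $\Bm$ with bounded-variance entries, which is perhaps cleaner to state.
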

\begin{proof}
  According to \cite[Lemma~1]{Chen:2013}, for any $\Gm =
  \hat{\Gm} + \tilde{\Gm} \in \CC^{N\times N}$ with the entries in
  $\tilde{\Gm}$ i.i.d.~$\CN[0,1]$ independent of $\hat{\Gm}$, we have
  \begin{align}
    \E_{\tilde{G}} \log \det (\Gm \Gm^\H) &\ge
    \sum_{i=1}^{ \tau }\log(\lambda_i(\hat{\Gm}\hat{\Gm}^\H)) + O(1)
  \end{align}%
  where $\tau\le\rank(\hat{\Gm})$ is the number of eigenvalues of $\Gm$ that are 
  larger than $1$. From here, it follows that
  \begin{align}
    \E_{\tilde{G}} \log \det (\Gm \Gm^\H) &\ge
    \sum_{i=1}^{ \rank(\hat{\Gm}) }\log(1+\lambda_i(\hat{\Gm}\hat{\Gm}^\H)) + O(1)
  \end{align}%
  since the remaining $\rank(\hat{\Gm}) - \tau$ eigenvalues are smaller
  than $1$ and do not contribute more than $O(1)$ to the expectation. 
  Therefore, for any $\sigma>0$, we can
  apply the above inequality to $\sigma^{-1} \Hm$ and have
  \begin{align}
    \E_{\tilde{\Hm}}\log \det ((\sigma^{-1}\Hm)(\sigma^{-1}\Hm)^\H) &\ge
    \sum_{i=1}^{ \rank(\hat{\Hm}) }\log(\lambda_i(\sigma^{-2}\hat{\Hm}\hat{\Hm}^\H)) + O(1) \\
    &= -\rank(\hat{\Hm}) \log \sigma^2 + \sum_{i=1}^{ \rank(\hat{\Hm}) }\log(\lambda_i(\hat{\Hm}\hat{\Hm}^\H)) + O(1) \\
    &= -\rank(\hat{\Hm}) \log \sigma^2 + \Eh  
  \end{align}%
  where the last equality is from the Assumption~\ref{ass:CSI} that 
  $\E_{\hat{H}}(\log\det(\hat{\Hm}\hat{\Hm}^\H)) > -\infty$. 
\end{proof}

In the following, we prove Lemma~\ref{lemma:gcase} case by case
according to the value of $M$\footnote{The technique employed in this
proof was first developed in our earlier version of this paper
\cite{Yi:2012}, and later applied and extended to tackle the $K$-user
 MISO case in
\cite{Kerret:2013,Chen:2013}.}.  First, let us recall that $N\le L$.
Since the case with $M\le N$ is trivial, we focus on the cases with $N <
M < L$ and $M \ge L$.

\subsubsection{{Case A: $N < M < L$}}

Let us define $M'$ as the number of eigenvalues of $\Km$ that are not
smaller than $1$\footnote{Or any constant $c>0$ that is independent of
any parameter in the system. Note that $M'$ can depend on
$\hat{\Sm}$ and the SNR $P$.}, and let $\Km = \Vm \Lambdam \Vm^\H$ be the eigenvalue decomposition of $\Km$. We first establish the following upper bound:
\begin{align}
  \det(\Id + \Sm\! \Km\! \Sm^\H) &= \det(\Id + \Lambdam \Vm^\H \Sm^\H\Sm
  \Vm^\H ) \\
  &\le \det(\Id + \lambda_{\max}(\Vm^\H\Sm^\H\Sm \Vm)\Lambdam ) \\
  &\le \det(\Id + \|\Sm\|^2_{\text{F}} \, \Lambdam)  \label{eq:tmp57}
\end{align}%
where the last inequality is due to $\lambda_{\max}(\Vm^\H\Sm^\H \Sm \Vm) \le
\Normf{\Sm\Vm} = \Normf{\Sm}$. 
Therefore, we have
\begin{align}
  \E_{\tilde{\Sm}} \log \det(\Id + \Sm\! \Km\! \Sm^\H) 
  &\le \log \det(\Id + \E_{\tilde{\Sm}} (\Normf{\Sm}) \, \Lambdam)
  \\
  &\le \log \det(\Id + \E_{\tilde{\Sm}} (\Normf{\Sm})\, \Lambdam') + \Es \label{eq:ub} 
\end{align}%
where the first inequality is from \eqref{eq:tmp57} on which we apply
Jensen's inequality; 
$\Lambdam'$ is a diagonal matrix composed of the $M'$ largest eigenvalues of $\Km$.  

Next, let $\Phim \defeq \Hm \Vm$, $\Phim' \defeq \hat{\Hm} \Vm$. Without loss of generality, we assume that the columns of $\Phim$ and $\Phim'$ are arranged such that the conditions in Lemma~\ref{lemma:ess_lb} and Lemma~\ref{lemma:ess_lb2} are satisfied (i.e., $\rank(\Phim_{\Ic})=N$, where $\Ic$ is the cyclic window with size $N$, and $\Phim_{\Ic}$ is defined as in Lemma~\ref{lemma:ess_lb}), respectively. This also implies that the eigenvalues in $\Lambdam$ and $\Lambdam'$ are arranged accordingly. In the following, given different values of $M'$, we prove that
\begin{align} \label{eq:lb}
\E_{\tilde{\Hm}} \log \det(\Id + \Hm\! \Km\! \Hm^\H) \ge \frac{N}{M}
\log \det(\Lambdam') + \frac{N(M-N)}{M} \log \sigma^2 + \Eh.
\end{align}

\subsubsection*{\underline{Case $M' = M$}}
In this case, we have
\begin{align}
  \det(\Id + \Hm\! \Km\! \Hm^\H) &=
  \det(\Id + \Phim \Lambdam \Phim^\H) \\
  &= \sum_{\Ic\subseteq \{1,\ldots,N\}} \det(\Lambdam_{\Ic})
  \det(\Phim_{\Ic}^\H \Phim_{\Ic}) \label{eq:tmp96}\\
  &\ge \sum_{k=1}^M \det(\Phim_{\Ic_k}^\H\Phim_{\Ic_k})
  \det(\Lambdam_{\Ic_k}) \label{eq:tmp97}
\end{align}%
where \eqref{eq:tmp96} is an application of the
identity $\det(I + \Am) = \sum_{\mathcal{I}\subseteq\left\{ 1,\ldots,M
\right\}} \!\!  \det(\Am_{\mathcal{I}\mathcal{I}})$ for any $\Am\in
\CC^{M\times M}$~\cite{HJ:90}; 
 the lower bound is obtained by only considering a sliding window of
 size $N$ for all the possible sub-determinant. Thus,
\begin{align}
  \log \det(\Id + \Hm\! \Km\! \Hm^\H) &\ge \log \left(\sum_{k=1}^M
  \det(\Phim_{\Ic_k}^\H\Phim_{\Ic_k})
  \det(\Lambdam_{\Ic_k})\right) \label{eq:tmp98}\\
  &\ge \log \left(\frac{1}{M} \sum_{k=1}^M
  \det(\Phim_{\Ic_k}^\H\Phim_{\Ic_k})
  \det(\Lambdam_{\Ic_k})\right) \\
  &\ge  \frac{1}{M} \log \left(\prod_{k=1}^M
  \det(\Phim_{\Ic_k}^\H\Phim_{\Ic_k})
  \det(\Lambdam_{\Ic_k})\right) \label{eq:tmp99}\\
  &= \frac{1}{M} \left( N \log \det(\Lambdam) + \sum_{k=1}^M \log \det(\Phim_{\Ic_k}^\H\Phim_{\Ic_k}) \right)
\end{align}%
where \eqref{eq:tmp99} holds since arithmetic mean is not smaller than
geometric mean; the last equality is from the sliding window property $\prod_{k=1}^M \det(\Lambdam_{\Ic_k}) = \det(\Lambdam)^N$. Finally, we have
\begin{align}
  \E_{\tilde{\Hm}} \log \det(\Id + \Hm\! \Km\! \Hm^\H)
  &\ge \frac{1}{M} \left( N \log \det(\Lambdam) + \sum_{k=1}^M
  \E_{\tilde{\Hm}} \log \det(\Phim_{\Ic_k}^\H\Phim_{\Ic_k}) \right)
  \label{eq:tmp887} \\
  &\ge \frac{1}{M} \left( N \log \det(\Lambdam) + \log \sigma^2
  \sum_{k=1}^M (N-\rank(\hat{\Phim}_{\Ic_{k}}))  + \Eh \right)
  \label{eq:tmp87} \\
  &= \frac{1}{M} \left( N \log \det(\Lambdam) + \log \sigma^2  \left(MN
  - \sum_{k=1}^M \rank(\hat{\Phim}_{\Ic_{k}})\right)  \right) + \Eh \label{eq:tmp86} \\
  &\ge \frac{N}{M} \left( \log \det(\Lambdam') + (M-N) \log \sigma^2
  \right) + \Eh \label{eq:tmp81}
\end{align}%
where $\hat{\Phim} \defeq \hat{\Hm}\Vm$ and hence $\rank(\hat{\Phim})
=\rank(\hat{\Hm})$; \eqref{eq:tmp87} is from Lemma~\ref{lemma:ELogDet};
the last inequality is from Lemma~\ref{lemma:ess_lb} and that $\Lambdam=\Lambdam'$ as $M=M'$. 

\subsubsection*{\underline{Case $M > M' \ge N$}}
For this case, we can first get
\begin{align}
  \det(\Id + \Hm\! \Km\! \Hm^\H) &= \det(\Id + \Phim \Lambdam \Phim^\H) 
  \ge \det(\Id + \Phim' \Lambdam' (\Phim')^\H). \label{eq:tmp66}
\end{align}%
Following the same footsteps as in \eqref{eq:tmp887}-\eqref{eq:tmp86}, we obtain
\begin{align}
  \E_{\tilde{\Hm}} \log \det(\Id + \Hm\! \Km\! \Hm^\H)
  &\ge \frac{1}{M'} \left( N \log \det(\Lambdam') + \log \sigma^2
  \left(M'N - \sum_{k=1}^{M'} \rank(\hat{\Phim}'_{\Ic'_{k}})\right)
  \right) + \Eh \label{eq:tmp76} \\
  &\ge \frac{N}{M'} \left( \log \det(\Lambdam') + (M-N) \log \sigma^2
  \right)  + \Eh \label{eq:tmp71}\\
  &\ge \frac{N}{M}  \log \det(\Lambdam') + \frac{N(M-N)}{M} \log
  \sigma^2 + \Eh
\end{align}
where the inequality \eqref{eq:tmp71} is from Lemma~\ref{lemma:ess_lb2}. 

\subsubsection*{\underline{Case $M' < N$}}
From \eqref{eq:tmp66} and given that $M'<N$, we have
\begin{align}
  \E_{\tilde{\Hm}} \log \det(\Id + \Hm\! \Km\! \Hm^\H)
  &\ge \log \det(\Lambdam') + \log \sigma^2 \left(M' -
  \rank(\hat{\Phim}')\right) + \Eh \label{eq:tmp67} \\
  &\ge \log \det(\Lambdam') + \log \sigma^2 \left(M' - (N-(M-M'))
  \right) + \Eh \label{eq:tmp69} \\
  &= \log \det(\Lambdam') + (M-N) \log \sigma^2 + \Eh \label{eq:tmp61}\\
  &\ge \frac{N}{M}  \log \det(\Lambdam') + \frac{N(M-N)}{M} \log
  \sigma^2 + \Eh
\end{align}
where \eqref{eq:tmp69} is from $\log\sigma^2 \le 0$ and $\rank(\hat{\Phim}') \ge N - (M-M')$. 

By now, \eqref{eq:lb} has been proved in all configurations of $(M, N, M')$.  
Combining \eqref{eq:ub} and \eqref{eq:lb}, we have
\begin{align}
  \MoveEqLeft N\, \E_{\tilde{\Sm}} \log \det(\Id + \Sm\! \Km\! \Sm^\H) -
  M\, \E_{\tilde{\Hm}} \log \det(\Id + \Hm\! \Km\! \Hm^\H) \nonumber \\
&\le  -N(M-N)\log\sigma^{2} + N\,\log
{\det\bigl(\E_{\tilde{\Sm}}(\Normf{\Sm})\, \Id +(\Lambdam')^{-1}\bigr)} 
+ \Es + \Eh  \label{eq:tmp909} \\
&\le  -N(M-N)\log\sigma^{2} + \Es + \Eh  \label{eq:tmp910}
\end{align}%
where the last inequality is from the fact that $\Lambdam'\succeq \Id$ by construction
and hence $\log {\det\bigl(\E_{\tilde{\Sm}}(\Normf{\Sm})\, \Id
+(\Lambdam')^{-1}\bigr)} \le M'
\log(1+\E_{\tilde{\Sm}}(\Normf{\Sm}))=\Es$. This completes
the proof of \eqref{eq:lemma3} for the case $N < M < L$.

\subsubsection{{Case B: $M \ge L$}}
For the first term in \eqref{eq:lemma3}, we bound it as follows
  \begin{align}
    \E_{\tilde{\Sm}} \log \det (\Id+\Sm \Km \Sm^\H) &= \E_{\tilde{\Sm}} \log \det (\Id+\Um_{\Sm}  \mathbf{\Sigma}_{\Sm} \Vm_{\Sm}^\H \Km \Vm_{\Sm} \mathbf{\Sigma}_{\Sm} \Um_{\Sm}^\H) \label{eq:Theorem-2-1}  \\
    &=\E_{\tilde{\Sm}} \log \det (\Id+\mathbf{\Sigma}_{\Sm}^2 \Vm_{\Sm}^\H \Km \Vm_{\Sm} ) \label{eq:Theorem-2-2}\\
    &\le \E_{\tilde{\Sm}} \log \det (\Id+\lambda_{\max}(\mathbf{\Sigma}_{\Sm}^2) \Vm_{\Sm}^\H \Km \Vm_{\Sm} ) \\
    &=\sum_{i=1}^L \E_{\tilde{\Sm}} \log  (1+\lambda_{\max}(\Sm \Sm^\H) \lambda_{i}(\Vm_{\Sm}^\H \Km \Vm_{\Sm}))\\
    &\le \sum_{i=1}^L \E_{\tilde{\Sm}} \log  (1+\lambda_{\max}(\Sm \Sm^\H) \lambda_{i}) \label{eq:Theorem-2-3}\\
    &\le \sum_{i=1}^L \E_{\tilde{\Sm}} \log  (1+\Normf{\Sm} \lambda_{i}) \label{eq:Theorem-2-4}\\
&\le \sum_{i=1}^{L} \log  (1 + \E_{\tilde{\Sm}}(\Normf{{\Sm}})\, \lambda_i) \label{eq:Theorem-2-5}\\
    &= \log \det (\Id + \E_{\tilde{\Sm}}(\Normf{{\Sm}})\, \Lambdam'') \\
    &= \log \det (\Id + \E_{\tilde{\Sm}}(\Normf{{\Sm}})\, \Lambdam''') +
    \Es
  \end{align}
  where in (\ref{eq:Theorem-2-1}), $\Sm=\Um_{\Sm}  \mathbf{\Sigma}_{\Sm}
  \Vm_{\Sm}^\H$ with $\mathbf{\Sigma}_{\Sm} \in \CC^{N \times N}$ and
  $\Vm_{\Sm} \in \CC^{M \times L}$; (\ref{eq:Theorem-2-2}) comes from
  the equality $\det(\Id+\Am\Bm)=\det(\Id+\Bm\Am)$;
  (\ref{eq:Theorem-2-3}) is due to Poincare Separation
  Theorem~\cite{HJ:90} that $\lambda_{i}(\Vm_{\Sm}^\H \Km \Vm_{\Sm}) \le
  \lambda_{i} (\Km)$ for $i=1,\cdots,N$; (\ref{eq:Theorem-2-4}) is from
  the fact that $\lambda_{\max}(\Sm\Sm^\H) \le \Normf{\Sm}$;
  (\ref{eq:Theorem-2-5}) is obtained by applying Jensen's inequality;
  $\Lambdam'' \defeq \diag (\lambda_1, \cdots, \lambda_{L})$ and $\Lambdam''' \defeq \diag (\lambda_1, \cdots, \lambda_{\min\{L,M'\}})$
  with $M'$ being the number of eigenvalues that are not smaller than
  $1$, i.e., $\Lambda'''\succeq \Id$.

  For the second term in \eqref{eq:lemma3}, we use the following lower
  bound
   \begin{align}
    \E_{\tilde{\Hm}} \log \det (\Id+\Hm \Km \Hm^\H) &=\E_{\tilde{\Hm}} \log \det (\Id+ \mathbf{\Phi} \mathbf{\Lambda}  \mathbf{\Phi}^\H )\\
    &\ge \E_{\tilde{\Hm}} \log \det (\Id+ \mathbf{\Phi}' \mathbf{\Lambda}''  \mathbf{\Phi}'^\H ) \label{eq:Theorem-2-12}\\
    &\ge \frac{N}{L} \log \det (\Lambdam''') +\frac{N(L-N)}{L}
    \log(\sigma^2) + \Eh
  \end{align}
  where  $\Phim\defeq\Hm \Vm \in \CC^{{N} \times M}$ with $\Vm$ being the unitary
  matrix containing the eigenvectors of $\Km$, i.e., $\Km=\Vm \Lambdam
  \Vm^\H$ with $\Lambdam=\diag(\lambda_1,\cdots,\lambda_M)$; in
  (\ref{eq:Theorem-2-12}), ${\Phim}'=\Hm \Vm' \in \CC^{{N} \times {L}}$
  with $\Vm'$ being the first ${L}$ columns of $\Vm$, and multiplying by
  the matrix $\Vm'$ does not change the distribution property, and
  ${\Phim} {\Lambdam}  {\Phim}^\H \succeq {\Phim}' {\Lambdam}''
  {\Phim}'^\H$; the last inequality is obtained from \eqref{eq:lb} in
  the previous subsection.
  
Finally, it is readily shown that, following the same steps as in
\eqref{eq:tmp909} and \eqref{eq:tmp910}, 
\begin{align}
\frac{1}{L}\E_{\tilde{\Sm}} \log \det (\Id+\Sm \Km \Sm^\H) -
\frac{1}{N}\E_{\tilde{\Hm}} \log \det (\Id+\Hm \Km \Hm^\H) \le -
\frac{L-N}{L} \log(\sigma^2) + \Es + \Eh.
\end{align}

This completes the proof of \eqref{eq:lemma3} for the case $M \ge L$.


\begin{thebibliography}{10}
\providecommand{\url}[1]{#1}
\csname url@samestyle\endcsname
\providecommand{\newblock}{\relax}
\providecommand{\bibinfo}[2]{#2}
\providecommand{\BIBentrySTDinterwordspacing}{\spaceskip=0pt\relax}
\providecommand{\BIBentryALTinterwordstretchfactor}{4}
\providecommand{\BIBentryALTinterwordspacing}{\spaceskip=\fontdimen2\font plus
\BIBentryALTinterwordstretchfactor\fontdimen3\font minus
  \fontdimen4\font\relax}
\providecommand{\BIBforeignlanguage}[2]{{%
\expandafter\ifx\csname l@#1\endcsname\relax
\typeout{** WARNING: IEEEtran.bst: No hyphenation pattern has been}%
\typeout{** loaded for the language `#1'. Using the pattern for}%
\typeout{** the default language instead.}%
\else
\language=\csname l@#1\endcsname
\fi
#2}}
\providecommand{\BIBdecl}{\relax}
\BIBdecl

\bibitem{Yi:2013ISIT}
X.~Yi, D.~Gesbert, S.~Yang, and M.~Kobayashi, ``Degrees of freedom of
  time-correlated broadcast channels with delayed {CSIT}: The {MIMO} case,'' in
  \emph{Proc.~IEEE ISIT 2013}, Jul.~2013.


\bibitem{MIMOBC2006}
H.~Weingarten, Y.~Steinberg, and S.~Shamai, ``The capacity region of the
  {G}aussian multiple-input multiple-output broadcast channel,'' \emph{IEEE
  Trans.~Inf.~Theory}, vol.~52, no.~9, pp. 3936 --3964, Sept. 2006.

\bibitem{Jafar:2007IC}
S.~Jafar and M.~Fakhereddin, ``Degrees of freedom for the {MIMO} interference
  channel,'' \emph{IEEE Trans. Inf. Theory}, vol.~53, no.~7, pp. 2637--2642,
  July 2007.

\bibitem{MAT2012TIT}
M.~Maddah-Ali and D.~Tse, ``Completely stale transmitter channel state
  information is still very useful,'' \emph{IEEE Trans. Inf. Theory}, vol.~58,
  no.~7, pp. 4418--4431, Jul. 2012.

\bibitem{Vaze:2011BC}
C.~S. {Vaze} and M.~K. {Varanasi}, ``The degrees of freedom regions of two-user
  and certain three-user {MIMO} broadcast channels with delayed {CSIT},''
  \emph{ArXiv:1101.0306}, Dec. 2011.

\bibitem{Vaze:2012IC}
C.~Vaze and M.~Varanasi, ``The degrees of freedom region and interference
  alignment for the {MIMO} interference channel with delayed {CSIT},''
  \emph{IEEE Trans. Inf. Theory}, vol.~58, no.~7, pp. 4396--4417, Jul. 2012.

\bibitem{Ghasemi:2011}
A.~Ghasemi, A.~S. Motahari, and A.~K. Khandani, ``Interference alignment for
  the {MIMO} interference channel with delayed local {CSIT},'' \emph{arXiv
  preprint arXiv:1102.5673}, 2011.

\bibitem{Huang:2012}
C.~Huang, S.~Jafar, S.~Shamai, and S.~Vishwanath, ``On degrees of freedom
  region of {MIMO} networks without channel state information at
  transmitters,'' \emph{IEEE Trans.~Inf.~Theory}, vol.~58, no.~2, pp. 849--857,
  Feb.~ 2012.

\bibitem{Guo:2012noCSIT}
Y.~Zhu and D.~Guo, ``The degrees of freedom of isotropic {MIMO} interference
  channels without state information at the transmitters,'' \emph{IEEE Trans.
  Inf. Theory}, vol.~58, no.~1, pp. 341--352, Jan. 2012.

\bibitem{Vaze:2012noCSIT}
C.~Vaze and M.~Varanasi, ``The degree-of-freedom regions of {MIMO} broadcast,
  interference, and cognitive radio channels with no {CSIT},'' \emph{IEEE
  Trans. Inf. Theory}, vol.~58, no.~8, pp. 5354--5374, Aug. 2012.

\bibitem{Tandon:2012alter}
R.~Tandon, S.~Jafar, S.~Shamai, and H.~Poor, ``On the synergistic benefits of
  alternating {CSIT} for the {MISO} {BC},'' \emph{to appear in IEEE Trans. Inf.
  Theory, arXiv:1208.5071}, 2012.

\bibitem{Lee:2012}
N.~Lee and R.~Heath~Jr, ``Not too delayed {CSIT} achieves the optimal degrees
  of freedom,'' in \emph{Proc.~Allerton 2012, arXiv:1207.2211}, 2012.

\bibitem{Kobayashi:2012}
M.~Kobayashi, S.~Yang, D.~Gesbert, and X.~Yi, ``On the degrees of freedom of
  time correlated {MISO} broadcast channel with delayed {CSIT},'' in
  \emph{Proc.~IEEE ISIT 2012}, Jul. 2012, pp. 2501--2505.

\bibitem{Yang:2013}
S.~Yang, M.~Kobayashi, D.~Gesbert, and X.~Yi, ``Degrees of freedom of time
  correlated {MISO} broadcast channel with delayed {CSIT},'' \emph{IEEE
  Trans.~Inf.~Theory}, vol.~59, no.~1, pp. 315--328, Jan. 2013.

\bibitem{Gou:2012}
T.~Gou and S.~Jafar, ``Optimal use of current and outdated channel state
  information: Degrees of freedom of the {MISO BC} with mixed {CSIT},''
  \emph{IEEE Communications Letters}, vol.~16, no.~7, pp. 1084--1087, Jul.
  2012.

\bibitem{Yi:2012}
X.~{Yi}, S.~{Yang}, D.~{Gesbert}, and M.~{Kobayashi}, ``The degrees of freedom
  region of temporally-correlated {MIMO} networks with delayed {CSIT},''
  \emph{submitted to IEEE Trans.~Inf.~Theory, arXiv:1211.3322v1}, Oct.~2012.

\bibitem{Kerret:2013}
P.~de~Kerret, X.~Yi, and D.~Gesbert, ``On the degrees of freedom of the
  $k$-user time correlated broadcast channel with delayed {CSIT},'' \emph{in
  Proc.~IEEE ISIT 2013, arXiv:1301.2138}, 2013.

\bibitem{Chen:2012}
J.~Chen and P.~Elia, ``Degrees-of-freedom region of the {MISO} broadcast
  channel with general mixed-{CSIT},'' \emph{in Proc.~ITA 2013, arXiv:1205.3474}, 2012.

\bibitem{block-Markov}
T.~Cover and A.~El~Gamal, ``Capacity theorems for the relay channel,'' \emph{IEEE
  Trans.~Inf.~Theory}, vol.~25, no.~5, pp. 572--584, 1979.

\bibitem{Ozarow}
L.~Ozarow and S.~Leung-Yan-Cheong, ``An achievable region and outer
bound for the Gaussian broadcast channel with feedback,'' \emph{IEEE
Trans.~Inf.~Theory}, vol.~30, no.~4, pp. 667--671, July 1984.

\bibitem{Suh:2011}
C.~Suh and D.~Tse, ``Feedback capacity of the {G}aussian interference channel
  to within 2 bits,'' \emph{IEEE Trans. Inf. Theory}, vol.~57, no.~5, pp. 2667
  --2685, May 2011.
  
\bibitem{Willems:1985}
F.~Willems and E.~van~der Meulen, ``The discrete memoryless multiple-access
  channel with cribbing encoders,'' \emph{IEEE Trans. Inf. Theory}, vol.~31,
  no.~3, pp. 313 -- 327, May 1985.

\bibitem{Tandon:2012ISWCS}
R.~Tandon, M.~Maddah-Ali, A.~Tulino, H.~Poor, and S.~Shamai, ``On fading
  broadcast channels with partial channel state information at the
  transmitter,'' in \emph{2012 International Symposium on Wireless
  Communication Systems (ISWCS)}, 2012, pp. 1004--1008.

\bibitem{RIA}
H.~Maleki, S.~Jafar, and S.~Shamai, ``Retrospective interference alignment over
  interference networks,'' \emph{IEEE Journal of Selected Topics in Signal
  Processing}, vol.~6, no.~3, pp. 228--240, 2012.

\bibitem{Vaze:2012Hybrid}
K.~Mohanty, C.~S. Vaze, and M.~K. Varanasi, ``The degrees of freedom region for
  the {MIMO} interference channel with hybrid {CSIT},'' \emph{arXiv preprint
  arXiv:1209.0047}, 2012.

\bibitem{Caire:2010}
G.~Caire, N.~Jindal, M.~Kobayashi, and N.~Ravindran, ``Multiuser mimo
  achievable rates with downlink training and channel state feedback,''
  \emph{IEEE Trans. Inf. Theory}, vol.~56, no.~6, pp. 2845--2866, Jun. 2010.

\bibitem{Cover_Thomas}
T.~M. Cover and J.~Thomas, \emph{Elements of Information Theory}. New~York: Wiley, 1991.

\bibitem{Liu:2007}
T.~Liu and P.~Viswanath, ``An extremal inequality motivated by multiterminal
  information-theoretic problems,'' \emph{IEEE Trans. Inf. Theory}, vol.~53,
  no.~5, pp. 1839--1851, May 2007.

\bibitem{Extremal}
H.~Weingarten, T.~Liu, S.~Shamai, Y.~Steinberg, and P.~Viswanath, ``The
  capacity region of the degraded multiple-input multiple-output compound
  broadcast channel,'' \emph{IEEE Trans.~Inf.~Theory}, vol.~55, no.~11, pp.
  5011--5023, 2009.

\bibitem{Chen:2013}
J.~Chen, S.~Yang, and P.~Elia, ``On the fundamental feedback-vs-performance
  tradeoff over the {MISO-BC} with imperfect and delayed {CSIT},'' \emph{in Proc.~IEEE ISIT
  2013, arXiv preprint arXiv:1302.0806}, 2013.

\bibitem{HJ:90}
R.~A. Horn and C.~R. Johnson, \emph{Matrix Analysis}.  Cambridge University Press, 1990.

\end{thebibliography}
\end{document}